   \def\@citecolor{blue}%
   \def\@urlcolor{blue}%
   \def\@linkcolor{blue}%
\def\orcidID#1{\smash{\href{http://orcid.org/#1}{\protect\raisebox{-1.25pt}{\protect\includegraphics{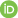}}}}}
\newcommand\scale[2]{\vstretch{#1}{\hstretch{#1}{#2}}}
\pgfplotsset{compat=1.17} 
\newcommand\multiline[1]{\parbox[t]{\dimexpr\linewidth-\ALG@thistlm}{#1}}
\newtheorem{defi}{\textbf{Definition}}
\newtheorem{asp}{Assumption}
\newtheorem{lema}{\textbf{Lemma}}
\newcounter{exampcount}
\newcommand{\defiref}[1]{Definition~\ref{#1}}
\newcommand{\thomref}[1]{Theorem~\ref{#1}}
\newcommand{\aspref}[1]{Assumption~\ref{#1}}
\newcommand{\algoref}[1]{Algorithm~\ref{#1}}
\newcommand{\lemaref}[1]{Lemma~\ref{#1}}
\newcommand{\tabref}[1]{Table~\ref{#1}}
\renewcommand{\figref}[1]{Fig.~\ref{#1}}
\newcommand{\sectref}[1]{Section~\ref{#1}}
\newcommand{\appxref}[1]{Appx.~\ref{#1}}
\renewcommand{\emptyset}{\varnothing}
\DeclareMathOperator*{\argmax}{argmax}
\newcommand{\cO}{\mathcal{O}}
\newcommand{\obs}{\mathit{obs}}
\newcommand{\Loc}{\mathit{Loc}}
\newcommand{\Per}{\mathit{Per}}
\newcommand{\loc}{\mathit{loc}}
\newcommand{\per}{\mathit{per}}
\newcommand{\csg}{\mathsf{C}}
\newcommand{\game}{\mathsf{G}}
\newcommand{\agent}{\mathsf{Ag}}
\newcommand{\sem}[1]{\llbracket {#1} \rrbracket}
\newcommand{\ipaths}{\mathit{IPaths}}
\newcommand{\fpaths}{\mathit{FPaths}}
\newcommand{\custompath}{(0,0) to[out=90,in=270] (0,4)}
\newcommand{\startpara}[1]{{%
\vskip5pt\noindent
{\bf #1.}}}
\def\squareforqed{\hbox{\rlap{$\sqcap$}$\sqcup$}}
\def\qed{\ifmmode\squareforqed\else{\unskip\nobreak\hfil
\penalty50\hskip1em\null\nobreak\hfil\squareforqed
\parfillskip=0pt\finalhyphendemerits=0\endgraf}\fi}
\renewcommand{\ALG@name}{\sc Algorithm}
\def\techreport{}
\renewcommand{\appxref}[1]{\cite{arxiv}}
\title{Partially Observable Stochastic Games \\ with Neural Perception Mechanisms}
\author{
Rui~Yan\inst{1}$^{\text{\,\raisebox{-1pt}{\Envelope}}}$\,\orcidID{0000-0002-8685-5055}
\and Gabriel~Santos\inst{1} \orcidID{0000-0002-6570-9737}
\and Gethin~Norman\inst{1,2} \orcidID{0000-0001-9326-4344}
\and\\ David~Parker\inst{1} \orcidID{0000-0003-4137-8862}
\and Marta~Kwiatkowska\inst{1} \orcidID{0000-0001-9022-7599}
}
\institute{University of Oxford, Oxford, OX1 2JD, UK \\
\email{\{rui.yan,gabriel.santos,david.parker,marta.kwiatkowska\}@cs.ox.ac.uk}
\and 
University of Glasgow, Glasgow, G12 8QQ, UK \\
\email{gethin.norman@glasgow.ac.uk}} 
\authorrunning{R.~Yan, G.~Santos, G.~Norman, D.~Parker and M.~Kwiatkowska}
\titlerunning{Partially Observable Stochastic Games with Neural Perception Mechanisms}
\begin{document}

\maketitle
\begin{abstract}
Stochastic games are a well established model for multi-agent sequential decision making under uncertainty.
In practical applications, though, agents often have only partial observability of their environment.
Furthermore, agents increasingly perceive their environment using data-driven approaches such as neural networks trained on continuous data.
We propose the model of neuro-symbolic partially-observable stochastic games (NS-POSGs),
a variant of continuous-space concurrent stochastic games that explicitly incorporates neural perception mechanisms.
We focus on a one-sided setting with a partially-informed agent using discrete, data-driven observations and another, fully-informed agent. 
We present a new method, called one-sided NS-HSVI, for approximate solution of one-sided NS-POSGs, which exploits the piecewise constant structure of the model.
Using neural network pre-image analysis to construct finite polyhedral representations
and particle-based representations for beliefs,
we implement our approach and illustrate its practical applicability to
the analysis of pedestrian-vehicle and pursuit-evasion scenarios.
\end{abstract}

\setlength{\textfloatsep}{4pt}

\section{Introduction}

\noindent
Strategic reasoning is essential to ensure stable multi-agent coordination in complex 
environments, e.g., autonomous driving or multi-robot planning.
\emph{Partially-observable stochastic games} (POSGs) are a natural model for settings 
involving multiple agents, uncertainty and partial information.
They allow the synthesis of optimal (or near-optimal) strategies 
and equilibria that guarantee expected outcomes, even in adversarial scenarios.
But POSGs also present significant challenges:
key problems are undecidable, already for the single-agent case of partially observable Markov decision processes (POMDPs)~\cite{OM-SH-AC:03}, and practical algorithms for 
finding 
optimal values and strategies are lacking.

Computational tractability can be improved using {\em one-sided POSGs},
a subclass of two-agent, zero-sum POSGs where only one agent has partial information while the other agent is assumed to have full knowledge of the state~\cite{WZ-TJ-HL:22,WZ-TJ-HL:23}.
This can be useful when making worst-case assumptions about one agent,
such as in an adversarial setting (e.g., an attacker-defender scenario) or 
a safety-critical domain (e.g., a pedestrian in an autonomous driving application).

From a computational perspective, one-sided POSGs
avoid the need for nested beliefs~\cite{LZ-BM-LK:21},
i.e., reasoning about beliefs not only over states but also over opponents' beliefs.
This is because the fully-informed agent
can reconstruct beliefs from observation histories.
Recent advances~\cite{KH-BB-VK-CK:23} have led to the first practical variant of heuristic search value iteration (HSVI)~\cite{TS-RS:04} for computing approximately optimal values and strategies in (finite) one-sided POSGs.


However, in many realistic autonomous coordination scenarios, agents perceive {\em 
continuous} environments using {\em data-driven} observation functions, typically 
implemented as neural networks (NNs). Examples include autonomous vehicles using NNs to 
perform object recognition or to estimate pedestrian intention, and NN-enabled vision in an airborne pursuit-evasion scenario. 

In this paper, we introduce {\em one-sided neuro-symbolic POSGs (NS-POSGs)},
a variant of continuous-space POSGs that explicitly incorporates neural perception mechanisms.
We assume one partially-informed agent with a (finite-valued) observation function synthesised in a data-driven fashion, and a second agent with full observation of the (continuous) state.
Continuous-space models with neural perception mechanisms have already been developed,
but are limited to the simpler cases of POMDPs~\cite{RY-GS-GN-DP-MK:23}
and (fully-observable) stochastic games~\cite{nscsgs}.
Our model provides the ability to reason about an agent with a realistic
perception mechanism \emph{and} operating in an adversarial or worst-case setting.

Solving continuous-space models, even approximately, is computationally challenging.
One approach is to discretise and then use techniques for finite-state models
(e.g.,~\cite{KH-BB-VK-CK:23} in our case).
But this can yield exponential growth of the state space,
depending on the granularity and time-horizon used.
Furthermore, decision boundaries for data-driven perception are typically irregular
and can be misaligned with gridding schemes for discretisation, limiting precision.

An alternative is to exploit structure in the underlying model
and work directly with the continuous-state model.
For example, classic dynamic programming approaches to solving
MDPs 
can be lifted to continuous-state variants~\cite{ZF-RD-NM-RW:04}:
a piecewise constant representation of the value function is computed,
based on a partition of the state space created dynamically during solution.
It is demonstrated that this approach can outperform discretisation
and that it can also be generalised to solving POMDPs.
We can adapt this approach to models with neural perception mechanisms~\cite{RY-GS-GN-DP-MK:23},
exploiting the fact that ReLU NN classifiers induce a finite decomposition
of the continuous environment into polyhedra.

\startpara{Contributions}
The contributions of this paper are as follows.
We first define the model of one-sided NS-POSGs
and motivate it via an autonomous driving scenario
based on a ReLU NN classifier for pedestrian intention
learnt from public datasets~\cite{AR-IK-TK-JKT:19}.
We then prove that the (discounted reward) value function for NS-POSGs
is continuous and convex, and is a fixed point of a minimax operator.
Based on mild assumptions about the model,
we give a piecewise linear and convex representation of the value function,
which admits a finite polyhedral representation
and which is closed with respect to the minimax operator. 

In order to provide a feasible approach to approximating values of NS-POSGs,
we present a variant of HSVI, which is a popular anytime algorithm for POMDPs
that iteratively computes lower and upper bounds on values.
We build on ideas from HSVI for finite one-sided POSGs~\cite{KH-BB-VK-CK:23}
(but there are multiple challenges when moving to a continuous state space and NNs)
and for POMDPs with neural perception mechanisms~\cite{RY-GS-GN-DP-MK:23}
(but, for us, the move to games brings a number of complications);
see \sectref{sec:NS-HSVI} for a detailed discussion.

We implement our one-sided NS-HSVI algorithm
using the popular particle-based representation for beliefs
and employing NN pre-image computation~\cite{KM-FF:20}
to construct an initial finite polyhedral representation of perception functions.
We apply this to the pedestrian-vehicle interaction scenario
and a pursuit-evasion game inspired by mobile robotics applications,
demonstrating the ability to synthesise agent strategies
for models with complex perception functions,
and to explore trade-offs when
using perception mechanisms of varying precision.

\startpara{Related work}
Solving POSGs is largely intractable.
Methods based on exact dynamic programming~\cite{EAH-DSB-SZ:04}
and approximations~\cite{AK-SZ:09,REM-GG-JS-ST:04} exist
but have high computational cost.
Further approaches exist for \emph{zero-sum} POSGs, including
conversion to extensive-form games~\cite{BB-CK-VL-MP:14},
counterfactual regret minimisation~\cite{MZ-MJ-MB-CP:07,VK-MS-NB-MB-VL:22,VK-DS-VL-JR-SS-KH:23} and methods based on reinforcement learning and search~\cite{NB-AB-AL-QG:20,MM-MS-NB-VL-et-al:17}.
In \cite{AD-OB-JSD-AS:23}, an HSVI-like finite-horizon solver that provably converges to an $\varepsilon$-optimal solution is proposed;
\cite{AJW-FAO-DMR:16} provides convexity and concavity results but no algorithmic solution. 

Methods exist for \emph{one-sided} POSGs:
a space partition approach when actions are public~\cite{WZ-TJ-HL:22},
a point-based approximate algorithm when observations are continuous \cite{WZ-TJ-HL:23}
and projection to POMDPs based on factored representations~\cite{SC-NJ-SB-MS-UT:21}.
But these are all restricted to \emph{finite-state} games. 
Closer to our work, but still for finite models, is~\cite{KH-BB-VK-CK:23},
which proposes an HSVI method for 
POSGs.


For the \emph{continuous-state} but \emph{single-agent} (POMDP) setting,
point-based value iteration \cite{JMP-NV-MTS-PP:06,LB-IL-NRA:19,ZZ-SS-PP-KK:12} 
and discrete space approximation~\cite{SB-TG-RD:13} can be used;
the former also uses $\alpha$-functions 
but works with (approximate) Gaussian mixtures or beta-densities,
whereas we exploit structure, similarly to~\cite{ZF-RD-NM-RW:04}.
%
As discussed above, in earlier work,
we proposed models and techniques for 
extending several simpler probabilistic models
with neural perception mechanisms~\cite{RY-GS-GN-DP-MK:23,YSD+22,nscsgs}.
Recent work~\cite{YSN+24} builds on the one-sided NS-POSG model proposed in this paper,
but focuses instead on \emph{online} methods for strategy synthesis.


\vspace*{-0.3em}
\section{Background}\label{background-sect}
\vspace*{-0.2em}


\startpara{POSGs}
The semantics of our models are continuous-state
\emph{partially observable concurrent stochastic games} (POSGs)~\cite{VK-MS-NB-MB-VL:22,NB-AB-AL-QG:20,KH-BB:19}.
Letting $\mathbb{P}(X)$ denote the space of probability measures on a Borel space $X$,
POSGs are defined as follows.
%

A two-player POSG is a tuple $\game = (N, S, A, \delta, \cO,Z)$,
where: $N=\{1, 2\}$ is a set of two agents; $S$ a Borel measurable set of states;  $A \triangleq A_1 {\times} A_2$
a finite set of joint actions where $A_i$ are actions of agent $i$;
$\delta : (S {\times} A) \rightarrow \mathbb{P}(S)$ a probabilistic transition function; $\cO \triangleq \cO_1 {\times} \cO_2$ a finite set of joint observations where $\cO_i$ are observations of agent $i$; and $Z: (S {\times} A {\times} S) \to \cO$ an observation function.

In a state $s$ of a POSG $\game$, each agent $i$ 
selects an action $a_i$ from $A_i$. The probability to move to a state $s'$ is $\delta(s, (a_1, a_2))(s')$, and the subsequent observation is $Z(s, (a_1, a_2), s')=(o_1, o_2)$, where agent $i$ can only observe $o_i$.
A \emph{history} of $\mathsf{G}$ is a sequence of states and joint actions 
$\pi=(s^{0}, a^{0}, s^{1}, \dots, a^{t - 1}, s^t)$
such that
$\delta(s^k,a^k)(s^{k+1})>0$ for each $k$. 
For a history $\pi$, we denote by $\pi(k)$ the $(k{+}1)$th state,
and $\pi[k]$ the $(k{+}1)$th action.
%
A (local) \emph{action-observation history (AOH)} 
is the view of a history $\pi$ from agent $i$'s perspective:
$\pi_i = (o_{i}^0, a_{i}^0, o_{i}^1,\dots, a_{i}^{t - 1}, o_{i}^t)$.
If an agent has full information about the state, then we assume the agent is also informed of the history of joint actions.
Let $\fpaths_{\game}$ and $\fpaths_{\game,i}$ denote the sets of finite histories of $\game$ and AOHs of agent $i$, respectively. 

A  (behaviour) \emph{strategy} of agent $i$ is a 
mapping 
$\sigma_i: \fpaths_{\game,i} \to \mathbb{P}(A_i)$. 
We denote by $\Sigma_i$ the set of strategies of agent $i$. A \emph{profile}  $\sigma=(\sigma_1, \sigma_2)$ is a pair of strategies for each agent and we denote by $\Sigma=\Sigma_1\times \Sigma_2$ the set of profiles.

\startpara{Objectives}
Agents 1 and 2 maximise and minimise, respectively, the expected value of the \emph{discounted reward}  $Y(\pi) = \sum_{k=0}^{\infty} \beta^k r(\pi(k), \pi[k])$, where $\pi$ is an infinite history, $r : (S {\times} A) \to \mathbb{R}$ a reward structure and $\beta \in (0, 1)$.
The expected value of $Y$ starting from state distribution $b$
under profile $\sigma$ is denoted $\mathbb{E}_b^{\sigma} [Y]$. 

\startpara{Values and minimax strategies}
If $V^\star(b) \triangleq \sup\nolimits_{\sigma_1 \in \Sigma_1}\inf\nolimits_{\sigma_2 \in \Sigma_2} \mathbb{E}_b^{\sigma_1,\sigma_2}[Y]$ = $\inf\nolimits_{\sigma_2 \in \Sigma_2} \sup\nolimits_{\sigma_1 \in \Sigma_1}\mathbb{E}_b^{\sigma_1,\sigma_2}[Y]$ for all $b \in \mathbb{P}(S)$, then $V^\star$ is called the \emph{value} of~$\game$. A profile $\sigma^{\star} = (\sigma_1^{\star}, \sigma_2^{\star})$ is a \emph{minimax strategy profile} if, for any $b \in \mathbb{P}(S)$, $\smash{\mathbb{E}_b^{\sigma_1^{\star},\sigma_2}[Y] \ge \mathbb{E}_b^{\sigma_1^{\star},\sigma_2^{\star}}[Y] \ge \mathbb{E}_b^{\sigma_1,\sigma_2^{\star}}[Y]}$ for all $\sigma_1 \in \Sigma_1$ and $\sigma_2 \in \Sigma_2$.

\vspace*{-0.3em}
\section{One-Sided Neuro-Symbolic POSGs}\label{nscsgs-sect}
\vspace*{-0.2em}

\noindent
We now introduce our model, aimed at commonly deployed multi-agent scenarios with data-driven perception, necessitating the use of continuous environments.


\startpara{One-sided NS-POSGs}
A \emph{one-sided neuro-symbolic POSG (NS-POSG)} comprises a \emph{partially informed, neuro-symbolic} agent and a \emph{fully informed} agent in a continuous-state environment.
The first agent has a finite set of local states, 
and is endowed with a data-driven perception mechanism,
through which (and only through which) it makes finite-valued observations of the environment's state,
stored locally as \emph{percepts}.
The second agent can directly observe both the local state and percept of the first agent, and the state of the environment. 

\begin{defi}[NS-POSG]\label{defi:NS-CSG}
A one-sided NS-POSG $\csg$ comprises
agents $\agent_1  = (S_1,A_1,\obs_1,\delta_1)$ 
and $\agent_2 {=} (A_2)$,
and environment $E {=} (S_E,\delta_E)$, where:
\begin{itemize}
    \item $S_1 = \Loc_1 {\times} \Per_1$ is a set of states for $\agent_1$,
    where  $\Loc_1$ and $\Per_1$ are finite sets of local states and percepts, respectively;
    
    \item $S_E\subseteq \mathbb{R}^e$ is a closed set of continuous environment states;  

    \item $A_i$ is a finite set of actions for $\agent_i$ and
    $A \triangleq A_1 {\times} A_2$ is a set of joint actions;
  
    \item $\obs_1 : (\Loc_1 {\times} S_E)\to \Per_1$ is $\agent_1\!$'s perception function; 
    
    \item $\delta_1: (S_1 {\times} A) \to \mathbb{P}(\Loc_1)$ is
   $\agent_1\!$'s local probabilistic transition function;
    
    \item $\delta_E: (\Loc_1 {\times} S_E {\times} A) \to \mathbb{P}(S_E)$ is a finitely-branching probabilistic transition function for the environment. 
\end{itemize}
\end{defi}

One-sided NS-POSGs are a subclass of two-agent, hybrid-state POSGs
with discrete observations ($S_1$)
and actions for $\agent_1$, and continuous observations ($S_1 {\times} S_E$) 
and discrete actions for $\agent_2$. 
Additionally, $\agent_1$ is informed of its own actions
and $\agent_2$ of joint actions.
Thus, $\agent_1$ is partially informed, without access to environment states and actions of $\agent_2$, and $\agent_2$ is fully informed.
Since $\agent_2$ needs no percepts, its local state and transition function are omitted.

The game executes as follows. 
A global state of $\csg$ comprises a state $s_1 = (\loc_1, \per_1)$ for 
$\agent_1$ and an environment state $s_E$. In state $s=(s_1,s_E)$, 
the two agents concurrently choose one of their actions,
resulting in a joint action $a=(a_1,a_2)\in A$.
Next, the local state of $\agent_1$ is updated to some $\loc_1'\in \Loc_1$,
according to $\delta_1(s_1,a)$. At the same time, the environment state is updated
to some $s_E'\in S_E$ according to $\delta_E(\loc_1, s_E,a)$.
Finally, the first agent $\agent_1$, based on $\loc_1'$,
generates a percept $\per_1' = \obs_1(\loc'_1,s'_E)$ by observing the environment state $s_E'$ and $\csg$ reaches the global state $s'=((\loc_1', \per_1'), s_E')$. 

We focus on neural perception functions, i.e., for each local state $\loc_1$, we associate an NN classifier $f_{\loc_1} : S_E \to \mathbb{P}(\Per_1)$ that returns a distribution over percepts for each environment state $s_E \in S_E$.
Then $\obs_1(\loc_1, s_E) = f^{\max}_{\loc_1}(s_E)$, where $f^{\max}_{\loc_1}(s_E)$ is the percept with the largest probability in $f_{\loc_1}(s_E)$
(a tie-breaking rule is applied if multiple percepts have the largest probability).

\startpara{Motivating example: Pedestrian-vehicle interaction}
A key challenge for 
autonomous driving in urban environments
is predicting pedestrians' intentions or actions.
One solution is NN classifiers, 
e.g., trained on video datasets~\cite{AR-IK-JKT:17,AR-IK-TK-JKT:19}.
To illustrate our NS-POSG model,
we consider decision making for an autonomous vehicle
using an NN-based intention estimation model for a pedestrian at a crossing~\cite{AR-IK-TK-JKT:19}.
We use their simpler ``vanilla'' model,
which takes two successive (relative) locations of the pedestrian (the top-left coordinates $(x_1,y_1)$ and $(x_2,y_2)$ of two fixed size bounding boxes around the pedestrian) and classifies its intention as:
\emph{unlikely}, \emph{likely} or \emph{very likely} to cross.
We train a feed-forward NN classifier with ReLU activation functions over the PIE dataset~\cite{AR-IK-TK-JKT:19}.

\tikzset{
    pics/man/.style={code=
    {
    \draw[#1]   
        (0,0) .. controls ++(0,-0.8) and ++(0.2,0.6) ..
        (-0.4,-1.8) .. controls ++(0.2,-0.8) and ++(0.1,0.6) ..
        (-0.5,-4.4) .. controls ++(-0.6,-0.2) and ++(0.7,0.1) ..
        (-2,-4.8) .. controls ++(0,0.3) and ++(-0.5,-0.2)  ..
        (-1,-3.8) .. controls ++(-0.1,0.9) and ++(-0.1,-0.8)  ..
        (-1,-1.8) .. controls ++(-0.3,1) and ++(-0.2,-0.8)  ..              
        (-0.9,0.9) .. controls ++(-0.1,1) and ++(0,-0.8)  ..
        (-1.2,2.8) .. controls ++(-0.4,-1) and ++(0.4,0.5)  ..
        (-2.6,0.8) .. controls ++(0.5,-0.8) and ++(0.2,-0.1)  ..
        (-3.2,-0.1) .. controls ++(-0.2,0) and ++(-0.3,-0.5)  ..
        (-3.3,0.8) .. controls ++(0.4,0.5) and ++(-0.5,-0.5)  ..        
        (-1.8,3.4) .. controls ++(0.5,0.5) and ++(-0.3,-0.1)  ..                
        (-0.7,3.9) .. controls ++(0.3,0.1) and ++(0,-0.2)  ..
        (-0.4,4.3) .. controls ++(-1.2,0.3) and ++(-1.2,0)  ..
        (0,6.2) coordinate (-head) .. controls ++(1.2,0) and ++(1.2,0.3) .. 
        (0.4,4.3) .. controls ++(0,-0.2) and ++(-0.3,0.1) ..
        (0.7,3.9) .. controls ++(0.3,-0.1) and ++(-0.5,0.5) ..
        (1.8,3.4) .. controls ++(0.5,-0.5) and ++(-0.4,0.5) ..
        (3.3,0.8) .. controls ++(0.3,-0.5) and ++(0.2,0) ..
        (3.2,-0.1) .. controls ++(-0.2,-0.1) and ++(-0.5,-0.8) ..
        (2.6,0.8) .. controls ++(-0.4,0.5) and ++(0.4,-1) ..
        (1.2,2.8) .. controls ++(0,-0.8) and ++(0.1,1) ..
        (0.9,0.9) .. controls ++(0.2,-0.8) and ++(0.3,1) ..
        (1,-1.8) .. controls ++(0.1,-0.8) and ++(0.1,0.9) ..
        (1,-3.8) .. controls ++(0.5,-0.2) and ++(0,0.3) ..
        (2,-4.8) .. controls ++(-0.7,0.1) and ++(0.6,-0.2) ..
        (0.5,-4.4) .. controls ++(-0.1,0.6) and ++(-0.2,-0.8) ..
        (0.4,-1.8) .. controls ++(-0.2,0.6) and ++(0,-0.8) ..
        (0,0) ++ (0,2) coordinate (-heart) -- cycle
        ;
    },
    }
}

\begin{figure}[t]
\centering
\hspace{-0.25cm}
\begin{subfigure}{0.4\textwidth}
\begin{tikzpicture}[scale=2.9]
    \clip (-0.6,0.15) rectangle (0.6,1.75);
    \path[fill=green!50!black] (-1,0) rectangle (1,2.5);
    \draw[line width=25,gray] \custompath;
    \draw[draw=white,dashed,double=white,double distance=20] \custompath;
    \draw[line width=20,gray] \custompath;
    \draw[white,decorate,decoration={street mark},street mark distance=13] \custompath;
    \draw[white,decorate,decoration={street mark},street mark distance=-13] \custompath;
    
    \draw[draw=brown,dashed,double=brown,double distance=6, xshift=5.5] \custompath;
    \draw[brown,decorate,decoration={street mark},street mark distance=-20] \custompath;
    \draw[draw=brown,dashed,double=brown,double distance=6, xshift=-5.5] \custompath;
    \draw[brown,decorate,decoration={street mark},street mark distance=20] \custompath;

    \draw[decorate,decoration={markings,
     mark=at position 0.1 with {\draw[white,-latex,line width=2pt] (0.25,0)
     coordinate (0.25,0) -- (1.5,0);}
    }] \custompath;
    \node[sedan top,body color=red!70,window color=black!80,minimum width=0.9cm,rotate = 90,scale=1.2] (car) at
    (0,0.35) {};

    \def\xp{0.2}
    \def\yp{1.5}
    
    \draw (\xp,\yp) pic(M){man={scale=0.05,black!50!black,fill=black}};
    \draw[blue, very thick] (\xp-0.06,\yp-0.09) rectangle (\xp+0.06, \yp+0.115);
    \draw[blue, dashed] (car) -- (\xp+0.06,\yp+0.115);
    \draw[blue, dashed] (car) -- (\xp+0.06,\yp-0.09);
    \draw[blue, dashed] (car) -- (\xp-0.06,\yp-0.09);
    \draw[blue, dashed] (car) -- (\xp-0.06,\yp+0.115);
    \node[draw, circle, fill=yellow, inner sep=0pt, minimum size=0.1cm] at (\xp-0.06,\yp+0.115) (point) {}; 

    \draw[|-|, violet, thick] (\xp-0.06,\yp+0.115) -- (0.0,\yp+0.115) node [midway, above, xshift=-0.0] {\small $x_2$};
    \draw[|-|, yellow, thick] (car) -- (0.0,\yp+0.115) node [midway, left, xshift=2.0] {\small $y_2$};

\end{tikzpicture}
\end{subfigure}
\hspace{-1.3cm}
\begin{subfigure}{0.2\textwidth}
    \centering
    \includegraphics[width=2.5cm, height=1.52cm]{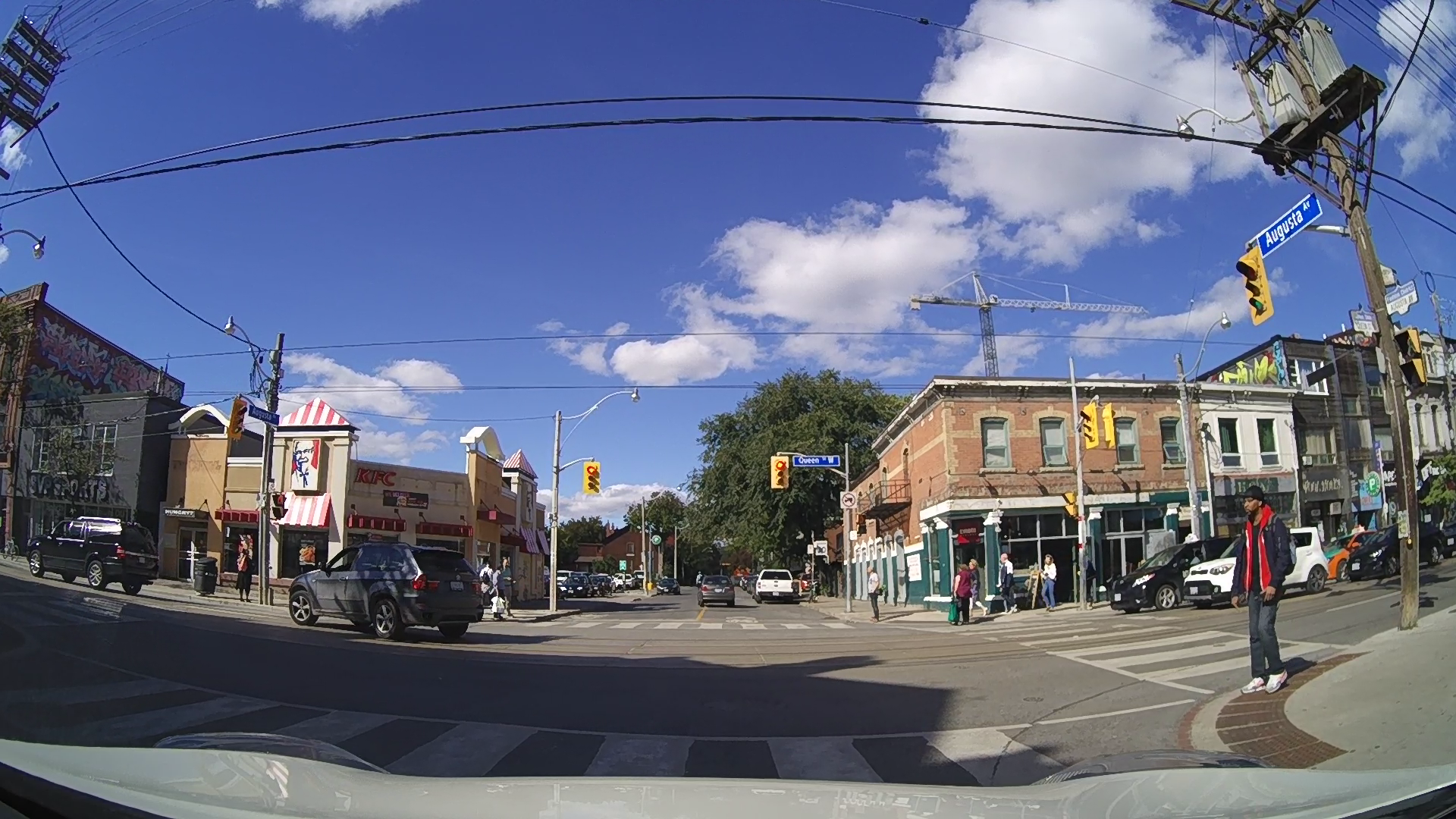}
    \includegraphics[width=2.5cm, height=1.52cm]{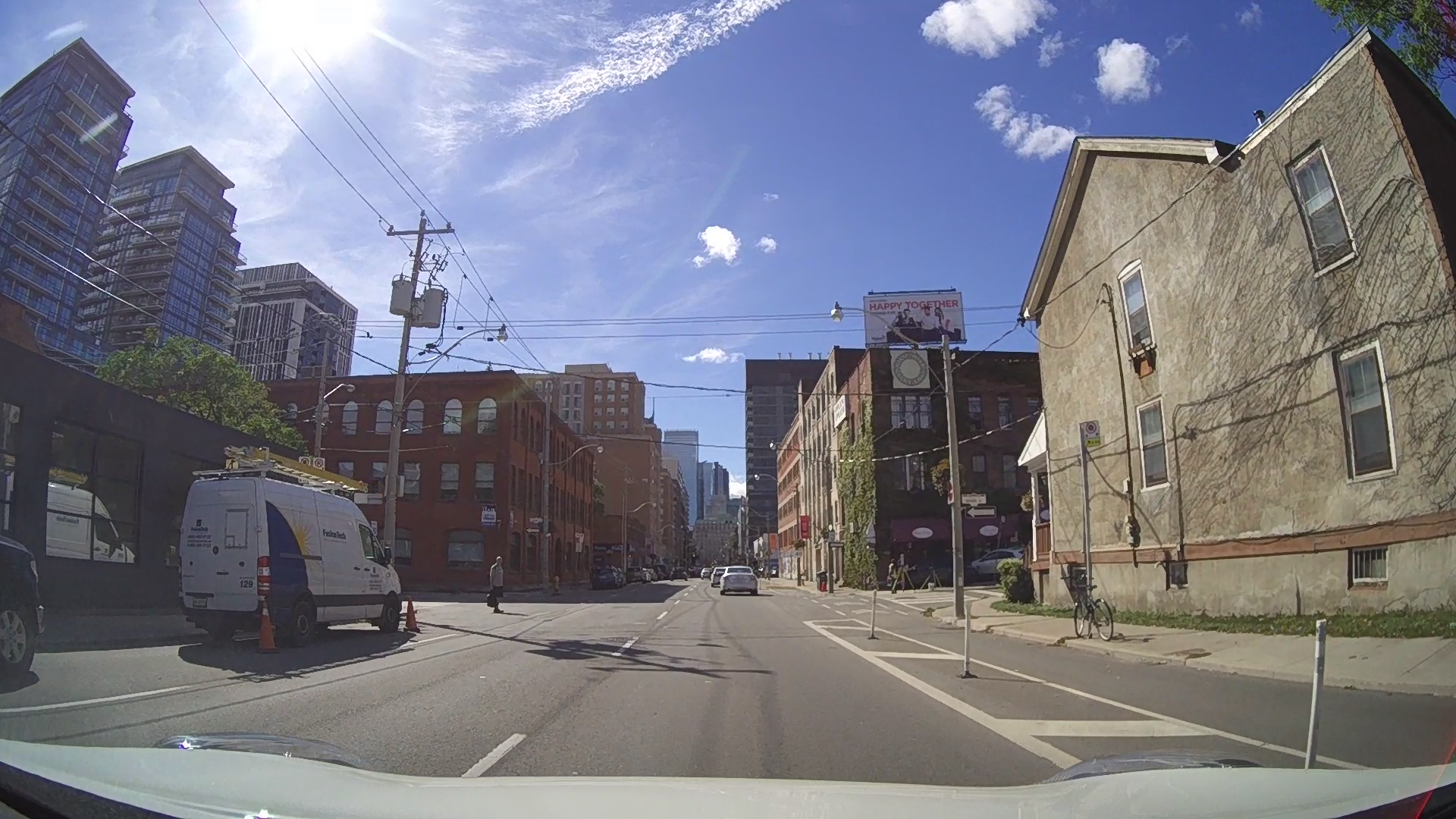}
    \includegraphics[width=2.5cm, height=1.52cm]{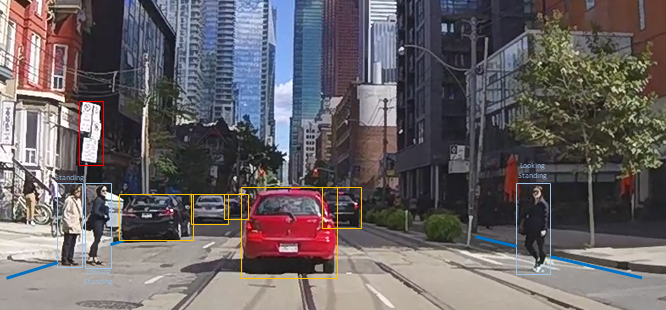}
\end{subfigure}
\hspace{-0.2cm}
\raisebox{-0.05cm}{
\begin{subfigure}{0.4\textwidth}
    \centering
    \includegraphics[scale=0.34]{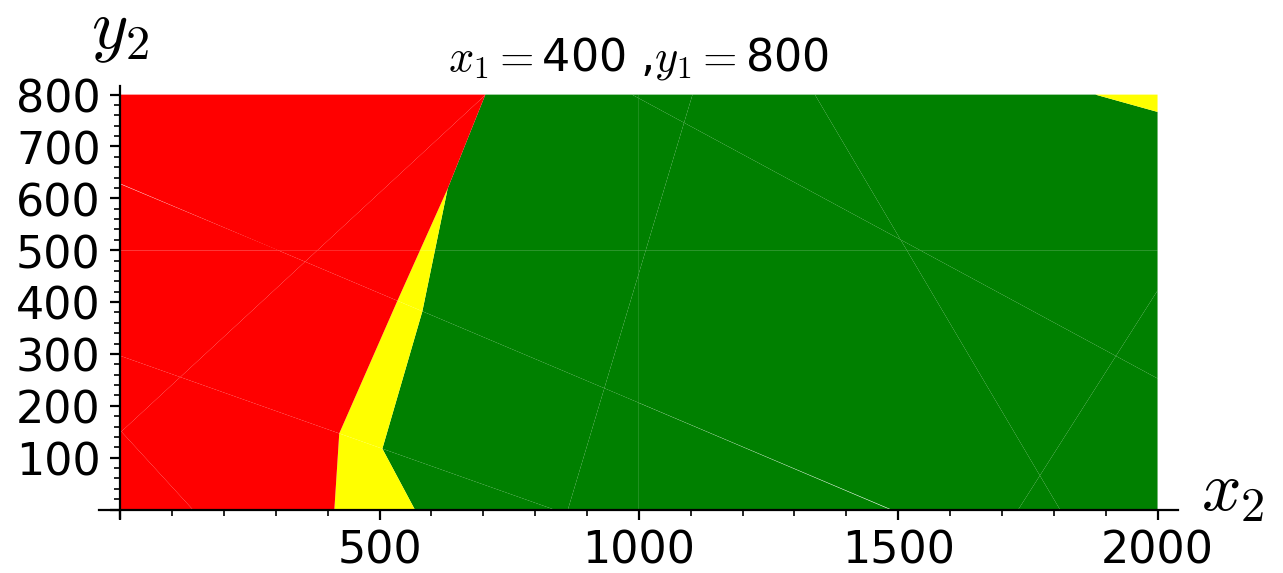}
    \includegraphics[scale=0.34]{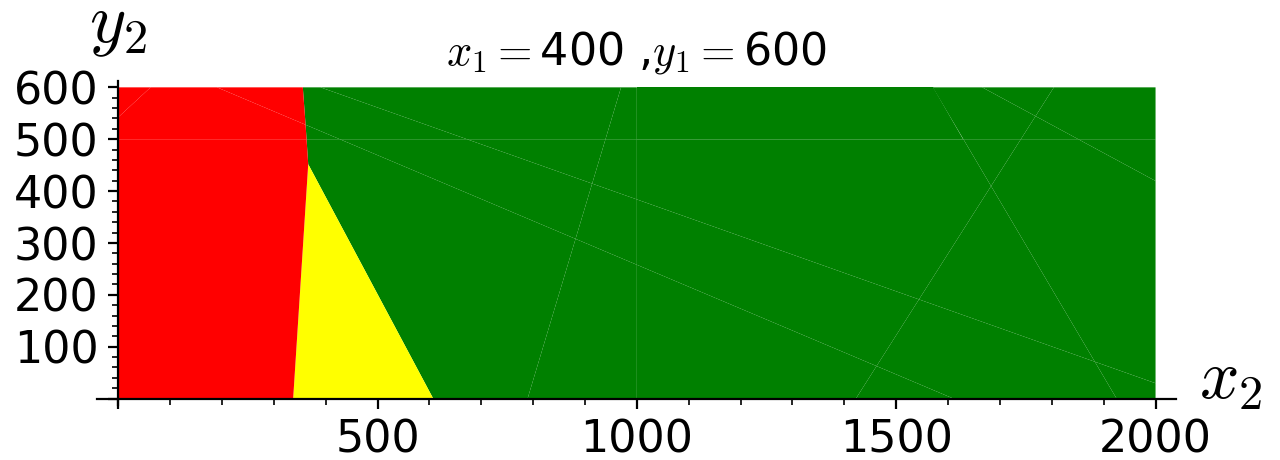}
    \hspace*{0.5cm}
    \includegraphics[scale=0.5]{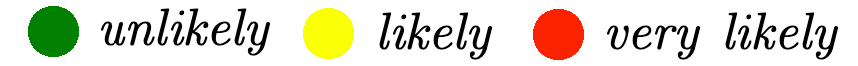}
\end{subfigure}
}
\caption{Pedestrian-vehicle example. Left: Positions of two agents. Middle: Sample images from the PIE dataset~\cite{AR-IK-TK-JKT:19}. Right: Slices of learnt perception function, where $(x_1,y_1),(x_2,y_2)$ are two successive (relative) positions of the pedestrian.}
\label{fig:pedestrian_vehicle_collated}
\end{figure}

We build this perception mechanism into an NS-POSG model
of a vehicle yielding at a pedestrian crossing, based on~\cite{TF-LM-NS:18}, illustrated in \figref{fig:pedestrian_vehicle_collated}.
A pedestrian further ahead at the side of the road may decide to cross
and the vehicle must decide how to adapt its speed.
The first, partially-informed agent represents the vehicle.
It observes the environment (comprising the successive pedestrian locations) using the NN-based perception mechanism to predict the pedestrian's intention.
This is stored as a percept and its speed as its local state.
The vehicle chooses between selected (positive or negative) acceleration actions.
The second agent, the pedestrian, is fully informed,
providing a worst-case analysis of the vehicle decisions,
and can decide to cross or return to the roadside.
The goal of the vehicle is to minimise the likelihood of a collision with the pedestrian, which is achieved by associating a negative reward with this event.

\figref{fig:pedestrian_vehicle_collated} also shows selected slices of
the state space decomposition 
obtained by computing the pre-image~\cite{KM-FF:20} of the learnt NN classifier,
for each of the three predicted intentions.
The decision boundaries are non-trivial,
justifying our goal of performing a formal analysis,
but some intuitive characteristics can be seen.
When $x_2 \geq x_1$, meaning that the pedestrian is stationary or moving away from the road,
it will generally be classified as \emph{unlikely} to cross.
We also see the prediction model is \emph{cautious} when trying to make an estimation if its first observation is made from greater distance.
More details are in \appxref{sec:appendix-examples}.

\startpara{One-sided NS-POSG semantics}
A one-sided NS-POSG $\csg$ induces a POSG $\sem{\csg}$, where we
restrict to states that are \emph{percept compatible}, i.e., where $\per_1 = \obs_1(\loc_1, s_E)$ for $s = ((\loc_1, \per_1), s_E)$.
The semantics of a one-sided NS-POSG 
is closed with respect to percept compatible states.

\begin{defi}[Semantics]\label{semantics-def}
Given a one-sided NS-POSG $\csg$, as in \defiref{defi:NS-CSG},
its semantics is the POSG $\sem{\csg} = (N,S,A, \delta, \cO, Z)$
where:
\begin{itemize}
    \item $N=\{1, 2\}$ is a set of two agents and $A = A_1 \times A_2$;
    \item $S \subseteq S_1 \times S_E$ is the set of percept compatible states;
    \item 
    for $s=(s_1,s_E),s'=(s_1',s_E')\in S$ and $a \in A$ where $s_1=(\loc_1, \per_1)$ and $s_1'=(\loc_1', \per_1')$, we have 
    $\delta(s,a)(s')= \delta_1(s_1, a)(\loc'_1)  \delta_E(\loc_1, s_E,a)(s_E')$;
    \item $\cO = \cO_1 \times \cO_2$, where $\cO_1 = S_1$ and $\cO_2 = S$;
    \item $Z(s, a, s') = (s_1', s')$ for $s \in S$, $a \in A$ and $s'=(s_1', s_E') \in S$.
\end{itemize}
\end{defi}


\vspace*{-0.8em}
\startpara{Strategies} As $\sem{\csg}$ is a POSG,  we consider (behaviour) \emph{strategies} for the two agents.
Since $\agent_2$ is fully informed, it can recover the beliefs of $\agent_1$, thus removing nested beliefs.
Hence, the AOHs of $\agent_2$ are equal to the histories of $\sem{\csg}$, i.e., $\fpaths_{\sem{\csg},2} = \fpaths_{\sem{\csg}} $.
We also consider the \emph{stage strategies} at 
a history of $\sem{\csg}$,
which will later be required for solving the induced zero-sum normal-form games in the minimax operator. For a history $\pi$ of $\sem{\csg}$, 
a stage strategy for $\agent_1$ is a distribution $u_1 \in \mathbb{P}(A_1)$ and  a stage strategy for $\agent_2$ is a function $u_2: S \to \mathbb{P}(A_2)$, i.e., $u_2 \in \mathbb{P}(A_2 \mid S)$.

\startpara{Beliefs} Since $\agent_1$ is partially informed, it may need to infer the current state from its AOH. For an $\agent_1$ state  $s_1 = (\loc_1, \per_1)$, we let $S_{E}^{s_1}$ be the set of environment states compatible with $s_1$, i.e., $ S_{E}^{s_1} = \{ s_E \in S_E \mid \obs_1(\loc_1, s_E) = \per_1  \}$. Since the states of $\agent_1$ are also the observations of $\agent_1$ and states of $\sem{\csg}$ are percept compatible, a \emph{belief} for $\agent_1$, which can also be reconstructed by $\agent_2$, can be represented as a pair $ b = (s_1, b_1)$, where $s_1 \in S_1$, $b_1 \in \mathbb{P}(S_E)$ and $b_1(s_E) = 0$ for all $s_E \in S_E \setminus S_E^{s_1}$.
We denote by $S_B$ the set of beliefs of $\agent_1$.

\label{nsposgbelief}
Given a belief $(s_1, b_1)$, if action $a_1$ is selected by $\agent_1$, $\agent_2$ is \emph{assumed} to take stage strategy $u_2 \in \mathbb{P}(A_2 \mid S)$ and $s_1'$ is observed, then the updated belief of $\agent_1$  via Bayesian inference is denoted $\smash{(s_1', b_1^{s_1,a_1,u_2,s_1'})}$; see \appxref{sec:appendix-probabilities} for details.

\vspace*{-0.4em}
\section{Values of One-Sided NS-POSGs}\label{values-sect}
\vspace*{-0.3em}

\noindent
We establish the \emph{value function} of a one-sided NS-POSG $\csg$ with semantics $\sem{\csg}$,
which gives the minimax expected reward from an initial belief,
and show its convexity and continuity.
Next, to compute it, we introduce minimax and maxsup operators
specialised for one-sided NS-POSGs, and prove their equivalence.
Finally, we provide a fixed-point characterisation of the value function.

\startpara{Value function}
%
We assume a fixed reward structure $r$ and discount factor $\beta$.
The \emph{value function} of $\csg$ 
represents the minimax expected reward 
in each possible initial belief of the game, given by $V^{\star} : S_B \to \mathbb{R}$, where $V^{\star}(s_1, b_1) = \mathbb{E}_{(s_1, b_1)}^{\sigma^{\star}}[Y]$ for all $(s_1, b_1) \in S_B$ and $\sigma^{\star}$ is a minimax strategy profile of $\sem{\csg}$.

The value function for zero-sum POSGs may not exist when the state space is uncountable~\cite{MKG-DM-SS:04,AB-SS:22,SS:14} as in our case. In this paper, we only consider one-sided NS-POSGs that are determined, i.e., for which the value function exists.

\startpara{Convexity and continuity}
Since $r$ is bounded, the value function $V^{\star}$ has lower and upper bounds $L = \min\nolimits_{s \in S, a \in A} r(s,a) /(1- \beta)$ and $
U = \max\nolimits_{s \in  S, a \in A} r(s,a)/(1- \beta)$.
The proof of the following and all other results can be found in \appxref{sec:appendix-proofs}.

\begin{theorem}[Convexity and continuity]\label{thom:convexity-continuity}
    For $s_1 \in S_1$, $V^{\star}(s_1, \cdot) : \mathbb{P}(S_E) \to \mathbb{R}$ is convex and continuous, and for $b_1, b_1' \in  \mathbb{P}(S_E):$ $|V^{\star}(s_1, b_1) - V^{\star}(s_1, b_1')| \leq K(b_1, b_1')$
    where $ K(b_1, b_1') = \frac{1}{2} (U - L) \mbox{$\int_{s_E \in S_E^{s_1}}$} \big| b_1(s_E) - b_1'(s_E) \big | \textup{d}s_E $.
\vspace{-0.2cm}
\end{theorem}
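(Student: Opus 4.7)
The plan is to exploit the fact that $\agent_2$ is fully informed to reduce $V^{\star}(s_1,\cdot)$ to a supremum of affine functionals of $b_1$, from which convexity, boundedness and the Lipschitz estimate all follow essentially for free.

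First I would establish the key linearisation identity that, for any strategy $\sigma_1 \in \Sigma_1$,
\[
\inf_{\sigma_2 \in \Sigma_2} \mathbb{E}_{(s_1, b_1)}^{\sigma_1, \sigma_2}[Y] \; = \; \int_{S_E^{s_1}} V^{\sigma_1}(s_1, s_E) \, b_1(\textup{d}s_E),
\]
where $V^{\sigma_1}(s_1, s_E) \triangleq \inf_{\sigma_2} \mathbb{E}_{(s_1, s_E)}^{\sigma_1, \sigma_2}[Y]$. The $\ge$ direction is immediate because $\agent_2$'s initial observation reveals the exact state, so the expectation decomposes linearly against $b_1$ and each conditional expectation is bounded below by $V^{\sigma_1}$. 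For $\le$, given $\varepsilon > 0$ I would pick an $\varepsilon$-optimal response $\sigma_{2}^{\varepsilon, s_E}$ for $\agent_2$ from each $(s_1, s_E)$ and combine them into a single strategy that branches on its initial observation, then let $\varepsilon \to 0$.

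Once the identity is in hand, convexity of $V^{\star}(s_1,\cdot)$ is immediate since
\[
V^{\star}(s_1, b_1) \; = \; \sup_{\sigma_1 \in \Sigma_1} \int_{S_E^{s_1}} V^{\sigma_1}(s_1, s_E) \, b_1(\textup{d}s_E)
\]
is a supremum of functionals that are affine in $b_1$. Boundedness of $r$ together with the definitions of $L$ and $U$ yields $L \le V^{\sigma_1}(s_1, s_E) \le U$ uniformly in $\sigma_1, s_1, s_E$.

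For the Lipschitz bound I would use the elementary inequality $|\sup_x f(x) - \sup_x g(x)| \le \sup_x |f(x) - g(x)|$ to pull the difference $V^{\star}(s_1,b_1) - V^{\star}(s_1,b_1')$ inside the outer supremum, reducing the task to bounding
\[
\left| \int_{S_E^{s_1}} V^{\sigma_1}(s_1, s_E) \bigl( b_1(\textup{d}s_E) - b_1'(\textup{d}s_E) \bigr) \right|
\]
uniformly in $\sigma_1$. Applying the Hahn--Jordan decomposition of the signed measure $b_1 - b_1'$ into mutually singular positive parts $\mu_{+}, \mu_{-}$ of equal total mass $\gamma = \tfrac{1}{2}\int_{S_E^{s_1}} |b_1 - b_1'|\,\textup{d}s_E$, and using $V^{\sigma_1} \le U$ against $\mu_{+}$ and $V^{\sigma_1} \ge L$ against $\mu_{-}$ (and the symmetric estimate for the opposite sign), yields $(U-L)\gamma = K(b_1,b_1')$, as required; continuity then follows at once. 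The main obstacle is the measurable-selection step when assembling the per-state $\varepsilon$-optimal $\sigma_2^{\varepsilon, s_E}$ into a single strategy in $\Sigma_2$ over the continuous space $S_E$; working with $\varepsilon$-optimal rather than exactly optimal responses and letting $\varepsilon \to 0$ at the end sidesteps any appeal to a deep measurable-selection theorem.
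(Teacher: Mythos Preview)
Your proposal is correct and follows essentially the same route as the paper: exploit full information of $\agent_2$ to write $V^{\star}(s_1,b_1)=\sup_{\sigma_1}\int V^{\sigma_1}(s_1,s_E)\,b_1(\textup{d}s_E)$, then read off convexity and the Lipschitz bound from the sup-of-affine form. Two minor remarks: your Lipschitz argument via the Hahn--Jordan decomposition is more self-contained than the paper's, which derives the identity $\int|b_1-b_1'|=2\int_{\{b_1>b_1'\}}(b_1-b_1')$ and then appeals to \cite[Theorem~2]{RY-GS-GN-DP-MK:23} for the final step; and the paper asserts the state-wise best-response decomposition without addressing the measurable-selection point you flag, so on that issue your level of rigor matches (indeed slightly exceeds) theirs.
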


\startpara{Minimax and maxsup operators} 
We give a fixed-point characterisation of the value function  $V^{\star}$, first introducing a minimax operator
and then simplifying to an equivalent maxsup variant.
The latter will be used in \sectref{sec:closure-exact-vi} to prove closure of our representation for value functions and in \sectref{sec:NS-HSVI} to formulate HSVI.
For $f: S \to \mathbb{R}$ and belief $(s_1, b_1)$, let $\langle f, (s_1, b_1) \rangle = \, \mbox{$ \int_{s_E \in S_E}$} f(s_1, s_E) b_1(s_E) \textup{d}s_E$
and $\mathbb{F}(S_B)$ denote the space of functions mapping the beliefs $S_B$ to reals $\mathbb{R}$.

\begin{defi}[Minimax]\label{defi:minimax-operator}
The minimax operator $T : \mathbb{F}(S_B) {\rightarrow} \mathbb{F}(S_B)$ is given~by:
\begin{align}
  [TV](s_1, b_1) &  = \max\nolimits_{u_1\in \mathbb{P}(A_1)} \min\nolimits_{u_2\in \mathbb{P}(A_2 \mid S)}  \mathbb{E}_{(s_1,b_1),u_1,u_2} [r(s,a)]  \nonumber \\
  & + \beta \mbox{$\sum_{(a_1,s_1') \in A_1 \times S_1}$} P(a_1, s_1' \mid (s_1, b_1), u_1, u_2 ) V(s'_1, b_1^{s_1,a_1, u_2, s_1'}) \label{eq:minimax-operator}
\end{align}
for $V \in \mathbb{F}(S_B)$ and $(s_1, b_1) \in S_B$, where $\mathbb{E}_{(s_1,b_1),u_1,u_2}[r(s,a)] = \int_{s_E \in S_E} b_1(s_E)$ $\sum_{(a_1, a_2) \in A} u_1(a_1) u_2(a_2 \mid s_1, s_E )  r((s_1, s_E),(a_1,a_2)) \textup{d} s_E  $.
\end{defi}
Motivated by \cite{KH-BB-VK-CK:23}, which proposed an 
equivalent operator for the discrete case, we instead prove that the minimax operator has an equivalent simplified form over convex continuous functions of  $\mathbb{F}(S_B)$.

For $\Gamma \subseteq \mathbb{F}(S)$,
we let $\Gamma^{A_1 \times S_1}$ denote the set of vectors of elements of
the convex hull of $\Gamma$ indexed by $A_1 {\times} S_1$.
Furthermore, for $u_1 \in \mathbb{P}(A_1)$, $\overline{\alpha} = (\alpha^{a_1,s_1'} )_{(a_1,s_1') \in A_1 \times S_1} \in \Gamma^{A_1 \times S_1}$ and $a_2 \in A_2$,  we define $f_{u_1,\overline{\alpha}, a_2} : S \to \mathbb{R}$ to be the function such that, for $s \in S$:
\begin{align}
   \lefteqn{f_{u_1,\overline{\alpha}, a_2} (s) = \mbox{$ \sum_{a_1 \in A_1}$} u_1(a_1) r(s,(a_1,a_2)} \nonumber \\
& \qquad \quad + \beta \mbox{$ \sum_{(a_1,s_1') \in A_1 \times S_1}$} u_1(a_1)  \mbox{$ \sum_{s_E' \in S_E}$} \delta(s, (a_1, a_2)) (s_1', s_E') \alpha^{a_1,s_1'}(s_1', s_E')  \label{eq:f-u1-alpha-a2}
\end{align}
where the sum over $s_E'$ is due to the finite branching of $\delta(s, (a_1, a_2))$.

\begin{defi}[Maxsup]\label{defi:simplified-operator}
For $\emptyset \neq \Gamma \subseteq \mathbb{F}(S)$, if $V(s_1, b_1) = \sup_{\alpha \in \Gamma} \langle \alpha, (s_1,b _1) \rangle $ for $(s_1, b_1) \in S_B$, then the maxsup operator $T_\Gamma : \mathbb{F}(S_B) \rightarrow \mathbb{F}(S_B)$ is defined as $[T_\Gamma V](s_1, b_1) = \mbox{$ \max_{u_1\in \mathbb{P}(A_1)} $} \mbox{$ \sup_{\overline{\alpha} \in \Gamma^{A_1 \times S_1} } $} \langle f_{u_1,\overline{\alpha}}, (s_1, b_1) \rangle$
for $(s_1, b_1) \in S_B$ where $f_{u_1,\overline{\alpha}} (s) = \min_{a_2 \in A_2} f_{u_1,\overline{\alpha},a_2} (s)$ for $s \in S$. 
\end{defi}
In the maxsup operator, $u_1$ and $\overline{\alpha}$ are aligned with $\agent_1 \!$'s goal of maximising the objective, where $u_1$ is over action distributions and $\overline{\alpha}$ is over convex combinations of elements of $\Gamma$.
The minimisation by $\agent_2$ is simplified to an optimisation over its finite action set in the function $f_{u_1, \overline{\alpha}}$. Note that each state may require a different minimiser $a_2$, as $\agent_2$ knows the current state before taking an action.

The maxsup operator avoids the minimisation over Markov kernels with continuous states in the original minimax operator.
Given $u_1$ and $\overline{\alpha}$, the minimisation can induce a pure best-response stage strategy $u_2 \in \mathbb{P}(A_2 \mid S)$ such that, for any $s \in S$, $u_2(a_2' \mid s) = 1$ for some $a_2' \in \arg\min_{a_2 \in A_2} f_{u_1,\overline{\alpha},a_2} (s)$.
Using \thomref{thom:convexity-continuity}, the operator equivalence 
and fixed-point result are as follows.
\begin{theorem}[Operator equivalence and fixed point]\label{thom:operator-equivalence}
For $\emptyset \neq \Gamma \subseteq \mathbb{F}(S)$, if
$V(s_1, b_1) = \sup_{\alpha \in \Gamma} \langle \alpha, (s_1,b _1) \rangle $ for $(s_1, b_1) \in S_B$, then the minimax operator $T$ and maxsup operator $T_\Gamma$ are equivalent 
and their unique fixed point is $V^{\star}$. 
\end{theorem}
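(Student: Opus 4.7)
The plan is to prove the theorem in two stages: first the pointwise equivalence $TV = T_\Gamma V$ when $V$ admits the sup representation, and then uniqueness of the common fixed point together with the fact that $V^\star$ realises it. I would begin the equivalence by substituting $V(s_1', b_1') = \sup_{\alpha \in \Gamma} \langle \alpha, (s_1', b_1') \rangle$ into the continuation term of \eqref{eq:minimax-operator}. For each $(a_1, s_1')$, a product of the form $P(a_1, s_1' \mid (s_1, b_1), u_1, u_2) \cdot \langle \alpha^{a_1, s_1'}, (s_1', b_1^{s_1, a_1, u_2, s_1'}) \rangle$ appears; unrolling the Bayesian update from \appxref{sec:appendix-probabilities} shows that the normalising marginal in the posterior cancels exactly with the prefactor, leaving an integral of $\alpha^{a_1, s_1'}$ against the unnormalised joint transition measure. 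Collecting the reward and transition pieces and swapping finite sums then yields precisely $\langle f_{u_1, \overline\alpha, a_2}, (s_1, b_1) \rangle$ as in \eqref{eq:f-u1-alpha-a2}, where choosing a separate $\alpha^{a_1, s_1'}$ inside each supremum justifies indexing $\overline\alpha$ over $A_1 \times S_1$; passing from $\Gamma$ to its convex hull is free since $\langle \cdot, (s_1, b_1)\rangle$ is linear in its first argument.

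Next, for $\agent_2$'s minimisation I would observe that, with $u_1$ and $\overline\alpha$ fixed, the integrand is linear in $u_2(\cdot \mid s_1, s_E)$ pointwise in $s_E$, so the infimum over $\mathbb{P}(A_2 \mid S)$ is attained by a pure kernel with $u_2^\star(a_2^\star(s) \mid s) = 1$ for any pointwise selector $a_2^\star(s) \in \argmin_{a_2 \in A_2} f_{u_1, \overline\alpha, a_2}(s)$. A measurable selector exists by standard arguments, and substitution reduces the inner minimum to $\min_{a_2 \in A_2} f_{u_1, \overline\alpha, a_2}(s) = f_{u_1, \overline\alpha}(s)$, completing $TV = T_\Gamma V$. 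Since $u_1$ ranges over the compact simplex $\mathbb{P}(A_1)$ with continuous dependence, the outer $\max$ is attained, matching the form in \defiref{defi:simplified-operator}.

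For the fixed-point claim, I would show that $T$ acts as a $\beta$-contraction on bounded functions $S_B \to [L, U]$ under the supremum norm: the standard estimate $|[TV](b) - [TV'](b)| \leq \beta \|V - V'\|_\infty$ follows because $|\max \min g - \max \min h| \leq \|g - h\|_\infty$ and the continuation term carries the factor $\beta$. The Banach fixed-point theorem then delivers a unique fixed point, and by the Bellman optimality principle for determined zero-sum games (applicable since $\csg$ is assumed determined), $V^\star$ satisfies $V^\star(s_1, b_1)$ equals immediate reward plus $\beta$ times the continuation value, matching the right-hand side of \eqref{eq:minimax-operator}; convexity and continuity from \thomref{thom:convexity-continuity} together with determinacy ensure the sup-inf and inf-sup at the first decision point agree and thus coincide with $T$. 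The main technical obstacle is the first algebraic step: the belief update $b_1^{s_1, a_1, u_2, s_1'}$ is undefined when $P(a_1, s_1' \mid (s_1, b_1), u_1, u_2) = 0$, so one must adopt the convention that the corresponding summand vanishes, consistent with the vanishing prefactor, which keeps the sum over $(a_1, s_1')$ well defined without affecting the equivalence.
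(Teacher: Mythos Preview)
Your algebraic unrolling of the belief update and the pointwise reduction of $\min_{u_2}$ to $\min_{a_2}$ are both correct, but there is a genuine gap in the order of operations. After substituting $V(s_1',b_1') = \sup_{\alpha \in \Gamma} \langle \alpha, (s_1',b_1') \rangle$ into \eqref{eq:minimax-operator}, the suprema over the $\alpha^{a_1,s_1'}$ sit \emph{inside} the $\min_{u_2}$: you obtain
\[
[TV](s_1,b_1) \;=\; \max_{u_1}\, \min_{u_2}\, \sup_{\overline{\alpha}\in\Gamma^{A_1\times S_1}} J_{u_1}(\overline{\alpha},u_2)
\]
for a suitable bilinear $J_{u_1}$. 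Your paragraph on $\agent_2$'s minimisation argues correctly that, \emph{with $\overline{\alpha}$ fixed}, $\min_{u_2}$ collapses to a pointwise $\min_{a_2}$; but that computes $\max_{u_1}\sup_{\overline{\alpha}}\min_{u_2} J_{u_1}$, which is $T_\Gamma V$, not $TV$. In general $\min_{u_2}\sup_{\overline{\alpha}} \ge \sup_{\overline{\alpha}}\min_{u_2}$, and nothing in your sketch rules out strict inequality. The paper closes exactly this gap by checking that $J_{u_1}$ is linear in each argument with $\Gamma^{A_1\times S_1}$ convex and $\mathbb{P}(A_2\mid S)$ compact convex, and then invoking Sion's Minimax Theorem to swap $\min_{u_2}$ and $\sup_{\overline{\alpha}}$. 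Without this (or an equivalent saddle-point argument), your derivation only establishes $T_\Gamma V \le TV$.

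The fixed-point part of your sketch is essentially what the paper does---contraction in sup norm plus Banach---though the paper is more explicit about why $V^\star$ itself satisfies $TV^\star = V^\star$: it first exhibits $V^\star$ in the form $\sup_{\alpha\in\Gamma}\langle\alpha,\cdot\rangle$ via the state-wise best-response representation used in the proof of \thomref{thom:convexity-continuity}, then applies the already-established equivalence $T=T_\Gamma$ to that particular $\Gamma$. Your appeal to a generic ``Bellman optimality principle for determined games'' would need to be unpacked along those lines.
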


\vspace*{-0.8em}
\section{P-PWLC Value Iteration}\label{sec:closure-exact-vi}
\vspace*{-0.2em}

\noindent
We next discuss a representation for value functions
using \emph{piecewise constant} (PWC) $\alpha$-functions,
called P-PWLC (\emph{piecewise linear and convex under PWC}),
originally introduced in~\cite{RY-GS-GN-DP-MK:23}.
This representation extends the $\alpha$-functions of \cite{JMP-NV-MTS-PP:06,LB-IL-NRA:19,ZZ-SS-PP-KK:12} for continuous-state POMDPs,
but a key difference is that we work with polyhedral representations (induced precisely from NNs)
rather than approximations based on Gaussian mixtures~\cite{JMP-NV-MTS-PP:06} or beta densities~\cite{CG-MH-BK:04}.

We show that, given PWC representations for an NS-POSG's perception, reward and transition functions, and under mild assumptions on model structure,
P-PWLC value functions are closed with respect to the minimax operator.
This yields a (non-scalable) \emph{value iteration} algorithm and,
subsequently, the basis for a more practical point-based HSVI algorithm
in \sectref{sec:NS-HSVI}.

\startpara{PWC representations}
A \emph{finite connected partition} (FCP) of $S$, denoted $\Phi$, is a finite collection of disjoint connected \emph{regions} (subsets) of $S$ that cover it. 
\begin{defi}[PWC function]\label{defi:PWC-func}
A function $f: S \to \mathbb{R}$ is piecewise constant (PWC) if there exists an FCP $\Phi$ of $S$ such that $f : \phi \to \mathbb{R}$ is constant for $\phi \in \Phi$. Let $\mathbb{F}_{C}(S)$ be the set of PWC functions in $\mathbb{F}(S)$.
\end{defi}
Since we focus on NNs for $\agent_1\!$'s perception function $\obs_1$, it is  PWC (as for the one-agent case~\cite{RY-GS-GN-DP-MK:23}) and the state space $S$ of a one-sided NS-POSG can be decomposed into a finite set of \emph{regions}, each with the same observation. 
Formally, there exists a \emph{perception FCP} $\Phi_{P}$,
the smallest FCP of $S$ such that all states in any $\phi \in \Phi_{P}$ are observationally equivalent, i.e., if $(s_1,s_E),(s_1',s_E')\in \phi$, then $s_1=s_1'$.
We can use $\Phi_P$ to find the set $\smash{S_E^{s_1}}$ for any agent state $s_1 \in S_1$. 
Given an NN representation of $\obs_1$,
the corresponding FCP $\Phi_P$ can be extracted (or approximated) offline by analysing its pre-image \cite{KM-FF:20}.

We also need to make some assumptions
about the transitions and rewards of one-sided NS-POSGs  
(in a similar style to \cite{RY-GS-GN-DP-MK:23}).
Informally, we require that, for any decomposition $\Phi'$ of the state-space into regions (i.e., an FCP), there is a second decomposition $\Phi$, the \emph{pre-image FCP}, such that states in regions of  $\Phi$ have the same rewards and transition probabilities into regions of  $\Phi'$.
The transitions of the (continuous) environment must also be decomposable into regions.


\begin{asp}[Transitions and rewards]\label{asp:transitions-rewards} Given any FCP $\Phi'$ of $S$, there exists an FCP $\Phi$ of $S$, called the \emph{pre-image FCP} of $\Phi'$, where  for $\phi \in \Phi$, $a \in A$ and $\phi' \in \Phi'$ there exists $\delta_{\Phi} : (\Phi {\times} A) \to \mathbb{P}(\Phi')$ and $r_{\Phi} : (\Phi {\times} A) \to \mathbb{R}$ such that $\delta(s, a)(s') = \delta_{\Phi}(\phi, a)(\phi')$ and $r(s, a) = r_{\Phi}(\phi, a)$ for $s \in \phi$ and $s' \in \phi'$. In addition, $\delta_E$ can be expressed in the form $\sum_{i=1}^{n} \mu_i \delta_E^i$,
where $n \in \mathbb{N}$, $\mu_i\in[0,1]$, $\sum_{i = 1}^{n} \mu_i = 1$
and $\delta_E^i : (\Loc_1 {\times} S_E {\times} A) \to S_E$ are piecewise continuous functions.
\end{asp}

\noindent
The need for this assumption also becomes clear in our later algorithms,
which compute a representation for an NS-POSG's value function over a (polyhedral) partition of the state space. This partition is created dynamically over the iterations of the solution, using a pre-image based splitting operation.

We now show, using results
for continuous-state POMDPs~\cite{RY-GS-GN-DP-MK:23,JMP-NV-MTS-PP:06}, that $V^{\star}$ is the limit of a sequence of $\alpha$-functions, called \emph{piecewise linear and convex under PWC $\alpha$-functions},
%
first introduced in~\cite{RY-GS-GN-DP-MK:23} for neuro-symbolic POMDPs.

\begin{defi}[P-PWLC function]\label{defi:PWLC}
A function $V: S_B \to \mathbb{R}$ is \emph{piecewise linear and convex under PWC $\alpha$-functions (P-PWLC)} if there exists a finite set $\Gamma \subseteq \mathbb{F}_C(S)$ such that $V(s_1, b_1) = \max_{\alpha \in \Gamma} \langle \alpha, (s_1, b_1) \rangle$ for $(s_1, b_1) \in S_B$,
where the functions in $\Gamma$ are called PWC \emph{$\alpha$-functions}.
\end{defi}
If $V \in \mathbb{F}(S_B)$ is P-PWLC, then it can be represented by a set of PWC 
functions over $S$, i.e., as a finite set of FCP regions and a value vector. Recall that $\langle \alpha, (s_1, b_1) \rangle = \int_{s_E \in S_E}  \alpha(s_1, s_E) b_1(s_E) \textup{d}s_E$, and therefore computing the value for a belief involves integration.  For one-sided NS-POSGs, we demonstrate, under Assumption~\ref{asp:transitions-rewards}, closure of the P-PWLC representation for value functions under the minimax
operator and the convergence of value iteration.
\startpara{LP, closure property and convergence} 
By showing that $f_{u_1, \overline{\alpha}, a_2}$ in \eqref{eq:f-u1-alpha-a2} is PWC in $S$
(see \ifthenelse{\isundefined{\techreport}}{\cite{arxiv}}{\lemaref{lema:PWC-intermidiate-func} in Appx.\ref{sec:appendix-proofs}}),
we use \thomref{thom:operator-equivalence} to demonstrate that, if $V$ is P-PWLC, the minimax operation can be computed by solving an LP.

\begin{lema}[LP for minimax and P-PWLC]\label{lema:LP-minimax-P-PWLC}
    If $V \in \mathbb{F}(S_B)$ is P-PWLC,
    then  $[TV](s_1, b_1)$ is given by an LP for $(s_1, b_1) \in S_B$.
\end{lema}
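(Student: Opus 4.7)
The plan is to use Theorem~\ref{thom:operator-equivalence} to replace $T$ by the maxsup operator $T_\Gamma$, where $\Gamma$ is the finite generating set of $V$, and then show that
$[T_\Gamma V](s_1,b_1) = \max_{u_1}\sup_{\overline{\alpha}\in \Gamma^{A_1\times S_1}}\langle \min_{a_2\in A_2} f_{u_1,\overline{\alpha},a_2},(s_1,b_1)\rangle$
can be encoded as an LP. Three sources of nonlinearity must be handled: the pointwise minimum over $a_2$ inside the integral, the bilinear product $u_1(a_1)\,\alpha^{a_1,s_1'}$ appearing in~\eqref{eq:f-u1-alpha-a2}, and the integration against the belief $b_1$.

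For the bilinearity, I would parametrize each $\alpha^{a_1,s_1'}=\sum_j \lambda_j^{a_1,s_1'}\alpha_j$ as a convex combination of the finitely many elements $\alpha_j\in\Gamma$ and introduce the joint variables $z_j^{a_1,s_1'}=u_1(a_1)\,\lambda_j^{a_1,s_1'}$. The simplex constraints on $u_1$ and on each $\lambda^{a_1,s_1'}$ then translate into linear constraints $u_1(a_1)\ge 0$, $\sum_{a_1}u_1(a_1)=1$, $z_j^{a_1,s_1'}\ge 0$, and $\sum_j z_j^{a_1,s_1'}=u_1(a_1)$ for every $(a_1,s_1')$. Under this substitution, the product $u_1(a_1)\,\alpha^{a_1,s_1'}$ becomes $\sum_j z_j^{a_1,s_1'}\alpha_j$, making $f_{u_1,\overline{\alpha},a_2}(s)$ a linear function of $(u_1,z)$ for each fixed $s\in S$ and $a_2\in A_2$.

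For the integration, I exploit PWC structure. The cited lemma that $f_{u_1,\overline{\alpha},a_2}$ is PWC on $S$, combined with Assumption~\ref{asp:transitions-rewards} and the finiteness of $\Gamma$, yields a common refinement FCP $\Psi$ of the perception FCP $\Phi_P$, the pre-image FCP for reward and transitions, and the FCPs of all $\alpha_j\in\Gamma$; this $\Psi$ is finite and independent of the LP variables. On every region $\psi\in\Psi$ the function $f_{u_1,\overline{\alpha},a_2}$ takes a region-wise constant value $c_\psi^{a_2}(u_1,z)$ that is linear in $(u_1,z)$. To handle the inner minimum, I introduce auxiliary scalars $w_\psi$ with constraints $w_\psi\le c_\psi^{a_2}(u_1,z)$ for every $a_2\in A_2$; since the objective is maximized, these tighten to $w_\psi=\min_{a_2}c_\psi^{a_2}$ at optimum. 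The objective then collapses to the finite linear expression $\sum_{\psi\in\Psi^{s_1}} w_\psi\,m_\psi$, where $\Psi^{s_1}$ collects regions whose $S_1$-coordinate is $s_1$ and $m_\psi=\int_{\pi_E(\psi)} b_1(s_E)\,\textup{d}s_E$ is a constant (the belief is fixed and $\pi_E$ denotes projection onto $S_E$). All constraints and the objective are linear in $(u_1,z,w)$, so $[TV](s_1,b_1)$ equals the optimal value of this LP.

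The main obstacle I expect is the reparametrization step: the natural variables $(u_1,\overline{\alpha})$ give a bilinear program, and the LP formulation only works because each $\alpha^{a_1,s_1'}$ is confined to the convex hull of a finite, fixed set $\Gamma$, which lets the product of the distribution $u_1$ and the convex-combination weights be encoded by a single joint variable $z$. Verifying the two-sided correspondence between feasible $(u_1,z)$ and legitimate $(u_1,\overline{\alpha})$ is the one nontrivial bookkeeping step; once the refinement $\Psi$ has been identified, the remainder is a routine combination of PWC evaluation and standard max-of-min LP tricks.
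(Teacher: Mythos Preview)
Your proposal is correct and follows essentially the same route as the paper's own proof: invoke the operator equivalence to pass to the maxsup form, linearize the bilinear term $u_1(a_1)\alpha^{a_1,s_1'}$ via the joint variables $z_j^{a_1,s_1'}=u_1(a_1)\lambda_j^{a_1,s_1'}$ (the paper does the same rescaling, writing it as $\lambda_\alpha^{a_1,s_1'}\leftarrow p_{a_1}\lambda_\alpha^{a_1,s_1'}$), exploit the PWC structure of $f_{u_1,\overline{\alpha},a_2}$ over a fixed FCP to collapse the integral to a finite weighted sum, and encode the inner $\min_{a_2}$ by the standard upper-bound constraints on the region variables. Your explicit construction of the refinement $\Psi$ is exactly what the paper's $\Phi_\Gamma$ is, and your two-sided correspondence check between $(u_1,z)$ and $(u_1,\overline{\alpha})$ is the one point the paper also flags (handling the degenerate case $u_1(a_1)=0$ by assigning $\alpha^{a_1,s_1'}\equiv L$, which does not affect the optimum).
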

Using \lemaref{lema:LP-minimax-P-PWLC}, we show that the P-PWLC representation is closed under the minimax operator. This closure property enables  iterative computation of a sequence of such functions to approximate $V^{\star}$ to within a convergence guarantee.
\vspace*{-0.3cm}
\begin{theorem}[P-PWLC closure and convergence]\label{thom:P-PWLC-closure}
If $V \in \mathbb{F}(S_B)$ is P-PWLC, then so is $[TV]$. If $V^0 \in \mathbb{F}(S_B)$ is P-PWLC, then the sequence $(V^t)_{t = 0}^{\infty}$, such that $V^{t+1} = [TV^t]$, is P-PWLC and converges to $V^\star$. 
\end{theorem}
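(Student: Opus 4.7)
The plan is to prove the closure statement first, and then obtain the convergence claim from a Banach fixed-point argument combined with the uniqueness of $V^\star$ established in \thomref{thom:operator-equivalence}. For closure, I would switch to the maxsup form $[T_\Gamma V]$ via \thomref{thom:operator-equivalence}, and recall from the proof of \lemaref{lema:LP-minimax-P-PWLC} that $[TV](s_1,b_1)$ is the value of an LP whose variables are $u_1$, the weights $\mu^{a_1,s_1'}_k$ writing each component of $\overline{\alpha}$ as a convex combination of the finitely many PWC generators $\gamma_k \in \Gamma$, and auxiliary variables $v$ bounding the pointwise $\min_{a_2} f_{u_1,\overline{\alpha},a_2}$ on each cell of a common refinement of the perception FCP and the pre-image FCPs supplied by \aspref{asp:transitions-rewards}. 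A change of variable $\lambda^{a_1,s_1'}_k = u_1(a_1)\mu^{a_1,s_1'}_k$ linearises the bilinearity between $u_1$ and $\overline{\alpha}$, turning the problem into a genuine LP in $(u_1,\lambda,v)$; its feasible polytope has finitely many vertices, and these vertices do not depend on $b_1$ since only the objective $\int_{S}v(s)b_1(s)\,\textup{d}s$ does.

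From each vertex $(u_1^\star,\lambda^\star,v^\star)$ of this polytope I would extract a PWC $\alpha$-function by setting $\alpha^\star(s) = \min_{a_2 \in A_2} f_{u_1^\star,\overline{\alpha}^\star,a_2}(s)$, using the cell-wise PWC structure of $f_{u_1^\star,\overline{\alpha}^\star,a_2}$ already established in the proof of \lemaref{lema:LP-minimax-P-PWLC} to conclude $\alpha^\star \in \mathbb{F}_C(S)$. Since for every fixed $b_1$ the LP objective is linear in $(u_1,\lambda,v)$, its optimum is attained at some vertex, and for every other vertex the corresponding $\langle\alpha^\star,(s_1,b_1)\rangle$ is at most the optimum. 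Collecting the finitely many extracted $\alpha$-functions into a set $\Gamma'$ therefore yields $[TV](s_1,b_1) = \max_{\alpha^\star \in \Gamma'}\langle\alpha^\star,(s_1,b_1)\rangle$, establishing that $[TV]$ is P-PWLC. Applying this inductively gives the P-PWLC property of every $V^t$.

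The convergence $V^t \to V^\star$ then follows from a standard Banach fixed-point argument. I would show that $T$ is a $\beta$-contraction on $(\mathbb{F}(S_B),\|\cdot\|_\infty)$ using the pointwise bound $\bigl|[TV](s_1,b_1) - [TV'](s_1,b_1)\bigr| \le \beta \|V - V'\|_\infty$, which in turn follows from the elementary $|\max\min f - \max\min g| \le \|f - g\|_\infty$, the cancellation of the immediate-reward term, and the fact that beliefs are probability measures. Hence $(V^t)$ is Cauchy, converges uniformly to a fixed point of $T$, and by the uniqueness of the fixed point in \thomref{thom:operator-equivalence} this limit is $V^\star$.

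The main obstacle is the closure step, specifically the fact that the optimal $u_1^\star$ need not be pure: because $\agent_2$ observes the state before acting, mixing is genuinely useful for $\agent_1$, so a POMDP-style enumeration over pure $A_1$-actions fails. The linearisation $\lambda = u_1 \cdot \mu$, combined with the PWC cell-wise realisation of the $\min_{a_2}$ guaranteed by \aspref{asp:transitions-rewards}, is what reduces the problem to a single $b_1$-independent polytope with finitely many vertices, bounding $|\Gamma'|$ and yielding the P-PWLC representation; this is the continuous-state, neural-perception analogue of the finite-state closure argument of~\cite{KH-BB-VK-CK:23}.
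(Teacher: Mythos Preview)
Your proposal is correct and follows essentially the same approach as the paper: the closure argument uses the LP of \lemaref{lema:LP-minimax-P-PWLC} (with exactly your linearisation $\lambda^{a_1,s_1'}_k = u_1(a_1)\mu^{a_1,s_1'}_k$), observes that the feasible polytope is independent of $b_1$ since $b_1$ appears only in the objective, and then extracts the finitely many PWC functions $f_{u_1^q,\overline{\alpha}^q}=\min_{a_2}f_{u_1^q,\overline{\alpha}^q,a_2}$ from the vertices $q$ (taking the union over $s_1\in S_1$); the convergence part likewise appeals to the $\beta$-contraction of $T$ and Banach's fixed-point theorem, with the contraction already established in the proof of \thomref{thom:operator-equivalence}.
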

An implementation of value iteration for one-sided NS-POSGs is therefore feasible, 
since each $\alpha$-function involved is PWC and thus allows for a finite representation. However, as the number of $\alpha$-functions grows exponentially in the number of iterations,
it is not scalable in practice.

\vspace*{-0.5em}
\section{Heuristic Search Value Iteration for NS-POSGs}\label{sec:NS-HSVI}
\vspace*{-0.3em}

\noindent
To provide a more practical approach to solving one-sided NS-POSGs,
we now present a variant of HSVI (heuristic search value iteration)~\cite{TS-RS:04},
an anytime algorithm that approximates the value function $V^\star$ 
via lower and upper bound functions, updated through heuristically generated beliefs.

Our approach 
broadly follows the structure of
HSVI for \emph{finite} POSGs~\cite{KH-BB-VK-CK:23},
but every step presents challenges when extending to continuous states and NN-based observations.
In particular, we must work with integrals over beliefs and deal with uncountability,
using P-PWLC (rather than PWLC) 
functions for lower bounds, 
and therefore different ingredients to prove convergence.
Value computations are also much more complex because NN perception function induce FCPs, which are used to compute images, pre-images and intersections.

We also build on ideas from HVSI for (single-agent) neuro-symbolic POMDPs in~\cite{RY-GS-GN-DP-MK:23}.
The presence of two opposing agents brings three main challenges.
First, value backups at belief points 
require solving normal-form games instead of maximising over one agent's actions.
Second, since the first agent 
is not informed of the joint action, in the value backups and belief updates of the maxsup operator uncountably many stage strategies of the second agent have to be considered, whereas, in the single-agent variant, the agent can decide the transition probabilistically on its own.
Third, the forward exploration heuristic is more complex as 
it depends on the stage strategies of the agents in two-stage games.

\vspace*{-0.3cm}
\subsection{Lower and Upper Bound Representations}\label{subsec:lb_up_representations}
\vspace*{-0.1cm}

\noindent
We first discuss representing and updating the lower and upper bound functions.

\startpara{Lower bound function} 
Selecting an appropriate representation for $\alpha$-functions requires closure properties with respect to the maxsup operator. 
Motivated by~\cite{RY-GS-GN-DP-MK:23}, we represent the lower bound $V_{\mathit{lb}}^{\Gamma} \in \mathbb{F}(S_B)$ as the P-PWLC function
for a finite set $\Gamma \subseteq \mathbb{F}_C(S)$ of PWC $\alpha$-functions (see \defiref{defi:PWLC}),
for which the closure is guaranteed by \thomref{thom:P-PWLC-closure}. The lower bound $V_{\mathit{lb}}^{\Gamma}$ has a finite representation as each $\alpha$-function is PWC, and is initialised as in \cite{KH-BB-VK-CK:23}.

\startpara{Upper bound function}
The upper bound $V_{\mathit{ub}}^{\Upsilon} \in \mathbb{F}(S_B)$ is represented by a finite set of belief-value points $\Upsilon = \{ ((s_1^i, b_1^i), y_i) \in 
S_B \times \mathbb{R}  \mid i \in I \}$, where $y_i$ is an upper bound of $V^{\star}(s_1^i, b_1^i)$. 
Similarly to  \cite{RY-GS-GN-DP-MK:23}, for any $(s_1, b_1) \in S_B$,
the upper bound $V_{\mathit{ub}}^{\Upsilon}(s_1, b_1)$ is the lower envelope of the lower convex hull of the points in $\Upsilon$ satisfying the following LP problem: minimise
\begin{align}
\mbox{$\sum\nolimits_{i \in I_{s_{\scale{.75}{1}}}}$}\lambda_i y_i + K_{\mathit{ub}}(b_1, \mbox{$\sum\nolimits_{i \in I_{s_{\scale{.75}{1}}}}$} \lambda_i b_1^i) \; \mbox{\rm subject to} \;
 \lambda_i \ge 0 \; \mbox{and} \; \mbox{$\sum\nolimits_{i \in I_{s_{\scale{.75}{1}}}}$}  \lambda_i = 1  \label{eq:new-ub}
\end{align}
for $i \in I_{s_{\scale{.75}{1}}}$ where $I_{s_1} = \{ i \in I \mid s_1^i = s_1 \}$ and  $K_{\mathit{ub}} : \mathbb{P}(S_E) \times \mathbb{P}(S_E) \to \mathbb{R}$ measures the difference between two beliefs
such that, if $K$ is the function from \thomref{thom:convexity-continuity}, 
then for any $b_1, b_1', b_1'' \in \mathbb{P}(S_E)$: $K_{\mathit{ub}}(b_1, b_1) = 0$,
\begin{align}
    K_{\mathit{ub}}(b_1, b_1') \ge K(b_1, b_1') \quad \mbox{and} \quad  |K_{\mathit{ub}}(b_1, b_1') - K_{\mathit{ub}}(b_1, b_1'') | \leq K_{\mathit{ub}}(b_1', b_1'') \label{eq:K-UB-condition-2} \, .  
\end{align}
Note that \eqref{eq:new-ub} is close to the upper bound in regular HSVI for finite-state spaces, except for the function $K_{\mathit{ub}}$ that measures the difference between two beliefs (two continuous-state functions). With respect to the upper bound used in~\cite{RY-GS-GN-DP-MK:23},  $K_{\mathit{ub}}$ here needs to satisfy an additional triangle property in \eqref{eq:K-UB-condition-2} to ensure the continuity of $V_{\mathit{ub}}^{\Upsilon}$, for the convergence of the point-based algorithm below. The properties of $K_{\mathit{ub}}$ imply 
that \eqref{eq:new-ub} is an upper bound after a value backup, as stated in \lemaref{lema:upper-bound-update} below. The upper bound $V_{\mathit{ub}}^{\Upsilon}$ is initialised as in \cite{KH-BB-VK-CK:23}.

\startpara{Lower bound updates} For the lower bound $V_{\mathit{lb}}^{\Gamma}$, 
in each iteration we add a new PWC $\alpha$-function $\alpha^{\star}$ to $\Gamma$
at a belief $(s_1, b_1) \in S_{B}$ such that:
\begin{equation}\label{eq:update-lb-condition}
    \langle \alpha^{\star}, (s_1, b_1) \rangle = [TV_{\mathit{lb}}^{\Gamma}](s_1, b_1) =  \langle f_{\overline{p}_1^{\star}, \overline{\alpha}^{\star}}, (s_1, b_1) \rangle
\end{equation}
where the second equality follows from 
\lemaref{lema:LP-minimax-P-PWLC} and $(\overline{p}_1^{\star}, \overline{\alpha}^{\star})$ is computed via the optimal solution to the LP in \lemaref{lema:LP-minimax-P-PWLC} at $(s_1, b_1)$.

Using $\overline{p}_1^\star$, $\overline{\alpha}^{\star}$ and the perception FCP $\Phi_P$, \algoref{alg:point-based-update-belief} computes a new $\alpha$-function $\alpha^{\star}$ at belief $(s_1, b_1)$. To guarantee \eqref{eq:update-lb-condition} and improve efficiency, we only compute the backup values for regions $\phi \in \Phi_P$ over which $(s_1, b_1)$ has positive probabilities, i.e., $s^\phi_1=s_1$ (where $s^\phi_1$ is the unique agent state appearing in $\phi$) and $\int_{(s_1, s_E) \in \phi} b_1(s_E) \textup{d} s_E > 0$, 
and assign the trivial lower bound $L$ otherwise. 

For each region $\phi$ either $\alpha^{\star}(\hat{s}_1,\hat{s}_E) =  f_{\overline{p}_1^\star, \overline{\alpha}^{\star}} (\hat{s}_1,\hat{s}_E)$ 
or $\alpha^{\star}(\hat{s}_1,\hat{s}_E) = L$ for all $(\hat{s}_1,\hat{s}_E) \in \phi$. 
Computing the backup values in line 4 of \algoref{alg:point-based-update-belief} state by state is computationally intractable, as $\phi$ contains an infinite number of states. 
However, the following lemma shows that $\alpha^{\star}$ is PWC, allowing a tractable region-by-region backup, called Image-Split-Preimage-Product (ISPP) backup, which is adapted from the single-agent variant in \cite{RY-GS-GN-DP-MK:23}. 
The details of the ISPP backup for one-sided NS-POSGs are in \appxref{sec:appendix-ISPP}.
The lemma also shows that the lower bound function increases
and is valid after each update.

\begin{algorithm}[t]
\caption{Point-based $\mathit{Update}(s_1, b_1)$ of $(V_{\mathit{lb}}^{\Gamma}, V_{\mathit{ub}}^{\Upsilon})$}
\label{alg:point-based-update-belief}
\begin{algorithmic}[1] 
\State $(\overline{p}_1^{\star}, \overline{\alpha}^{\star}) \leftarrow$ $[TV_{\mathit{lb}}^{\Gamma}](s_1, b_1)$ via an LP in  \lemaref{lema:LP-minimax-P-PWLC} 
\For{$\phi \in \Phi_P$}
\If{$s^\phi_1 = s_1$ and $\int_{(s_1, s_E) \in \phi} b_1(s_E) \textup{d} s_E > 0$}
\State $\alpha^{\star}(\hat{s}_1,\hat{s}_E) \leftarrow f_{\overline{p}_1^{\star}, \overline{\alpha}^{\star}} (\hat{s}_1,\hat{s}_E)$ for $(\hat{s}_1,\hat{s}_E) \in \phi$ \Comment{ISPP backup} 
\Else $\;\alpha^{\star}(\hat{s}_1,\hat{s}_E) \leftarrow L$ for $(\hat{s}_1,\hat{s}_E) \in \phi$
\EndIf
\EndFor
\State $\Gamma \leftarrow \Gamma \cup \{\alpha^{\star}\}$
\State $y^{\star} \leftarrow [TV_{\mathit{ub}}^{\Upsilon}](s_1, b_1)$ via \eqref{eq:minimax-operator} and \eqref{eq:new-ub}
\State $\Upsilon \leftarrow \Upsilon \cup \{((s_1, b_1), y^{\star}) \}$
\end{algorithmic}
\end{algorithm}

\begin{lema}[Lower bound]\label{lema:new-pwc-alpha}
The function $\alpha^{\star}$ generated by \algoref{alg:point-based-update-belief} is a PWC $\alpha$-function satisfying \eqref{eq:update-lb-condition}, and if $\Gamma' = \Gamma \cup \{\alpha^{\star}\}$, then $V_{\mathit{lb}}^{\Gamma} \leq V_{\mathit{lb}}^{\Gamma'} \leq V^{\star}$. 
\end{lema}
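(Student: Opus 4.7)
The plan is to establish the three claims of the lemma in sequence: piecewise constancy of $\alpha^\star$, the equality~\eqref{eq:update-lb-condition} at $(s_1,b_1)$, and the sandwich $V_{\mathit{lb}}^{\Gamma} \leq V_{\mathit{lb}}^{\Gamma'} \leq V^\star$. The first is structural, the second unpacks the integral over regions of $\Phi_P$, and the third combines immediate set-inclusion monotonicity with an inductive policy-interpretation argument.

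For PWC-ness, Algorithm~\ref{alg:point-based-update-belief} assigns $\alpha^\star$ on each $\phi \in \Phi_P$ either the restriction of $f_{\overline{p}_1^\star, \overline{\alpha}^\star}$ or the constant $L$. A helper lemma (deferred to the appendix) shows $f_{u_1,\overline{\alpha},a_2}$ is PWC on $S$ for each fixed $a_2$; taking the pointwise minimum over the finite set $A_2$ preserves PWC after passing to the common refinement of the per-$a_2$ FCPs. Refining $\Phi_P$ further by that FCP yields a single FCP on which $\alpha^\star$ is constant, so $\alpha^\star \in \mathbb{F}_C(S)$.

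For~\eqref{eq:update-lb-condition}, I would decompose
\[
\langle \alpha^\star, (s_1, b_1)\rangle = \sum_{\phi \in \Phi_P} \int_{(s_1,s_E)\in \phi} \alpha^\star(s_1,s_E)\, b_1(s_E)\, \textup{d}s_E .
\]
Since $b_1$ is supported on $S_E^{s_1}$, regions $\phi$ with $s_1^\phi \neq s_1$ contribute nothing; among the remainder, regions with zero $b_1$-mass (where $\alpha^\star = L$) also contribute nothing. On the rest, $\alpha^\star$ coincides with $f_{\overline{p}_1^\star, \overline{\alpha}^\star}$ by construction, so $\langle \alpha^\star, (s_1,b_1)\rangle = \langle f_{\overline{p}_1^\star, \overline{\alpha}^\star}, (s_1,b_1)\rangle$. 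Because $(\overline{p}_1^\star, \overline{\alpha}^\star)$ is an optimal solution of the LP in Lemma~\ref{lema:LP-minimax-P-PWLC} at $(s_1,b_1)$, the right-hand side equals $[TV_{\mathit{lb}}^\Gamma](s_1,b_1)$.

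Monotonicity $V_{\mathit{lb}}^\Gamma \leq V_{\mathit{lb}}^{\Gamma'}$ is immediate from $\Gamma \subseteq \Gamma'$ and the max-based definition of the P-PWLC representation. For validity $V_{\mathit{lb}}^{\Gamma'} \leq V^\star$ I would argue by induction on the update history, with the invariant that each $\alpha \in \Gamma$ is pointwise dominated on $S$ by $V^{\sigma_1^\alpha}(s) := \inf_{\sigma_2}\mathbb{E}_{\delta_s}^{\sigma_1^\alpha,\sigma_2}[Y]$ for some $\sigma_1^\alpha \in \Sigma_1$. In the inductive step I would identify $\sigma_1^\star$ as the strategy that plays $\overline{p}_1^\star$ at stage~$0$ and, upon observing its own action $a_1$ and next $\agent_1$-state $s_1'$, randomises over the $\Gamma$-underlying strategies according to the convex-combination coefficients of $\alpha^{\star, a_1, s_1'}$; unrolling~\eqref{eq:f-u1-alpha-a2} and applying the inductive hypothesis yields $f_{\overline{p}_1^\star,\overline{\alpha}^\star} \leq V^{\sigma_1^\star}$ pointwise on $S$, and since $L \leq V^{\sigma_1^\star}$ holds uniformly (as $L$ is the infimum of discounted reward), the $L$-regions are also dominated, giving $\alpha^\star \leq V^{\sigma_1^\star}$ pointwise. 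Because $\agent_2$ is fully informed, its best response to $\sigma_1^\star$ separates across the support of any prior, so $V^{\sigma_1^\star}(\tilde s_1, \tilde b_1) = \int_{S_E} V^{\sigma_1^\star}(\tilde s_1, s_E)\,\tilde b_1(s_E)\,\textup{d}s_E$, and hence $\langle \alpha^\star, \cdot \rangle \leq V^{\sigma_1^\star}(\cdot) \leq V^\star(\cdot)$ on all of $S_B$. The main obstacle will be precisely this final step: a naive pointwise bound via $V^\star(\delta_{s_E})$ fails because $V^\star$ is only convex (not linear) in beliefs by Theorem~\ref{thom:convexity-continuity}, and the resolution is to exploit one-sidedness to linearise $V^{\sigma_1^\star}$ over the prior rather than $V^\star$ itself.
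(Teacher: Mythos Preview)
Your argument is correct. The treatment of PWC-ness and of equality \eqref{eq:update-lb-condition} mirrors the paper, which simply defers to \thomref{thom:P-PWLC-closure} and the LP construction of \lemaref{lema:LP-minimax-P-PWLC}.

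For the validity bound $V_{\mathit{lb}}^{\Gamma'} \leq V^\star$, however, you take a genuinely different route. The paper proceeds operator-theoretically: it observes that $\langle \alpha^\star, (\hat s_1, \hat b_1) \rangle \leq [TV_{\mathit{lb}}^\Gamma](\hat s_1, \hat b_1)$ holds at \emph{every} belief (on regions where line~4 fires, $\alpha^\star = f_{\overline p_1^\star, \overline\alpha^\star}$ is merely one feasible choice of $(u_1,\overline\alpha)$ in the maxsup at $(\hat s_1, \hat b_1)$; elsewhere $\alpha^\star = L$, which lower-bounds everything), and then chains $[TV_{\mathit{lb}}^\Gamma] \leq [TV^\star] = V^\star$ via monotonicity of $T$ and the fixed-point result of \thomref{thom:operator-equivalence}. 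This collapses to a one-line induction on the single invariant $V_{\mathit{lb}}^\Gamma \leq V^\star$. Your approach is instead constructive: you carry a per-$\alpha$ witness strategy $\sigma_1^\alpha$ with $\alpha \leq V^{\sigma_1^\alpha}$ pointwise on $S$, assemble $\sigma_1^\star$ by composing $\overline p_1^\star$ with the convex mixtures encoded in $\overline\alpha^\star$, and then exploit one-sidedness (the identity \eqref{eq:V-sigma1} in the proof of \thomref{thom:convexity-continuity}) to linearise $V^{\sigma_1^\star}$ over the prior. This demands a heavier inductive invariant but delivers something extra: an explicit $\agent_1$ strategy achieving each lower-bound $\alpha$-function, which is directly relevant to the paper's strategy-synthesis goal. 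The paper's route is shorter because monotonicity of the Bellman operator and the fixed point do all the work; yours avoids appealing to those properties at the price of tracking witness strategies through the induction.
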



\startpara{Upper bound updates}
For the upper bound $V_{\mathit{ub}}^{\Upsilon}$, due to representation \eqref{eq:new-ub}, at a belief $(s_1, b_1) \in S_B$ in each iteration, we add a new belief-value point $((s_1, b_1), y^{\star})$ to $\Upsilon$ such that $y^{\star} = [TV_{\mathit{ub}}^{\Upsilon}](s_1, b_1)$. Computing $[TV_{\mathit{ub}}^{\Upsilon}](s_1, b_1)$ via \eqref{eq:minimax-operator} and \eqref{eq:new-ub} requires the concrete formula for $K_{\mathit{ub}}$ and the belief representations. Thus, we will show how to compute $[TV_{\mathit{ub}}^{\Upsilon}](s_1, b_1)$ when introducing belief representations below.
The following lemma shows that $y^{\star} \ge V^{\star} (s_1, b_1)$ required by \eqref{eq:new-ub}, and the upper bound function is decreasing and is valid after each update.

\begin{lema}[Upper bound]\label{lema:upper-bound-update}
Given a belief $(s_1, b_1) \in S_{B}$, if $y^{\star} = [TV_{\mathit{ub}}^{\Upsilon}](s_1, b_1)$, then $y^\star$ is an upper bound of $V^{\star}$ at $(s_1, b_1)$, i.e., $y^{\star} \ge V^{\star} (s_1, b_1)$, and if $\Upsilon' = \Upsilon \cup \{ ((s_1, b_1), y^{\star}) \}$, then $V_{\mathit{ub}}^{\Upsilon} \ge V_{\mathit{ub}}^{\Upsilon'} \ge V^{\star}$.
\end{lema}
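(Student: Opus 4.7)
The plan is to establish, in order, (i) $y^\star \ge V^\star(s_1, b_1)$; (ii) $V_{\mathit{ub}}^{\Upsilon'} \le V_{\mathit{ub}}^{\Upsilon}$; and (iii) $V_{\mathit{ub}}^{\Upsilon'} \ge V^\star$, taking as an invariant the validity of the current upper bound $V_{\mathit{ub}}^{\Upsilon} \ge V^\star$, which holds at initialisation and is preserved by the same argument used for $\Upsilon'$ in step (iii).

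For (i), I would appeal to monotonicity of the minimax operator $T$: the right-hand side of \eqref{eq:minimax-operator} is affine in the successor values with nonnegative weights $\beta P(\cdot)$, so $V \ge V'$ pointwise implies $TV \ge TV'$ pointwise, a property preserved by the outer max--min. Applied to the invariant $V_{\mathit{ub}}^{\Upsilon} \ge V^\star$ and combined with the fixed-point identity $V^\star = TV^\star$ from \thomref{thom:operator-equivalence}, this yields $y^\star = [TV_{\mathit{ub}}^{\Upsilon}](s_1, b_1) \ge [TV^\star](s_1, b_1) = V^\star(s_1, b_1)$.

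For (ii), the LP in \eqref{eq:new-ub} at any query belief either has the same feasible region under $\Upsilon$ and $\Upsilon'$ (when the agent state of the newly inserted point does not match the query) or is enlarged by exactly one variable. In the latter case, every feasible $\{\lambda_i\}$ for the $\Upsilon$-LP extends to a feasible solution of the $\Upsilon'$-LP by setting the new coordinate to zero, with the same objective value; hence the minimum under $\Upsilon'$ cannot exceed that under $\Upsilon$. For (iii), fix any $(s_1', b_1') \in S_B$ and any feasible $\{\lambda_i\}$ for the $\Upsilon'$-LP at $(s_1', b_1')$. Every index $i$ appearing has $s_1^i = s_1'$ by the definition of $I_{s_1'}$, and $y_i \ge V^\star(s_1^i, b_1^i)$ by the invariant together with step (i) for the new point. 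Convexity of $V^\star(s_1', \cdot)$ from \thomref{thom:convexity-continuity} gives $\sum_i \lambda_i y_i \ge V^\star(s_1', \sum_i \lambda_i b_1^i)$, while the continuity bound in the same theorem combined with $K_{\mathit{ub}} \ge K$ from \eqref{eq:K-UB-condition-2} gives $V^\star(s_1', b_1') \le V^\star(s_1', \sum_i \lambda_i b_1^i) + K_{\mathit{ub}}(b_1', \sum_i \lambda_i b_1^i)$. Chaining the two inequalities and minimising the right-hand side over feasible $\{\lambda_i\}$ then produces $V^\star(s_1', b_1') \le V_{\mathit{ub}}^{\Upsilon'}(s_1', b_1')$.

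The main obstacle I expect is step (iii), where one must simultaneously handle the convex combination $\sum_i \lambda_i b_1^i$ (which typically differs from the query belief $b_1'$) and the continuity correction $K_{\mathit{ub}}$ inside a single inequality. The hypothesis $K_{\mathit{ub}} \ge K$ in \eqref{eq:K-UB-condition-2} is exactly the ingredient that bridges this gap; the triangle-like condition there is reserved for proving continuity of $V_{\mathit{ub}}^{\Upsilon}$ and is not needed for the upper-bound claim itself. Verifying monotonicity of $T$ in (i) is standard but deserves explicit mention given the continuous state space and the max--min structure of \defiref{defi:minimax-operator}.
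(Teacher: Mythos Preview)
Your proposal is correct and follows essentially the same route the paper has in mind: the paper's own proof simply cites \thomref{thom:convexity-continuity}, \eqref{eq:new-ub} and \eqref{eq:K-UB-condition-2} and defers to the analogous argument for NS-POMDPs in~\cite{RY-GS-GN-DP-MK:23}, and your three-step decomposition (monotonicity of $T$ plus the fixed-point identity for (i), feasible-set enlargement for (ii), convexity plus the $K_{\mathit{ub}}\ge K$ bound for (iii)) is precisely that argument spelled out. Your side remark that only the inequality $K_{\mathit{ub}}\ge K$ from \eqref{eq:K-UB-condition-2} is used here, while the triangle condition is reserved for the continuity of $V_{\mathit{ub}}^{\Upsilon}$ needed in \lemaref{lema:finite-terminal-beliefs}, is also accurate.
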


\subsection{One-Sided NS-HSVI}

\noindent
\algoref{alg:NS-HSVI} presents our NS-HSVI algorithm for one-sided NS-POSGs.

\startpara{Forward exploration heuristic}
The algorithm uses a heuristic approach to select which belief 
will be considered next.
Similarly to 
finite-state one-sided POSGs \cite{KH-BB-VK-CK:23}, we focus on a belief that has the highest \emph{weighted excess gap}. The excess gap at a belief $(s_1, b_1)$ with depth $t$ from the initial belief is defined by $ \mathit{excess}_{t}(s_1, b_1) = V_{\mathit{ub}}^{\Upsilon}(s_1, b_1) -  V_{\mathit{lb}}^{\Gamma}(s_1, b_1) - \rho(t)$,
where $\rho(0) = \varepsilon$ and $\rho(t {+} 1) = (\rho(t) - 2(U - L) \bar{\varepsilon}) / \beta $, and $\bar{\varepsilon} \in (0, (1 - \beta ) \varepsilon / (2U - 2L))$. 
Using this excess gap, the next action-observation pair $(\hat{a}_1, \hat{s}_1)$ for exploration is selected from:
\vspace{-0.2cm}
\begin{equation} \label{eq:max-action-observation}
     \argmax\nolimits_{(a_1,s_1') \in A_1 \times S_1} P(a_1, s_1' \mid (s_1, b_1),u_1^{\mathit{ub}}, u_2^{\mathit{lb}}) \mathit{excess}_{t+1}(s_1', b_1^{s_1,a_1,u_2^{\mathit{lb}}, s_1'}) \,.
\end{equation}
To compute the next belief via lines 8 and 9 of \algoref{alg:NS-HSVI}, 
the minimax strategy profiles in stage games $[TV_{\mathit{lb}}^{\Gamma}](s_1, b_1)$ and $[TV_{\mathit{ub}}^{\Upsilon}](s_1, b_1)$, i.e., $(u_1^{\mathit{ub}}, u_2^{\mathit{lb}})$, are required. Since $V_{\mathit{lb}}^{\Gamma}$ is P-PWLC, using \lemaref{lema:LP-minimax-P-PWLC}, the strategy  $ u_2^{\mathit{lb}}$ is obtained by solving an LP.
However, the computation of the strategy  $u_1^{\mathit{ub}}$ depends on the representation of $(s_1, b_1)$ and the measure function $K_{\mathit{ub}}$, and thus will be discussed later.
One-sided NS-HSVI has the following convergence guarantees.


\begin{algorithm}[t]
\caption{One-sided NS-HSVI for one-sided NS-POSGs}
\label{alg:NS-HSVI}
\begin{algorithmic}[1] 
\While{$V_{\mathit{ub}}^{\Upsilon}(s_{1}^{\mathit{init}}, b_{1}^{\mathit{init}}) - V_{\mathit{lb}}^{\Gamma}(s_{1}^{\mathit{init}}, b_{1}^{\mathit{init}}) > \varepsilon$} $\mathit{Explore}((s_{1}^{\mathit{init}}, b_{1}^{\mathit{init}}), 0)$
\EndWhile
\State \Return $V_{\mathit{lb}}^{\Gamma}$  and $V_{\mathit{ub}}^{\Upsilon}$ via sets $\Gamma$ and $\Upsilon$
\Function{$\mathit{Explore}$}{$(s_1, b_1),  t$}
\State $(u_1^{\mathit{lb}}, u_2^{\mathit{lb}}) \leftarrow $ minimax strategy profile in $[TV_{\mathit{lb}}^{\Gamma}](s_1, b_1)$
\State $(u_1^{\mathit{ub}}, u_2^{\mathit{ub}}) \leftarrow $ minimax strategy profile in $[TV_{\mathit{ub}}^{\Upsilon}](s_1, b_1)$
\State $\mathit{Update}(s_1, b_1)$ \Comment{\algoref{alg:point-based-update-belief}}
\State $(\hat{a}_1, \hat{s}_1) \leftarrow$ select according to forward exploration heuristic 
\If{$ P(\hat{a}_1, \hat{s}_1 \mid (s_1, b_1), u_1^{\mathit{ub}}, u_2^{\mathit{lb}}) \mathit{excess}_{t+1}(\hat{s}_1, b_1^{s_1,\hat{a}_1,u_2^{\mathit{lb}}, \hat{s}_1}) > 0$}
\State $\mathit{Explore}((\hat{s}_1, b_1^{s_1,\hat{a}_1,u_2^{\mathit{lb}}, \hat{s}_1}),  t+1)$
\State $\mathit{Update}(s_1, b_1)$ \Comment{\algoref{alg:point-based-update-belief}}
\EndIf
\EndFunction
\end{algorithmic}
\end{algorithm}

\begin{theorem}[One-sided NS-HSVI]\label{thom:NS-HSVI} For any $(s_{1}^{\mathit{init}}, b_{1}^{\mathit{init}}) \in S_B$ and $\varepsilon > 0$,
\algoref{alg:NS-HSVI} will terminate and upon termination: $V_{\mathit{ub}}^{\Upsilon}(s_{1}^{\mathit{init}}, b_{1}^{\mathit{init}}) - V_{\mathit{lb}}^{\Gamma}(s_{1}^{\mathit{init}}, b_{1}^{\mathit{init}}) \leq \varepsilon$ and  $V_{\mathit{lb}}^{\Gamma}(s_{1}^{\mathit{init}}, b_{1}^{\mathit{init}}) \leq V^{\star}(s_{1}^{\mathit{init}}, b_{1}^{\mathit{init}}) \leq V_{\mathit{ub}}^{\Upsilon}(s_{1}^{\mathit{init}}, b_{1}^{\mathit{init}})$.
%
\end{theorem}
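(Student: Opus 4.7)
The plan is to separate the theorem into three claims: (a) the bound relation $V_{\mathit{lb}}^{\Gamma}\leq V^{\star}\leq V_{\mathit{ub}}^{\Upsilon}$ is maintained throughout execution, (b) \algoref{alg:NS-HSVI} terminates, and (c) upon termination the $\varepsilon$-gap condition holds at $(s_1^{\mathit{init}},b_1^{\mathit{init}})$. Claim (c) is immediate from the loop guard on line~1. Claim (a) follows by induction on the sequence of calls to \algoref{alg:point-based-update-belief}, since \lemaref{lema:new-pwc-alpha} and \lemaref{lema:upper-bound-update} certify that each update preserves $V_{\mathit{lb}}^{\Gamma}\leq V^{\star}$ and $V_{\mathit{ub}}^{\Upsilon}\geq V^{\star}$, provided the initialisations (as in \cite{KH-BB-VK-CK:23}) yield valid bounds, which they do since $L$ and $U$ bound $V^{\star}$ by definition.

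The main work is (b), and in particular the termination of $\mathit{Explore}$. Following the HSVI convergence template of \cite{KH-BB-VK-CK:23,RY-GS-GN-DP-MK:23}, I would prove by strong induction on the recursion depth $t$ that every call $\mathit{Explore}((s_1,b_1),t)$ returns after finitely many recursive invocations and that, upon returning, the excess $\mathit{excess}_t(s_1,b_1)$ is non-positive. The key ingredient is a one-step contraction lemma: if the descent test on line~9 fails, so that the weighted excess of the chosen $(\hat{a}_1,\hat{s}_1)$ in \eqref{eq:max-action-observation} is non-positive for every $(a_1,s_1')\in A_1\times S_1$, then one can use \lemaref{lema:LP-minimax-P-PWLC} together with the maxsup characterisation of $T$ (\thomref{thom:operator-equivalence}) to derive
\[
[TV_{\mathit{ub}}^{\Upsilon}](s_1,b_1)-[TV_{\mathit{lb}}^{\Gamma}](s_1,b_1)\leq \beta\rho(t+1)+2(U-L)\bar{\varepsilon}=\rho(t),
\]
so the post-update values at $(s_1,b_1)$ on line~7 already give $\mathit{excess}_t(s_1,b_1)\leq 0$. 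The recursive schedule $\rho(t{+}1)=(\rho(t)-2(U-L)\bar{\varepsilon})/\beta$ with $\bar{\varepsilon}\in(0,(1-\beta)\varepsilon/(2U-2L))$ is chosen precisely so that $\rho(t)$ grows geometrically and eventually exceeds $U-L$, which is an a priori bound on the initial gap; hence the recursion depth is bounded by an explicit $t^{\star}(\varepsilon,\beta,U,L)$. With bounded recursion, each outer iteration terminates, and one then shows that each outer iteration strictly reduces $V_{\mathit{ub}}^{\Upsilon}(s_1^{\mathit{init}},b_1^{\mathit{init}})-V_{\mathit{lb}}^{\Gamma}(s_1^{\mathit{init}},b_1^{\mathit{init}})$ by a positive margin depending on $\rho(0)=\varepsilon$, so finitely many outer iterations suffice.

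The hard part is porting the one-step contraction to the continuous-state, NN-induced setting. In the finite-POSG proof~\cite{KH-BB-VK-CK:23}, the bound on the excess after a value backup uses linearity of the value in $b$ and a finite observation alphabet; here $\agent_1$'s observations $s_1\in S_1$ are still finite, but beliefs are measures on the continuous environment $S_E$ and $\langle\alpha,(s_1,b_1)\rangle$ is an integral. The triangle-style property \eqref{eq:K-UB-condition-2} of $K_{\mathit{ub}}$ is chosen precisely so that $V_{\mathit{ub}}^{\Upsilon}$ is continuous in $b_1$ under the same $K$-metric from \thomref{thom:convexity-continuity} that controls $V^{\star}$, and the P-PWLC form gives the analogous continuity for $V_{\mathit{lb}}^{\Gamma}$ via linearity of each PWC $\alpha$-function. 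Once this joint continuity of both envelopes in $b_1$ is established, the algebraic manipulation that produces the $\beta\rho(t{+}1)+2(U-L)\bar{\varepsilon}$ bound carries through, with the finite sum over joint observations replaced by the finite sum over $A_1\times S_1$ and the discrete inner products replaced by integrals against $b_1$. The remaining steps---bounded depth implies bounded work per outer iteration, and strict per-iteration progress at the initial belief---are then routine adaptations of \cite{KH-BB-VK-CK:23,RY-GS-GN-DP-MK:23}.
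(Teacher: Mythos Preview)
Your decomposition into (a)--(c) and the handling of (a), (c) and the bounded recursion depth via $\rho(t)\to\infty$ are fine and match the paper. The gap is in the termination argument for the \emph{outer} while loop.

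The strong-induction claim ``upon returning from $\mathit{Explore}((s_1,b_1),t)$ the excess $\mathit{excess}_t(s_1,b_1)$ is non-positive'' does not go through at inner nodes. The one-step contraction
\[
[TV_{\mathit{ub}}^{\Upsilon}](s_1,b_1)-[TV_{\mathit{lb}}^{\Gamma}](s_1,b_1)\;\leq\;\beta\sum_{a_1,s_1'}P(a_1,s_1'\mid\cdot)\bigl(V_{\mathit{ub}}^{\Upsilon}-V_{\mathit{lb}}^{\Gamma}\bigr)(s_1',b_1^{\,\cdot})
\]
only yields $\leq\beta\rho(t{+}1)$ when \emph{every} child $(a_1,s_1')$ has excess at most $\rho(t{+}1)$. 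That holds at the terminal belief (where the argmax in \eqref{eq:max-action-observation} is non-positive, hence all children are), but at an inner node only the single explored child has been driven below $\rho(t{+}1)$; the remaining children may still have arbitrarily large excess. Consequently your induction collapses at the first non-leaf depth, and with it the claim that each outer iteration reduces the gap at $(s_1^{\mathit{init}},b_1^{\mathit{init}})$ by a fixed positive margin. In a continuous belief space there is no such uniform margin.

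The paper's argument is different and uses the continuity you identified in the right place. At a terminal belief $(s_1^t,b_1^t)$ the contraction actually gives the \emph{strict} bound $\mathit{excess}_t(s_1^t,b_1^t)\leq -2(U{-}L)\bar{\varepsilon}$ after the update (your bound $\leq\rho(t)$ loses this margin). The Lipschitz-type estimates \eqref{eq:K-UB-condition-2} for $V_{\mathit{ub}}^{\Upsilon}$ and the analogous one for $V_{\mathit{lb}}^{\Gamma}$ then propagate non-positive excess to the whole $K_{\mathit{ub}}$-ball of radius $(U{-}L)\bar{\varepsilon}$ around $(s_1^t,b_1^t)$; monotonicity of updates makes this permanent. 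Any future trajectory that terminates at depth $t$ must do so at a belief with positive excess just before its own update, hence outside every previous ball. Since $\mathbb{P}(S_E)$ is compact (totally bounded), only finitely many such $\bar{\varepsilon}$-separated terminal beliefs fit at each of the finitely many depths $0\leq t<T_{\max}$; this is \lemaref{lema:finite-terminal-beliefs}, and it is what forces the while loop to terminate. You have the ingredients (the $K_{\mathit{ub}}$ continuity and bounded depth), but the packing/covering step---not a per-iteration decrement at the root---is the missing idea.
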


\subsection{Belief Representation and Computations}\label{sec:beliefref}

\noindent
Implementing one-sided NS-HSVI depends on belief representations, as closed forms are needed.
We use the popular \emph{particle-based representation} \cite{JMP-NV-MTS-PP:06,AD-NDF-NJG:01}, which can approximate arbitrary beliefs and handle non-Gaussian systems.
However, compared to region-based  representations \cite{RY-GS-GN-DP-MK:23},
it is more vulnerable to disturbances and can require many particles for a good approximation.
 
\startpara{Particle-based beliefs} 
A \emph{particle-based belief} $(s_1, b_1) \in S_B$ is represented by a weighted particle set $\{ (s_E^i, \kappa_i) \}_{i=1}^{n_s}$ with a normalised weight $\kappa_i$ for each particle $s_E^i \in S_E$, where $b_1(s_E) = \mbox{$\smash{\sum\nolimits_{i = 1}^{n_b}}$} \kappa_i D(s_E - s_E^i)$ for $s_E \in S_E$ and $D(s_E - s_E^i)$ is a Dirac delta function centred at $0$.  

 
To implement one-sided NS-HSVI using particle-based beliefs, we prove that $V_{\mathit{lb}}^{\Gamma}$ and $V_{\mathit{ub}}^{\Upsilon}$ are eligible representations,
as the belief update $\smash{b_1^{s_1, a_1, u_2, s_1'}}$, expected values $\langle \alpha, (s_1, b_1) \rangle$, $\langle r, (s_1, b_1) \rangle$ and probability $P(a_1, s_1' \mid (s_1,b_1), u_1, u_2)$ are computed as simple summations for a particle-based belief $(s_1, b_1)$ (\appxref{sec:appendix-probabilities}).

\startpara{Lower bound} 
Since $V_{\mathit{lb}}^{\Gamma}$ is P-PWLC with PWC $\alpha$-functions $\Gamma$, for a particle-based belief $(s_1, b_1)$ represented by $\{ (s_E^i, \kappa_i) \}_{i=1}^{n_b}$, using \defiref{defi:PWLC}, $V_{\mathit{lb}}^{\Gamma}(s_1, b_1) = \max\nolimits_{\alpha \in \Gamma} \sum_{i=1}^{n_b} \kappa_i \alpha(s_1, s_E^i)$. The stage game $[TV_{\mathit{lb}}^{\Gamma}](s_1, b_1)$ and minimax strategy profile $(u_1^{\mathit{lb}},u_2^{\mathit{lb}} )$ follow from solving the LP in \lemaref{lema:LP-minimax-P-PWLC}.

\startpara{Upper bound} 
To compute $V_{\mathit{ub}}^{\Upsilon}$ in \eqref{eq:new-ub}, we need a function $K_{\mathit{ub}}$ to measure belief differences that satisfies \eqref{eq:K-UB-condition-2}. We take $K_{\mathit{ub}} = K$, which does so by definition. 
Given $\Upsilon = \{ ((s_1^i, b_1^i), y_i) \mid i \in I\}$, the upper bound and stage game can be computed by solving an LP, respectively, as demonstrated by the following theorem, and then the minimax strategy profile
$(u_1^{\mathit{ub}}, u_2^{\mathit{ub}})$ is synthesised (see \appxref{sec:appendix-dual-LPs}).
\begin{theorem}[LPs for upper bound]\label{thom:LP-upper-bound}
    For a particle-based belief $(s_1, b_1) \in S_B$, $V_{\mathit{ub}}^{\Upsilon}(s_1,b_1)$ and $[TV_{\mathit{ub}}^{\Upsilon}](s_1,b_1)$ are the optimal value of an LP, respectively.
\end{theorem}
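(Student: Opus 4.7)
The plan is to establish each of the two claims by rewriting the defining expression as a finite-dimensional linear program, exploiting the fact that a particle-based belief is supported on only finitely many points, so every integral over $S_E$ collapses to a finite sum.

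For $V_{\mathit{ub}}^{\Upsilon}(s_1,b_1)$, I would substitute the particle representations of $b_1$ and of each $b_1^i$ for $i\in I_{s_1}$ into the minimisation~\eqref{eq:new-ub}, collecting all particles appearing in these beliefs into a single finite index set $J$ (extending weights by zero where needed). With the choice $K_{\mathit{ub}}=K$ from Section~\ref{sec:beliefref}, the integral in $K$ becomes $\tfrac{1}{2}(U-L)\sum_{j\in J}\bigl|\kappa_j-\sum_{i\in I_{s_1}}\lambda_i\kappa_j^i\bigr|$. The standard epigraphical linearisation---introduce auxiliary variables $t_j\ge \pm(\kappa_j-\sum_i\lambda_i\kappa_j^i)$ and add $\tfrac{1}{2}(U-L)\sum_j t_j$ to the objective---yields an LP in the variables $(\lambda_i)_{i\in I_{s_1}}$ and $(t_j)_{j\in J}$, with constraints $\lambda_i\ge 0$ and $\sum_i\lambda_i=1$.

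For $[TV_{\mathit{ub}}^{\Upsilon}](s_1,b_1)$, the starting point is Definition~\ref{defi:minimax-operator}. Because $b_1$ is supported on the particles $\{s_E^i\}_{i=1}^{n_b}$, the stage strategy $u_2\in\mathbb{P}(A_2\mid S)$ is only relevant at these finitely many points, so it reduces to a finite vector of distributions $(u_2(\cdot\mid s_1,s_E^i))_{i=1}^{n_b}$. Each successor belief $b_1^{s_1,a_1,u_2,s_1'}$ is itself particle-based, so $V_{\mathit{ub}}^{\Upsilon}(s_1',b_1^{s_1,a_1,u_2,s_1'})$ admits the LP representation of the first claim. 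The crucial step is a homogeneity rescaling: absorbing the normaliser $P(a_1,s_1'\mid\cdot)$ of the belief update into the $\lambda$ and $t$ variables of the inner LP---i.e., working with unnormalised successor weights---makes the product $P(a_1,s_1'\mid\cdot)\cdot V_{\mathit{ub}}^{\Upsilon}(\cdot)$ linear in those weights, and hence linear in $u_2$ for fixed $u_1$. After fusing the outer $\min_{u_2}$ of~\eqref{eq:minimax-operator} with the inner minimisations into a single joint minimisation over $(u_2,\lambda,t)$, the remaining $\max_{u_1}$ over a min-LP whose payoff is linear in $u_1$ admits the standard minimax LP reformulation (using that agent $2$'s strategy decouples across particles, so its per-particle minimisation is attained at a pure best response).

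The main obstacle will be executing the nested min--min fusion without introducing bilinear terms. Two ingredients have to align: (i) the outer $\min_{u_2}$ and the inner $\min_{(\lambda,t)}$ must fuse into one joint minimisation; and (ii) the rescaling trick has to eliminate the $u_2/P$ nonlinearity that would otherwise arise from the $1/P$ in the Bayesian belief update, so that the successor weights remain linear in $u_2$. Once these are in place, a solution of the resulting LP simultaneously delivers $[TV_{\mathit{ub}}^{\Upsilon}](s_1,b_1)$ and the minimax profile $(u_1^{\mathit{ub}},u_2^{\mathit{ub}})$ needed in Algorithm~\ref{alg:NS-HSVI}.
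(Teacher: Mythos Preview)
Your proposal is correct and uses the same core devices as the paper---the epigraphical linearisation of the $K_{\mathit{ub}}$ term, the homogeneity rescaling that replaces $(\nu,d)$ by $(\lambda,c)=(P\nu,Pd)$ so that the normaliser $P(s_1'\mid\cdot)$ cancels and the unnormalised successor weights become linear in $u_2$, and a minimax swap to collapse $\max_{u_1}$ against the joint minimisation into a single LP. The paper's argument differs only in the order of the swaps: it first proves that $V_{\mathit{ub}}^{\Upsilon}(s_1,\cdot)$ is convex and continuous, so that \thomref{thom:operator-equivalence} applies and one can start from the $\min_{u_2}\max_{u_1}$ form of the operator; then it substitutes the inner LP for $V_{\mathit{ub}}^{\Upsilon}$ at successors, applies von~Neumann to swap $\max_{u_1}$ with $\min_{(\nu,c)}$, and only then performs the rescaling. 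Your route---rescale first, fuse the two minimisations, then do a single max--min swap---is more direct and sidesteps the separate convexity/continuity verification, at the cost of losing the modular reuse of \thomref{thom:operator-equivalence}.

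One small correction: your parenthetical justification for the final step (``agent~2's strategy decouples across particles, so its per-particle minimisation is attained at a pure best response'') is not the operative reason here. After the fusion, $u_2$ is coupled to $(\lambda,t)$ through the constraints $\sum_k\lambda_k^{a_1,s_1'}=P(s_1'\mid (s_1,b_1),a_1,u_2)$ and the absolute-value bounds, so it does not decouple per particle. The swap $\max_{u_1}\min_{(u_2,\lambda,t)}=\min_{(u_2,\lambda,t)}\max_{u_1}$ is justified instead by bilinearity of the payoff together with LP duality (or von~Neumann, after adding a harmless upper bound on the slack variables to make the feasible set compact). Once swapped, $\max_{u_1}$ over the simplex reduces to $\max_{a_1}$ and the epigraph variable $v$ yields exactly the LP~\eqref{eq:LP-minimax-upper}.
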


\vspace*{-0.3cm}
\section{Experimental Evaluation}\label{sec:experiments}
\vspace*{-0.2cm}

%
\noindent
We have built a prototype implementation in Python, using Gurobi~\cite{gurobi} to solve the LPs needed for computing lower and upper bound values, and the minimax values and strategies of one-shot games.
We use the Parma Polyhedra Library~\cite{BHZ08}
to operate over polyhedral pre-images of NNs, $\alpha$-functions and reward structures. 

Our evaluation uses two one-sided NS-POSG examples:
a \emph{pursuit-evasion} game and the \emph{pedestrian-vehicle} scenario from \sectref{nscsgs-sect}.
Below, we discuss the applicability and usefulness of our techniques on these examples.
Due to limited space,
we refer to \appxref{sec:appendix-examples} for more details of the models,
including the training of the ReLU NN classifiers,
and empirical results on performance.

\begin{figure}[t]
{\scriptsize
    \hspace*{0.9cm} Step 0 \hspace*{1.95cm} Step 1 \hspace*{1.9cm} Step 2 \hspace*{1.95cm} Step 3 \\
    \centering
    \raisebox{1.3cm}{(a)}
    \hspace{-0.15cm}
    \includegraphics[width=0.24\textwidth]{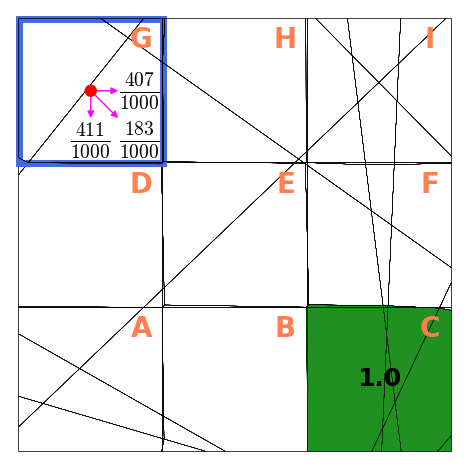}
    \hspace{-0.28cm}
    \includegraphics[width=0.24\textwidth]{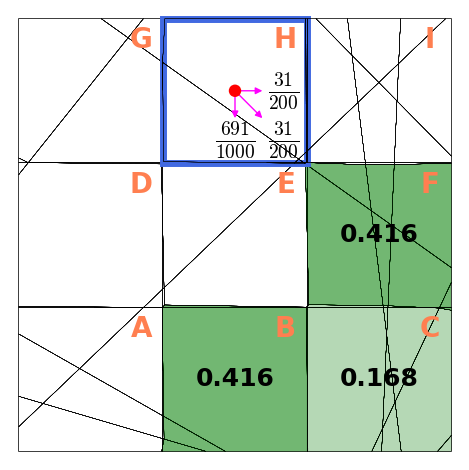}
    \hspace{-0.28cm}
    \includegraphics[width=0.24\textwidth]{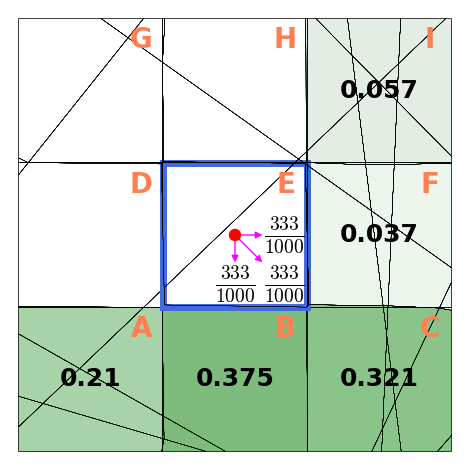}
    \hspace{-0.28cm}
    \includegraphics[width=0.24\textwidth]{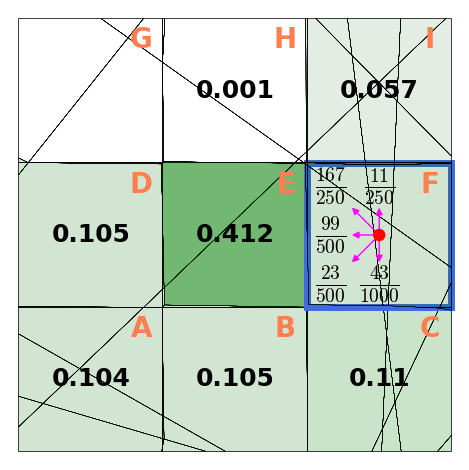} \\
    \vspace*{-0.15cm}
    \hspace{0.15cm}\raisebox{1.3cm}{(b)}
    \hspace{-0.15cm}
    \includegraphics[width=0.24\textwidth]{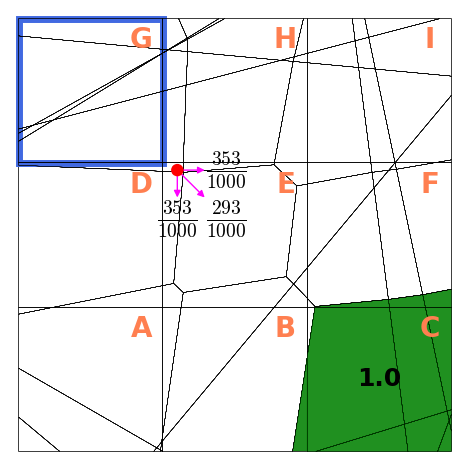}
    \hspace{-0.28cm}
    \includegraphics[width=0.24\textwidth]{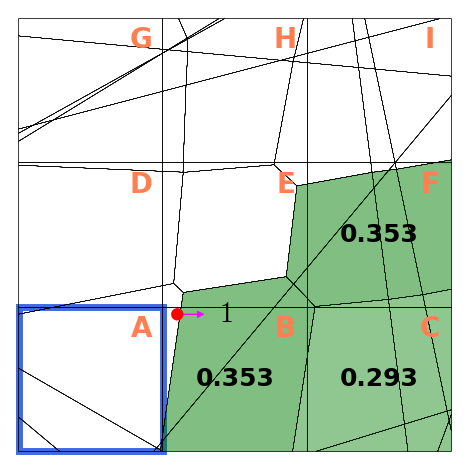}
    \hspace{-0.28cm}
    \includegraphics[width=0.24\textwidth]{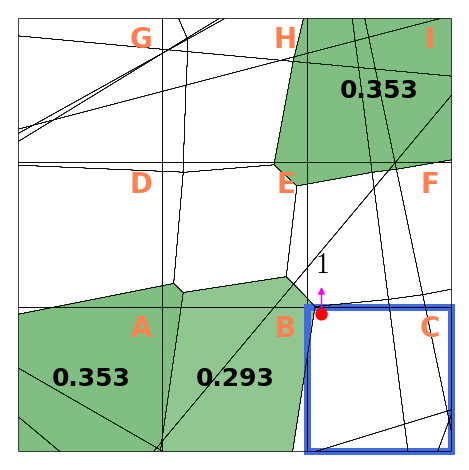}
    \hspace{-0.28cm}
    \includegraphics[width=0.24\textwidth]{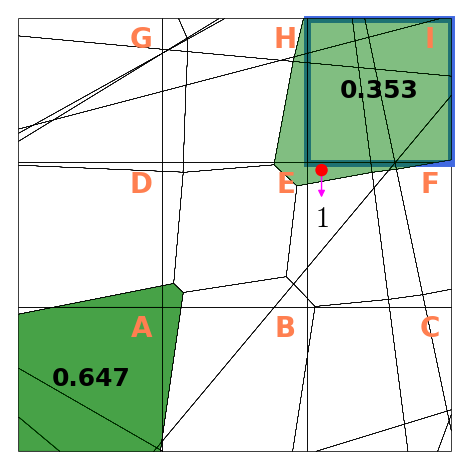}}
    
    \vspace*{-0.6em}
    \caption{Simulations of strategies for the pursuer, showing actual location (red), perceived location (blue), belief of evader location (green) and strategy (pink) for two different NN perception functions: (a) more precise; (b) coarser.}
    \label{fig:pursuit_evasion_strat_both}
\end{figure}

\vspace*{-0.3em}
\startpara{Pursuit-evasion} 
A pursuit-evasion game models a \emph{pursuer} trying to catch an \emph{evader} aiming to avoid capture.
We build a continuous-space variant of the model 
from~\cite{KH-BB-VK-CK:23} inspired by mobile robotics applications~\cite{CHI11,VN08}.
The environment includes the exact position of both agents.
The (partially informed) pursuer uses an NN classifier to perceive its own location,
which maps to one of $3{\times}3$ grid cells.
To showcase the ability of our methodology to assess the performance of realistic NN perception functions,
we train two NNs, the second with a coarser accuracy.

\figref{fig:pursuit_evasion_strat_both} shows simulations 
of strategies synthesised 
for the pursuer, using the two different NNs.
Its actual location is a red dot, and the pink arrows denote the strategy.
Blue squares show the cell that is output by the pursuer's perception function,
and black lines mark the underlying polyhedral decomposition.
The pursuer's belief over the evader's location is shown by the green shading and annotated probabilities;
it initially (correctly) believes that the evader is in cell $C$
and the belief evolves based on the optimal counter-strategy of the evader.


The plots show we can synthesise
non-trivial strategies for agents using NN-based perception in a partially observable setting.
We can also study the impact of a poorly trained perception function.
\figref{fig:pursuit_evasion_strat_both}(b), for the coarser NN,
shows the pursuer repeatedly mis-detecting its location
because the grid cells shapes are poorly approximated,
and subsequently taking incorrect actions.
This is exploited by the evader, leading to considerably worse performance for the pursuer. 

\startpara{Pedestrian-vehicle interaction}
\figref{fig:pedestrian_vehicle_paths} shows several simulations from strategies synthesised for the pedestrian-vehicle example described in \sectref{nscsgs-sect} (\figref{fig:pedestrian_vehicle_collated}),
plotting the position $(x_2,y_2)$ of the pedestrian, relative to the vehicle.
We fix the pedestrian's strategy, to simulate a crossing scenario:
it moves from right to left, i.e., decreasing $x_2$.
The (partially informed) vehicle's perception function predicts the intention of the pedestrian (green/yellow/red = \emph{unlikely}/\emph{likely}/\emph{very likely} to cross), shown as coloured dots.
Above and below each circle, we indicate the acceleration actions taken (black) and current speeds (orange), respectively,
which determine the distance $y_2$ to the pedestrian crossing.

Again, we investigate the feasibility of
generating strategies for agents with realistic NN-based perception.
Here, the goal is to avoid a crash scenario, denoted by the shaded region at the bottom left of the plots.
We find that, in many cases, safe strategies can be synthesised.
\figref{fig:pedestrian_vehicle_paths}(a) shows an example;
notice that the pedestrian intention is detected early.
This is not true in (b) and (c), which show two simulations
from a strategy and starting point
where the perception function results in much later detection;
(c) shows we were then unable to synthesise a strategy for the vehicle that is always safe.

\begin{figure}[t]
    \begin{subfigure}{0.32\textwidth}
    \centering
    \includegraphics[width=1\textwidth]{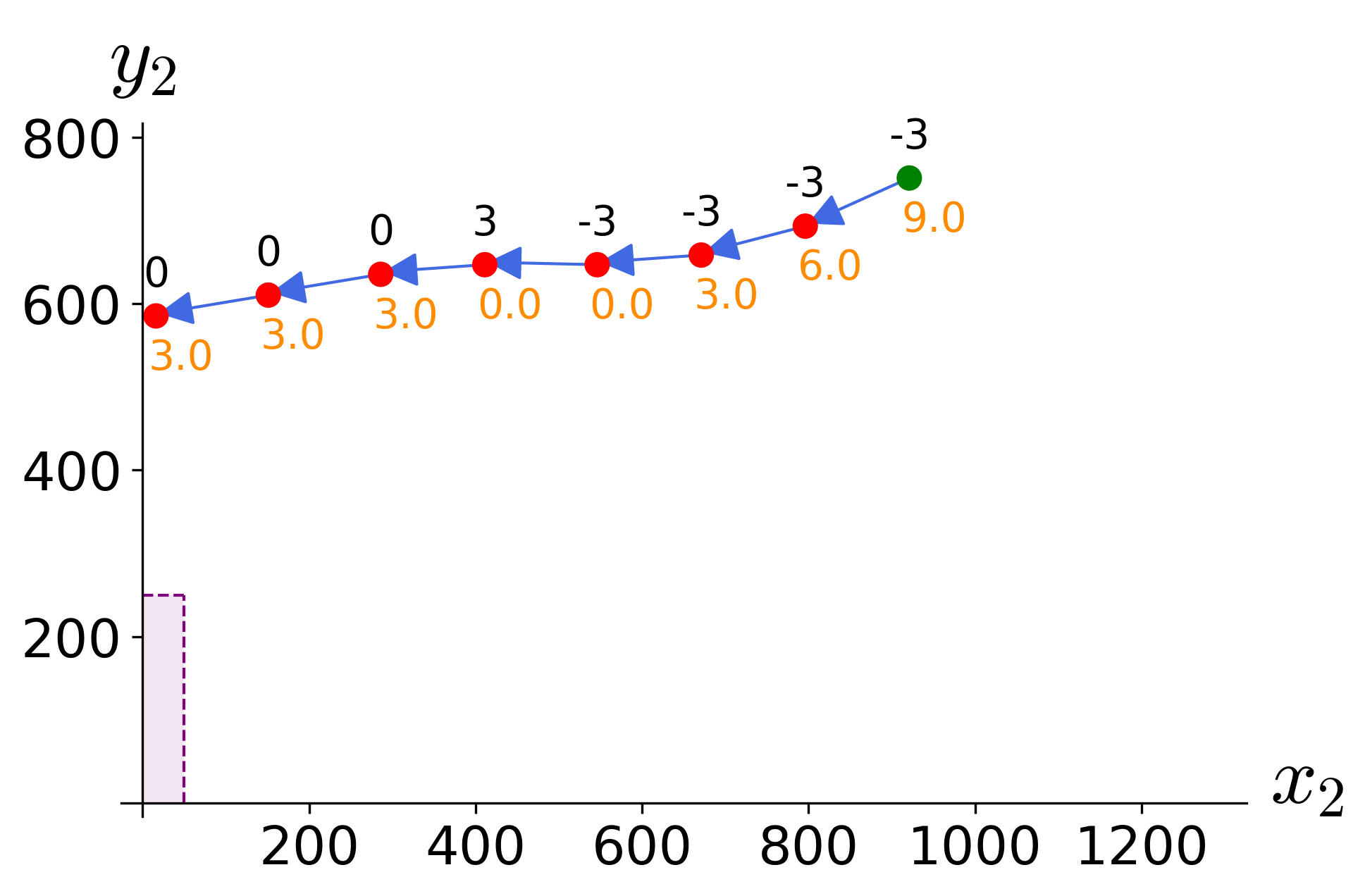}
    \caption{}
    \end{subfigure}
    \begin{subfigure}{0.32\textwidth}
    \centering
    \includegraphics[width=1\textwidth]{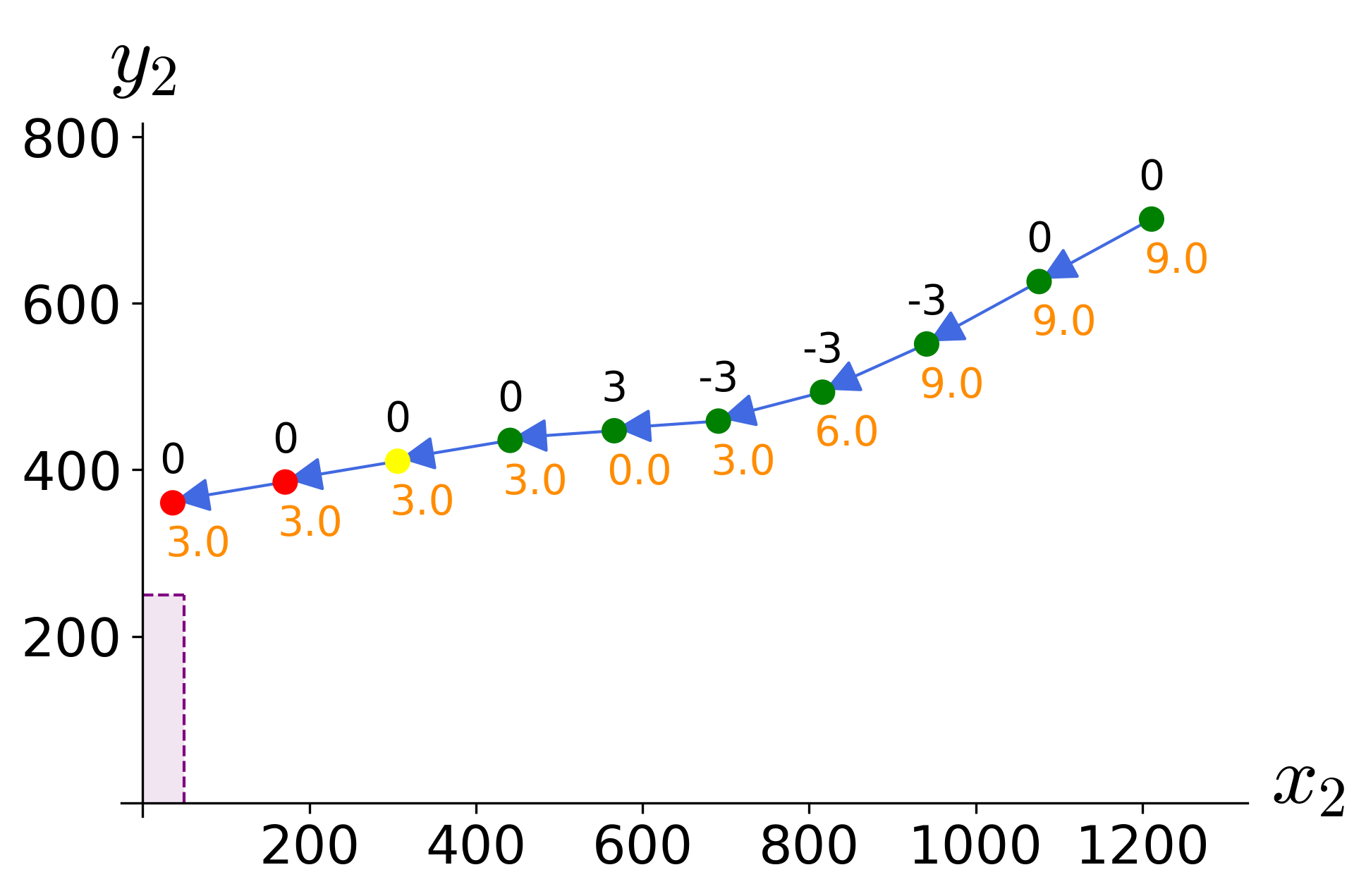}
    \caption{}
    \end{subfigure}
    \begin{subfigure}{0.32\textwidth}
    \centering
    \includegraphics[width=1\textwidth]{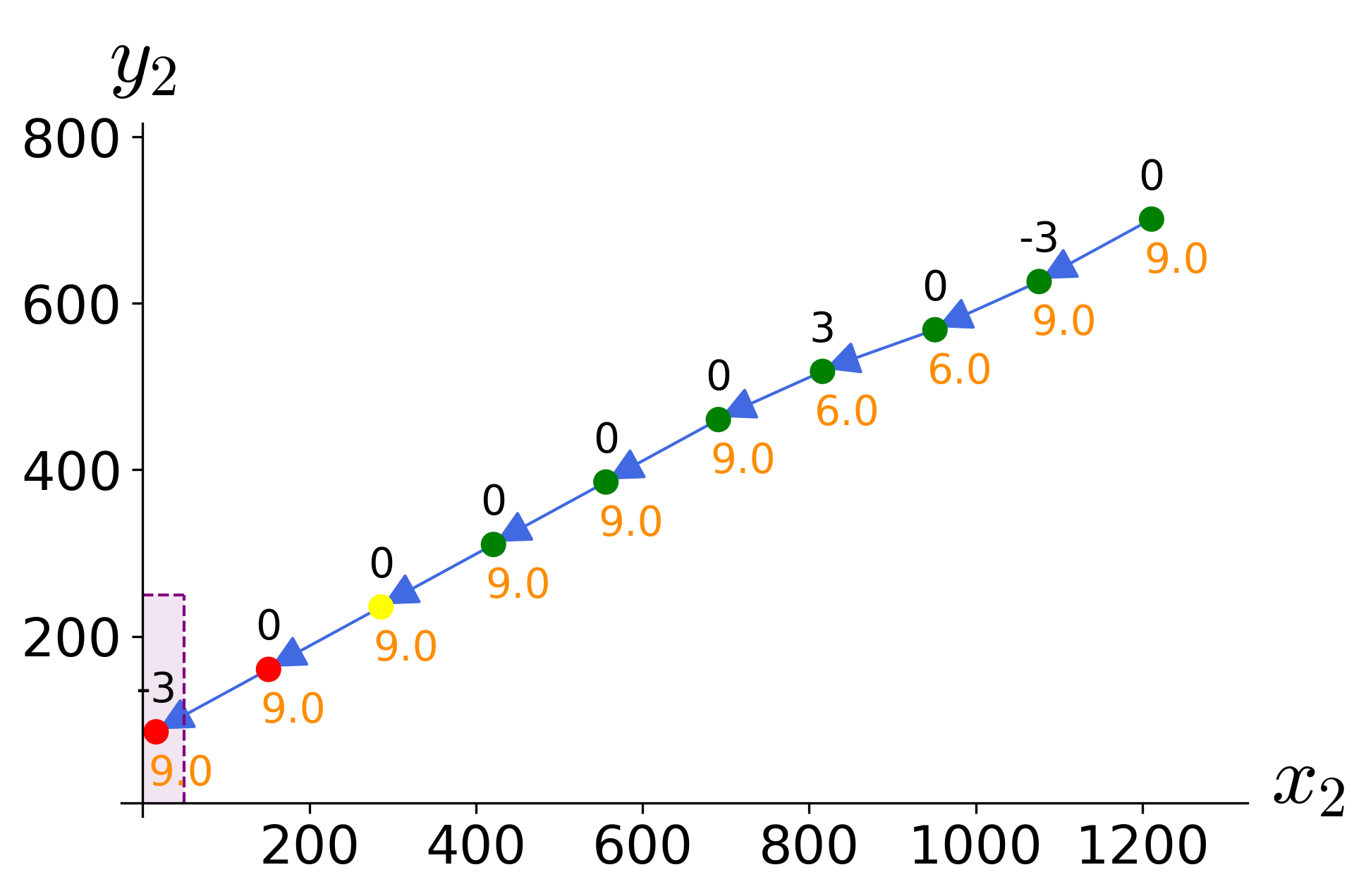}
    \caption{}
    \end{subfigure}
    \vspace{-0.2cm}
    \caption{Simulations of strategies for the vehicle, plotted as the pedestrian's current position $(x_2,y_2)$ relative to it. Also shown: perceived pedestrian intention (green/yellow/red = \emph{unlikely}/\emph{likely}/\emph{very likely} to cross), current speed (orange), acceleration (black) and crash region (shaded purple region).}
    \label{fig:pedestrian_vehicle_paths}
\end{figure}
    

\vspace*{-0.3cm}
\section{Conclusions}
\vspace*{-0.1cm}

\noindent
We have proposed one-sided neuro-symbolic POSGs,
designed to reason formally about partially observable agents
equipped with neural perception mechanisms.
We characterised the value function for discounted infinite-horizon rewards,
and designed, implemented and evaluated a HSVI algorithm for
approximate solution.
Computational complexity is high due to expensive polyhedral operations. Nevertheless, our method provides an important baseline that
can reason about true decision boundaries 
for game models with NN-based perception, against which efficiency improvements can later be benchmarked.
We plan to investigate ways to improve performance,
e.g., merging of adjacent polyhedra 
or Monte-Carlo planning methods, 
and to study restricted cases of two-sided NS-POSGs, e.g., those    with public observations \cite{KH-BB:19}.  


\startpara{Acknowledgements}
This project was funded by the ERC under the European
Union’s Horizon 2020 research and innovation programme
(FUN2MODEL, grant agreement No.834115).

\vfill

\bibliographystyle{splncs04} 
\bibliography{references}

\ifthenelse{\isundefined{\techreport}}{%
}{%
\setlength{\textfloatsep}{12pt}
\clearpage
\appendix


\section{Probability Measure Computations}\label{sec:appendix-probabilities}

\noindent
The main paper omits details of how to compute 
several required quantities in terms of probability measures via closed forms.
We provide these details below.

\startpara{Belief updates}
\sectref{nscsgs-sect} (p. \pageref{nsposgbelief})
discusses belief updates for agent $\agent_1$ of a one-sided NS-POSG.
Given a belief $(s_1, b_1)$, if action $a_1$ is selected by $\agent_1$, $\agent_2$ is \emph{assumed} to take the stage strategy $u_2 \in \mathbb{P}(A_2 \mid S)$ and $s_1'$ is observed, then the updated belief of $\agent_1$  via Bayesian inference is $(s_1', b_1^{s_1,a_1,u_2,s_1'})$ where for $s_E' \in S_E$:
\begin{align}
    b_1^{s_1,a_1,u_2,s_1'}(s_E') = \frac{P( (s_1', s_E') \mid (s_1, b_1), a_1, u_2)}{P( s_1' \mid (s_1, b_1), a_1, u_2 ) } \textup{ if $s_E' \in S_E^{s_1'}$ and $0$ otherwise.} \label{eq:belief-update-u2}
\end{align}
On the other hand, if it is \emph{assumed} that a joint action $a$ is taken, then the updated belief of $\agent_1$ is $(s_1', b_1^{s_1,a,s_1'})$, where for $s_E' \in S_E$:
\begin{equation}\label{eq:belief-update-omega}
  b_1^{s_1,a,s_1'}(s_E') = \frac{P( (s_1', s_E') \mid (s_1, b_1), a )}{P( s_1' \mid (s_1, b_1), a )} \textup{ if $s_E' \in S_E^{s_1'}$
and $0$ otherwise.}  
\end{equation} 
We now show how to compute the probability values given in the belief updates \eqref{eq:belief-update-u2} and \eqref{eq:belief-update-omega}. Recalling that $s_1 = (\loc_1, \per_1)$, for \eqref{eq:belief-update-u2}, using the syntax in \defiref{defi:NS-CSG}, $P( s_1' \mid (s_1, b_1), a_1, u_2 )$ equals
\begin{equation}\label{eq:belief-update-closed2}
    \mbox{$\int_{s_E \in S_E}$} b_1(s_E)  \mbox{$\sum_{a_2 \in A_2}$} u_2(a_2 \mid s_1, s_E)  \mbox{$ \int\nolimits_{s_E' \in S_E}$}
   \delta((s_1, s_E), (a_1, a_2))(s_1',s_E') \textup{d} s_E
\end{equation}
and if $s_E' \in S_E^{s_1'}$, then $P((s_1', s_E') \mid (s_1, b_1), a_1, u_2)$ equals
\begin{equation*}
    \mbox{$\int_{s_E \in S_E}$} b_1(s_E) \mbox{$\sum_{a_2 \in A_2}$}  u_2(a_2 \mid s_1, s_E) \delta((s_1, s_E), (a_1, a_2))(s_1',s_E') \textup{d} s_E \,.
\end{equation*}
For \eqref{eq:belief-update-omega}, we have that $P(s_1' \mid (s_1, b_1), a)$ equals
\begin{equation*}
    \mbox{$\int_{s_E \in S_E}$} b_1(s_E) \mbox{$\int_{s_E' \in S_E}$} 
   \delta((s_1, s_E), a)(s_1',s_E') \textup{d} s_E 
\end{equation*}
and if $s_E' \in S_E^{s_1'}$, then $P( (s_1', s_E') \mid (s_1, b_1), a )$ equals
\begin{equation*}
    \mbox{$\int_{s_E \in S_E}$} b_1(s_E) \delta((s_1, s_E), a)(s_1',s_E') \textup{d} s_E \,.
\end{equation*}

\startpara{Particle-based beliefs}
\sectref{sec:beliefref} discusses computation of particle-based beliefs.
For a particle-based belief $(s_1, b_1)$ with weighted particle set $\{ (s_E^i, \kappa_i) \}_{i=1}^{n_b}$, it follows from \eqref{eq:belief-update-u2} that for belief $b_1^{s_1, a_1, u_2, s_1'}$ we have, for any $s_E' \in S_E$, that $b_1^{s_1, a_1, u_2, s_1'}(s_E')$ equals
\begin{equation}
    \frac{\sum\nolimits_{i = 1}^{n_b}   \kappa_i \left( \mbox{$\sum_{a_2 \in A_2}$}  u_2(a_2 \mid s_1, s_E^i) \delta((s_1, s_E^i), (a_1, a_2)) (s_1', s_E') \right) }{\sum\nolimits_{i = 1}^{n_b} \kappa_i \left(\mbox{$\sum_{a_2 \in A_2}$} u_2(a_2 \mid s_1, s_E^i) \left( \sum\nolimits_{s_E'' \in S_E}
   \delta((s_1, s_E^i), (a_1, a_2)) (s_1', s_E'')\right) \right)  }
\end{equation}
if $s_E' \in S_E^{s_1'}$ and equals 0 otherwise.
Similarly, we can compute $\langle \alpha, (s_1, b_1) \rangle$, $\langle r, (s_1, b_1) \rangle$ and $P(a_1, s_1' \mid (s_1,b_1), u_1, u_2)$ as simple summations.

\section{Image-Split-Preimage-Product (ISPP) Backup}\label{sec:appendix-ISPP}

\noindent
In this section we present the Image-Split-Preimage-Product (ISPP) backup for one-sided NS-POSGs, adapted from the single-agent variant in \cite{RY-GS-GN-DP-MK:23},
as used for a region-by-region backup in line 4 of \algoref{alg:point-based-update-belief} (\sectref{subsec:lb_up_representations}). 

For FCPs $\Phi_1$ and $\Phi_2$ of $S$, we denote by $\Phi_1+\Phi_2$ the smallest FCP 
of $S$ such that $\Phi_1+\Phi_2$ is a refinement of both $\Phi_1$ and $\Phi_2$, which can be obtained by taking all the intersections between regions of $\Phi_1$ and $\Phi_2$. We call the FCP $\Phi$ in \defiref{defi:PWC-func} the \emph{constant-FCP} of $S$ for a PWC function $f \in \mathbb{F}_C(S)$.
Recall from \aspref{asp:transitions-rewards} that $\delta_E$ can be represented as $\sum_{i=1}^{n} \mu_i \delta_E^i$,
where $n \in \mathbb{N}$, $\mu_i\in[0,1]$, $\sum_{i = 1}^{n} \mu_i = 1$
and $\delta_E^i : (\Loc_1 {\times} S_E {\times} A) \to S_E$ are piecewise continuous functions.

\algoref{alg:ISPP-backup} shows the ISPP backup method.
 This method, inspired by \lemaref{lema:new-pwc-alpha}, is to divide a region $\phi$ into subregions where for each subregion $\alpha^{\star}$ is constant. Given any reachable local state $\loc_1'$ under $a$ and continuous transition function $\delta_E^i$, the \emph{image} of $\phi$ under $a$ and $\delta_E^i$ to $\loc_1'$ is divided into \emph{image} regions $\Phi_{\textup{image}}$ such that the states in each region have a unique agent state. Each image region $\phi_{\textup{image}}$ is then split into subregions by a constant-FCP of the PWC function $\alpha^{a_1, s_1^{\phi_{\scale{.75}{\textup{image}}}}}$ by pairwise intersections where $a = (a_1, a_2)$, and thus $\Phi_{\textup{image}}$ is \emph{split}  into a set of refined image regions $\Phi_{\textup{split}}$. An FCP over $\phi$, denoted by $\Phi_{\textup{pre}}$, is constructed by computing the \emph{pre-image} of each $\phi_{\textup{image}} \in \Phi_{\textup{split}}$ to $\phi$. Finally, the \emph{product} of these FCPs  $\Phi_{\textup{pre}}$ for all reachable local states and environment functions and reward FCPs $\{\Phi_R^{a} \mid a \in \bar{A}_1 \times A_2 \}$, denoted $\Phi_{\textup{product}}$, is computed. The following lemma demonstrates that $\alpha^{\star}$ is constant in each region of $\Phi_{\textup{product}}$, and therefore that line 4 of \algoref{alg:point-based-update-belief} 
can be computed by finite backups. 


\begin{algorithm}[t]
\caption{Image-Split-Preimage-Product (ISPP) backup over a region}
\label{alg:ISPP-backup}
\textbf{Input}: region $\phi$, action $\overline{p}_1^{\star}$, PWC functions $\overline{\alpha}^{\star}$
\begin{algorithmic}[1] 
\State $\bar{A}_1 \leftarrow \{ a_1 \in A_1 \mid \overline{p}_1^{\star}(a_1) > 0 \}$
\State $\Loc'_{a} \leftarrow \{ \loc_1' \in \Loc_1 \mid \delta_1(s^\phi_1, a)(\loc_1') > 0 \} $ for $a \in \bar{A}_1 \times A_2$, $\Phi_{\textup{product}} \leftarrow \phi $
\For{$a = (a_1, a_2) \in \bar{A}_1 \times A_2, \loc_1' \in \Loc_{a}', i = 1, \dots, n$} 
\State $\phi_{E}' \leftarrow \{ \delta_E^i(s_E, a) \mid (s^\phi_1, s_E) \in \phi \}$ \Comment{Image}
\State $\Phi_{\textup{image}} \leftarrow \textup{divide } \phi_{E}' \textup{ into regions over } S \textup{ by } \obs_1(\loc'_1, \cdot )$
\State $\Phi_{\textup{split}} \leftarrow \emptyset$ \Comment{Split}
\For{$\phi_{\textup{image}} \in \Phi_{\textup{image}}$}
\State $\Phi_{\alpha} \leftarrow \textup{a constant-FCP of } S \textup{ for the PWC function } \alpha^{\star a_1, s_1^{\phi_{\scale{.75}{\textup{image}}}}}$
\State $\Phi_{\textup{split}} \leftarrow \Phi_{\textup{split}} \cup  \{ \phi_{\textup{image}} \cap \phi' \mid \phi' \in \Phi_{\alpha} \}$
\EndFor
\State $\Phi_{\textup{pre}} \leftarrow \emptyset$  \Comment{Preimage}
\For{$\phi_{\textup{image}} \in \Phi_{\textup{split}}$}
\State $\Phi_{\textup{pre}} \leftarrow \Phi_{\textup{pre}}  \cup \{ (s^\phi_1, s_E) \in \phi \mid \delta_E^{i}(s_E, a ) \in \phi_{\textup{image}} \}$ 
\EndFor
\State $\Phi_{\textup{product}} \leftarrow \{ \phi_1 \cap \phi_2 \mid \phi_1 \in \Phi_{\textup{pre}} \wedge \phi_2 \in \Phi_{\textup{product}} \}$  \Comment{Product}
\EndFor
\State $\Phi_{\textup{product}} \leftarrow \{ \phi_1 \cap \phi_2 \mid \phi_1 \in \Phi_{\textup{product}} \wedge \phi_2 \in \sum_{a \in \bar{A}_1 \times A_2 } \Phi_{R}^{a} \}$
\For{$\phi_{\textup{product}} \in \Phi_{\textup{product}} $} \Comment{Value backup}
\State Take one state $(\hat{s}_1,\hat{s}_E) \in \phi_{\textup{product}}$
\State $\alpha^{\star}(\phi_{\textup{product}}) \leftarrow  f_{\overline{p}_1^{\star}, \overline{\alpha}^{\star}} (\hat{s}_1,\hat{s}_E)$
\EndFor
\State \textbf{return:} $(\Phi_{\textup{product}}, \alpha^{\star})$
\end{algorithmic}
\end{algorithm}

\begin{lema}[ISPP backup]\label{lema:ISPP-backup}
    The FCP $\Phi_{\textup{product}}$ returned by \algoref{alg:ISPP-backup} is a constant-FCP of $\phi$ for $\alpha^{\star}$ and the region-by-region backup for $\alpha^*$ satisfies the line 4 of \algoref{alg:point-based-update-belief}.
\end{lema}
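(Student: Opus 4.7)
The plan is to argue that every summand appearing in $f_{\overline{p}_1^{\star},\overline{\alpha}^{\star},a_2}$ (from \eqref{eq:f-u1-alpha-a2}) is constant on every region of $\Phi_{\textup{product}}$, and then to note that this property is preserved when we take a minimum over the finite action set $A_2$. Once both facts are established, sampling a single representative state from each $\phi_{\textup{product}}$ in the value-backup loop (lines 18--21 of \algoref{alg:ISPP-backup}) recovers the exact value of $f_{\overline{p}_1^{\star},\overline{\alpha}^{\star}}$ everywhere on that region, which matches the assignment in line 4 of \algoref{alg:point-based-update-belief}.

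First I would dispatch the reward contribution. By \aspref{asp:transitions-rewards}, for every joint action $a$ there exists a reward pre-image FCP $\Phi_R^{a}$ on which $r(\cdot,a)$ is constant. Since the final intersection step produces $\Phi_{\textup{product}}$ as a refinement of $\sum_{a \in \bar{A}_1 \times A_2} \Phi_R^{a}$, the term $\sum_{a_1 \in \bar A_1} \overline{p}_1^{\star}(a_1) r(s,(a_1,a_2))$ is constant on each $\phi_{\textup{product}}$ for every choice of $a_2$.

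Next I would handle the transition--value contribution. Using the decomposition $\delta_E = \sum_{i=1}^n \mu_i \delta_E^i$ from \aspref{asp:transitions-rewards} and the finite-branching of $\delta$, for each fixed $(a_1,a_2,\loc_1',i)$ the summand inside $f$ reduces, at a state $s=(s_1^{\phi},s_E)\in\phi$, to $\delta_1(s_1^{\phi},a)(\loc_1')\,\mu_i\, \mathbb{1}[\obs_1(\loc_1',\delta_E^i(s_E,a))=\per_1']\,\alpha^{\star a_1,(\loc_1',\per_1')}(\loc_1',\delta_E^i(s_E,a))$. The Image step forces $\obs_1(\loc_1',\cdot)$ to be constant on each image region $\phi_{\textup{image}}$; the Split step further refines it so that $\alpha^{\star a_1,s_1^{\phi_{\textup{image}}}}$ is also constant on each resulting piece (intersection with the constant-FCP $\Phi_\alpha$); and the Preimage step transports this constancy back to $\phi$ via $\delta_E^i(\cdot,a)$. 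After the Product step across all tuples $(a,\loc_1',i)$, every such summand is simultaneously constant on every $\phi_{\textup{product}}$, hence so is their sum, so $f_{\overline{p}_1^{\star},\overline{\alpha}^{\star},a_2}$ is constant on $\phi_{\textup{product}}$ for each $a_2$. Because $A_2$ is finite, $f_{\overline{p}_1^{\star},\overline{\alpha}^{\star}} = \min_{a_2 \in A_2} f_{\overline{p}_1^{\star},\overline{\alpha}^{\star},a_2}$ is also constant on each $\phi_{\textup{product}}$, establishing that $\Phi_{\textup{product}}$ is a constant-FCP of $\phi$ for $\alpha^{\star}$ and that the representative-state backup yields exactly $f_{\overline{p}_1^{\star},\overline{\alpha}^{\star}}(\hat s_1,\hat s_E)$ as required by line 4 of \algoref{alg:point-based-update-belief}.

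The main obstacle I expect is making the Preimage step rigorous: although the $\delta_E^i$ are only piecewise continuous, the preimage $\{(s_1^\phi,s_E)\in\phi \mid \delta_E^i(s_E,a)\in\phi_{\textup{image}}\}$ must still be a finite union of connected regions compatible with the perception FCP $\Phi_P$ and with the existing $\Phi_{\textup{product}}$. Handling the pieces of $\delta_E^i$ where continuity breaks, and verifying that the pairwise intersections produced in the Product step remain connected regions of $S$, will require the finite-branching hypothesis on $\delta_E$ together with the FCP-closure guaranteed by \aspref{asp:transitions-rewards}; everything else is a bookkeeping argument over finitely many tuples $(a,\loc_1',i,a_2)$.
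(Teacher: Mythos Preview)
Your proposal is correct and follows essentially the same route as the paper's proof: both argue that the Image--Split--Preimage--Product steps ensure every state in a region of $\Phi_{\textup{product}}$ reaches the same regions of the constant-FCPs for the $\alpha^{\star a_1,s_1'}$, and that the reward contribution is handled by the final intersection with $\sum_a \Phi_R^a$. The paper's argument is terser---it invokes \aspref{asp:transitions-rewards} to assert the existence of the pre-image FCP of $\Phi+\Phi_P$ rather than tracing through each $(a,\loc_1',i)$ tuple as you do---so the connectedness concern you flag is simply absorbed into that assumption rather than verified operationally.
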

\begin{proof}
     For the PWC $\alpha$-functions in the input of \algoref{alg:ISPP-backup}, if $\Phi_{a_1, s_1'}$ is an FCP of $S$ for $\alpha^{a_1,s_1'}$, then let $\Phi = \sum_{a_1 \in \bar{A}_1, s_1' \in S_1} \Phi_{a_1, s_1'}$, i.e., $\Phi$ is the smallest refinement of these FCPs.
    
    According to \aspref{asp:transitions-rewards}, there exists a preimage-FCP of $\Phi$ for each joint action $a$. Through the image, split, pre-image and product operations of \algoref{alg:ISPP-backup}, all the states in any region $\phi' \in \Phi_{\textup{product}}$ reach the same regions of $\Phi$. Since each $\alpha$-function $\alpha^{a_1, s_1'}$ is constant over each region in $\Phi$, all states in $\phi'$ have the same backup value from $\alpha^{a_1, s_1'}$ for $a_1 \in \bar{A}_1$ and $s_1' \in S_1$. This implies that $\Phi_{\textup{product}}$ is the product of the preimage-FCPs of $\Phi$ for all $a \in \bar{A}_1 \times A_2$. Since the value backup in line 4 of \algoref{alg:point-based-update-belief} 
    is used for each region in $\Phi_{\textup{product}}$ and the image is from the region $\phi$, then $\Phi_{\textup{product}}$ is a constant-FCP of $\phi$ for $\alpha^{\star}$, and thus the value backup in line 4 of \algoref{alg:point-based-update-belief} for $\alpha^{\star}$ is achieved by considering the regions of $\Phi_{\textup{product}}$. \qed
\end{proof}



\section{Linear Programs}\label{sec:appendix-dual-LPs}

\noindent
In this section we provide the linear programs (LPs) and their dual versions,
omitted for space reasons in the main paper,
in particular for the stage games $[TV_{\mathit{lb}}^{\Gamma}](s_1, b_1)$ and $[TV_{\mathit{ub}}^{\Upsilon}](s_1, b_1)$.
Consider a particle-based belief $(s_1, b_1)$ represented by $\{ (s_E^i, \kappa_i) \}_{i=1}^{n_b}$.

\startpara{Stage game over the lower bound} Using \lemaref{lema:LP-minimax-P-PWLC} and its extended version \lemaref{lema:LP-minimax-P-PWLC-extended}, the LP \eqref{eq:LP-minimax-2} for the stage game $[TV_{\mathit{lb}}^{\Gamma}](s_1, b_1)$ is simplified to the LP over the variables:
\begin{itemize}
\item
$(v_{s_E^i})_{i=1}^{n_b}$;
\item
$(\lambda^{a_1, s_1'}_{\alpha} )_{(a_1,s_1') \in A_1 \times S_1,\alpha \in \Gamma}$;
\item
$(p^{a_1})_{a_1 \in A_1}$;
\end{itemize}
and is given by {\rm maximise} $\sum_{i=1}^{n_b} \kappa_i v_{s_E^i}$ subject to:
    \begin{align}
         v_{s_E^i} & \; \leq \mbox{$\sum_{a_1 \in A_1}$} p^{a_1} r((s_1, s_E^i),(a_1,a_2))  + \beta \mbox{$\sum_{(a_1, s_1') \in A_1 \times S_1, s_E' \in S_E}$}  \nonumber \\
         & \qquad \delta((s_1, s_E^i), (a_1, a_2))(s_1', s_E') \mbox{$\sum_{\alpha \in \Gamma} \lambda_{\alpha}^{a_1, s_1'}$} \alpha (s_1', s_E') \nonumber \\
         \lambda^{a_1, s_1'}_{\alpha} & \; \ge 0 \nonumber \\
         p^{a_1} & \; = \mbox{$\sum_{\alpha \in \Gamma}$} \lambda_{\alpha}^{a_1, s_1'}  \nonumber \\
         \mbox{$ \sum_{a_1 \in A_1} $} p^{a_1} & \; = 1 \label{eq:lower-bound-LP-particle}
    \end{align} 
    for all $1\leq i \leq n_b$, $a_2 \in A_2$, $(a_1,s_1') \in A_1 \times S_1$ and $\alpha \in \Gamma$. 

\vskip9pt\noindent
The dual of LP problem \eqref{eq:lower-bound-LP-particle} is over the variables:
\begin{itemize}
\item
$v$;
\item
$(v_{a_1, s_1'} )_{(a_1, s_1') \in A_1 \times S_1}$;
\item
$(p_{a_2}^{s_1, s_E^i})_{a_2 \in A_2, 1 \leq i \leq n_b}$;
\end{itemize}
and is given by \mbox{\rm minimise} $v$ subject to:
    \begin{align}
         v & \;  \ge \mbox{$\sum_{i=1}^{n_b}\sum_{a_2 \in A_2}$} p_{a_2}^{s_1, s_E^i} r((s_1, s_E^i),(a_1,a_2)) + \beta \mbox{$\sum_{s_1' \in S_1}$} v_{a_1, s_1'} \nonumber\\
         v_{a_1, s_1'} & \; \ge \mbox{$\sum_{i=1}^{n_b}\sum_{a_2 \in A_2}$} p_{a_2}^{s_1, s_E^i} \delta((s_1, s_E^i), (a_1, a_2))(s_1', s_E') \alpha (s_1', s_E')  \nonumber \\
     \mbox{$\sum_{a_2 \in A_2} $}  p_{a_2}^{s_1, s_E^i} & \; = \kappa_i  \label{eq:lower-bound-dual-LP-particle}
    \end{align} 
    for all $a_1 \in A_1$, $(a_1, s_1') \in A_1 \times S_1$, $\alpha \in \Gamma$ and $1\leq i \leq n_b$.

\vskip9pt\noindent
By solving \eqref{eq:lower-bound-LP-particle} and \eqref{eq:lower-bound-dual-LP-particle}, we obtain the minimax strategy profile in the stage game $[TV_{\mathit{lb}}^{\Gamma}](s_1, b_1)$: $ u_1^{\mathit{lb}}(a_1) = p^{\star a_1}$ for $a_1 \in A_1$ and $u_2^{\mathit{lb}}(a_2 \mid s_1, s_E^i) = p_{a_2}^{\star s_1, s_E^i} / \kappa_i$ for $1 \leq i \leq n_b$ and $a_2 \in A_2$.


\vskip9pt\noindent
\startpara{Stage game over the upper bound} The LP for the stage game $[TV_{\mathit{ub}}^{\Upsilon}](s_1, b_1)$ is over the variables:
\begin{itemize}
\item
$v$;
\item
$( c_{s_{\scale{.75}{E}}'}^{a_1, s_1'} )_{(a_1, s_1') \in A_1 \times S_1 , s_{\scale{.75}{E}}' \in S_{\scale{.75}{E}}^{a_{\scale{.75}{1}},s_{\scale{.75}{1}}'}}$;
\item
$(\lambda_k^{a_1, s_1'} )_{(a_1, s_1') \in A_1 \times S_1, k \in I_{s_{\scale{.75}{1}}'}}$;
\item
$(p_{a_2}^{s_1, s_E^i} )_{1 \leq i \leq n_b, a_2 \in A_2}$
\end{itemize}
and is given by \mbox{\rm minimise} $v$ subject to:
    \begin{align}
         v & \; \ge  \mbox{$\sum_{i=1}^{n_b}\sum_{a_2 \in A_2}$}  \kappa_i p_{a_2}^{s_1, s_E^i} r((s_1, s_E^i),(a_1, a_2))   \nonumber \\
         & \qquad +  \beta \mbox{$\sum_{s_1' \in S_1} \sum_{k \in I_{s'_{\scale{.75}{1}}}}$} \lambda_k^{a_1, s_1'} y_k  + \mbox{$\frac{1}{2}$} \beta (U - L) \mbox{$\sum\nolimits_{s_1' \in S_1} \sum_{s_E' \in S_E^{a_{\scale{.75}{1}},s_{\scale{.75}{1}}'}}$} c_{s_{\scale{.75}{E}}'}^{a_1, s_1'}  \nonumber \\
         c_{s_{\scale{.75}{E}}'}^{a_1, s_1'} & \; \ge \Big| \mbox{$\sum_{i=1}^{n_b}\sum_{a_2 \in A_2}$}   \kappa_i p_{a_2}^{s_1, s_E^i} \delta((s_1, s_E^i), (a_1,a_2))(s_1', s_E') \nonumber  \\ 
         & \qquad - \mbox{$\sum\nolimits_{k \in I_{s'_{\scale{.75}{1}}}}$}  \lambda_k^{a_1, s_1'} P(s_E'; b_1^k) \Big| \nonumber \\
         \mbox{$\sum_{k \in I_{s_{\scale{.75}{1}}'}}$} \lambda_k^{a_1, s_1'} & \;  = \mbox{$\sum_{i=1}^{n_b}\sum_{a_2 \in A_2, s_E' \in S_E}$} \kappa_i  p_{a_2}^{s_1, s_E^i} \delta((s_1, s_E^i), (a_1, a_2))(s_1', s_E')  \nonumber \\
         \lambda_k^{a_1, s_1'} & \; \ge 0  \nonumber \\
         p_{a_2}^{s_1, s_E^i} & \; \ge 0  \nonumber \\
         \mbox{$\sum_{a_2 \in A_2}$} p_{a_2}^{s_1, s_E^i} & \; = 1
        \label{eq:LP-minimax-upper}
    \end{align} 
for all $a_1 \in A_1$, $(a_1, s_1') \in A_1 \times S_1$ and $s_E' \in S_E^{a_1, s_1'}$, $k \in I_{s'_{\scale{.75}{1}}}$, $a_2 \in A_2$ and $1 \leq i \leq n_b$  where $S_E^{a_{\scale{.75}{1}},s_{\scale{.75}{1}}'} = \{ s_E' \in S_E \mid  \mbox{$\sum_{a_2 \in A_2}$} b_1^{s_1, a_1, a_2, s_1'}(s_E') + \sum_{k \in I_{s_{\scale{.75}{1}}'}} b_1^k(s_E') > 0 \}$.

\vskip9pt\noindent
The dual of LP problem \eqref{eq:LP-minimax-upper} is the following LP problem over the variables:
\begin{itemize}
\item
$(v_{s_E^i})_{1 \leq i \leq n_b}$;
\item
$(v_{a_1, s_1'})_{(a_1, s_1') \in A_1 \times S_1}$;
\item
$(p^{a_1})_{a_1 \in A_1}$;
\item
$(d_{a_1, s_1', s_E'} )_{(a_1, s_1') \in A_1 \times S_1 , s_E' \in S_E^{a_1, s_1'}}$;
\item
$( e_{a_1, s_1', s_E'} )_{(a_1, s_1') \in A_1 \times S_1 , s_E' \in S_E^{a_1, s_1'}}$;
\end{itemize}
and is given by {\rm maximise} $\sum_{i=1}^{n_b} \kappa_i v_{s_E^i}$ subject to:
\begin{align}
         v_{s_E^i} & \; \leq  \mbox{$\sum_{a_1 \in A_1}$} p^{a_1} r((s_1, s_E^i),(a_1, a_2)) +  \beta \mbox{$\sum_{a_1 \in A_1, s_1' \in S_1, s_E' \in S_E^{a_1, s_1'}}$}  \nonumber \\
         & \qquad \delta((s_1, s_E^i), (a_1,a_2))(s_1', s_E')   (v_{a_1, s_1'} + d_{a_1, s_1', s_E'} - e_{a_1, s_1', s_E'})  \nonumber \\
         v_{a_1, s_1'} & \; \leq y_k p^{a_1} - \mbox{$\sum_{s_E' \in S_E^{a_1, s_1'}}$} (d_{a_1, s_1', s_E'} - e_{a_1, s_1', s_E'}) P(s_E'; b_1^k) \nonumber  \\ 
         d_{a_1, s_1', s_E'} - e_{a_1, s_1', s_E'}  &\; \leq  \mbox{$\frac{1}{2}$} (U - L)  \nonumber \\
         d_{a_1, s_1', s_E'} &\; \ge 0 \nonumber \\
         e_{a_1, s_1', s_E'} &\; \ge 0 \nonumber \\
         p^{a_1} &\; \ge 0 \nonumber \\
          \mbox{$\sum_{a_1 \in A_1}$} p^{a_1} &\; = 1
        \label{eq:LP-minimax-upper-dual}
    \end{align}
    for all $a_2 \in A_2$ and $1 \leq i \leq n_b$, $(a_1, s_1') \in A_1 \times S_1$, $k \in I_{s'_{\scale{.75}{1}}}$ and $s_E' \in S_E^{a_1, s_1'}$ 
    where  $S_E^{a_1, s_1'} = \{ s_E' \in S_E \mid  \exists 1 \leq i \leq n_b .\, \exists a_2 \in A_2 . \, \delta((s_1, s_E^i), (a_1,a_2))(s_1', s_E') > 0 \}$. 

    \vskip9pt\noindent
    By solving \eqref{eq:LP-minimax-upper} and \eqref{eq:LP-minimax-upper-dual}, we obtain the minimax strategy profile in stage game $[TV_{\mathit{ub}}^{\Upsilon}](s_1, b_1)$: $ u_1^{\mathit{ub}}(a_1) = p^{\star a_1}$ for $a_1 \in A_1$ and $u_2^{\mathit{ub}}(a_2 \mid s_1, s_E^i) = p_{a_2}^{\star s_1, s_E^i}$ for $1 \leq i \leq n_b$ and $a_2 \in A_2$.
    

\section{Proofs of Main Results}\label{sec:appendix-proofs}

\noindent
We provide here the proofs of the results from the main paper.


\begin{proof}[\textbf{Proof of \thomref{thom:convexity-continuity}}]
    Given $s_1 \in S_1$, we first prove that $V^{\star}(s_1, \cdot)$ is convex and continuous. For any $b_1 \in \mathbb{P}(S_E)$, since $V^{\star}(s_1, b_1)$ is the lower value of $Y$, then $V^{\star}(s_1, b_1) = \sup\nolimits_{\sigma_1 \in \Sigma_1}\inf\nolimits_{\sigma_2 \in \Sigma_2} \mathbb{E}_{(s_1,b_1)}^{\sigma_1,\sigma_2}[Y]$. We define a payoff function $V_{\sigma_1} : \mathbb{P}(S_E) \to \mathbb{R}$ to be the objective of the sup optimisation in the lower value such that for $b_1 \in \mathbb{P}(S_E)$ we have $V_{\sigma_1}(s_1, b_1) = \inf\nolimits_{\sigma_2 \in \Sigma_2} \mathbb{E}_{(s_1,b_1)}^{\sigma_1,\sigma_2}[Y]$. Note that the value $V_{\sigma_1}(s_1, b_1)$ is the expected reward of $\sigma_1$ against the best-response strategy $\sigma_2$, from the initial belief $(s_1, b_1)$. Since $\agent_2$ can observe the true initial state $(s_1, s_E)$ where $s_E$ is sampled from $b_1$, and thus can play a state-wise best-response to each initial state $(s_1, s_E)$, the value $V_{\sigma_1}(s_1, b_1)$ can be rewritten as:
    \begin{equation} \label{eq:V-sigma1}
        V_{\sigma_1}(s_1, b_1) = \mbox{$\int\nolimits_{s_E \in S_E}$} b_1(s_E) \big( \inf\nolimits_{\sigma_2 \in \Sigma_2} \mathbb{E}_{(s_1,s_E)}^{\sigma_1,\sigma_2}[Y] \big) \textup{d} s_E \,.
    \end{equation}
    Thus, $V_{\sigma_1}(s_1, \cdot)$ is a linear function in the belief $b_1 \in \mathbb{P}(S_E)$. Since $V^{\star}(s_1, b_1) = \sup\nolimits_{\sigma_1 \in \Sigma_1} V_{\sigma_1}(s_1, b_1)$ and any point-wise supremum of linear functions is convex and continuous (it follows from the convexity and continuity in the discrete case, see \cite[Proposition 5.9]{KH-BB-VK-CK:23}), we can conclude that $V^{\star}(s_1, \cdot)$ is convex and continuous. 

    Regarding the inequality in \thomref{thom:convexity-continuity}, for any $b_1, b_1' \in  \mathbb{P}(S_E)$, we have: 
    \begin{equation}\label{lip1-eqn}
    \mbox{$ \int_{s_E \in S_E^{s_1}} $} b_1(s_E) \textup{d} s_E = \mbox{$ \int_{s_E \in S_E^{s_1}} $} b_1'(s_E) \textup{d} s_E = 1 \, .
    \end{equation}
    Now, letting $S_E^{>} = \{ s_E \in S_E^{s_1} \mid b_1(s_E) - b_1'(s_E) > 0 \}$ and $S_E^{\leq} = \{ s_E \in S_E^{s_1} \mid b_1(s_E) - b_1'(s_E) \leq 0 \}$, rearranging \eqref{lip1-eqn} and using the fact that $S_E^{>} \cup S_E^{\leq} =S_E^{s_1}$ it follows that:
    \begin{align}
    \mbox{$\int_{s_E \in S_E^{\leq}} $} (b_1(s_E) - b_1'(s_E)) \textup{d} s_E = - \mbox{$ \int_{s_E \in S_E^{>}} $} (b_1(s_E) - b_1'(s_E)) \textup{d} s_E \nonumber
    \end{align}
    from which we have:
    \begin{align}
        &\mbox{$\int_{s_E \in S_E^{s_1}}$} |b_1(s_E) - b_1'(s_E)| \textup{d} s_E = \mbox{$ \int_{s_E \in S_E^{>} \cup S_E^{\leq}} $} |b_1(s_E) - b_1'(s_E)| \textup{d} s_E \nonumber  \\
        & = \mbox{$ \int_{s_E \in S_E^{>}} $} (b_1(s_E) - b_1'(s_E)) \textup{d} s_E - \mbox{$ \int_{s_E \in S_E^{\leq}} $} (b_1(s_E) - b_1'(s_E)) \textup{d} s_E  \nonumber \\
        & = 2 \mbox{$ \int_{s_E \in S_E^{>}} $} (b_1(s_E) - b_1'(s_E)) \textup{d} s_E    \label{eq:equvalent-difference}
    \end{align}
    and thus, using \eqref{eq:equvalent-difference} and \cite[Theorem 2]{RY-GS-GN-DP-MK:23}, the inequality in \thomref{thom:convexity-continuity} holds. \qed
\end{proof}


\begin{theorem}[Operator equivalence and fixed point, extended version of \thomref{thom:operator-equivalence}]
If $\Gamma \subseteq \mathbb{F}(S)$ and $V(s_1, b_1) = \sup_{\alpha \in \Gamma} \langle \alpha, (s_1,b _1) \rangle $ for $(s_1, b_1) \in S_B$, then the minimax operator $T$ and maxsup operator $T_\Gamma$ are equivalent, i.e., for $(s_1, b_1) \in S_B$ we have:
    \begin{align}
         \lefteqn{[TV](s_1, b_1) =  \mbox{$\max_{u_1\in \mathbb{P}(A_1)}$} \mbox{$\min_{u_2\in \mathbb{P}(A_2 \mid S)} $} \mathbb{E}_{(s_1,b_1),u_1,u_2} [r(s,a)]} \nonumber \\
        & \quad + \beta \mbox{$\sum_{a_1 \in A_1} \sum_{s_1'\in S_1}$} P( (a_1, s_1') \mid (s_1, b_1), u_1, u_2 ) V(s'_1, b_1^{s_1,a_1, u_2, s_1'})  \label{eq:operator-equivalence-1} \\
         & = \mbox{$\min_{u_2 \in \mathbb{P}(A_2 \mid S)} $}  \mbox{$ \max_{u_1\in \mathbb{P}(A_1)} $} \mathbb{E}_{(s_1,b_1),u_1,u_2} [r(s,a)] \nonumber \\
        &  \quad + \beta \mbox{$\sum_{a_1 \in A_1} \sum_{s_1'\in S_1}$} P( (a_1, s_1') \mid (s_1, b_1), u_1, u_2 ) V(s'_1, b_1^{s_1,a_1, u_2, s_1'})    \label{eq:operator-equivalence-2}
        \\
         & = \mbox{$ \max_{u_1\in \mathbb{P}(A_1)} $}  \mbox{$ \sup_{\overline{\alpha} \in \Gamma^{A_1 \times S_1} } $} \langle f_{u_1,\overline{\alpha}}, (s_1, b_1) \rangle \label{eq:operator-equivalence-3} \\
         & = [T_\Gamma V](s_1,b_1) \, . \nonumber
    \end{align}
    Moreover, the unique fixed point of $T$ and $T_\Gamma$ is $V^{\star}$. 
\end{theorem}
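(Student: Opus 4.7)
The plan is to identify the three forms \eqref{eq:operator-equivalence-1} through \eqref{eq:operator-equivalence-3} via a single chain of rewriting, then obtain the fixed-point claim from a contraction argument. Starting from \eqref{eq:operator-equivalence-1}, I would substitute $V(s_1', b_1^{s_1,a_1,u_2,s_1'}) = \sup_{\alpha\in\Gamma} \langle \alpha, (s_1', b_1^{s_1,a_1,u_2,s_1'})\rangle$ together with the closed-form belief updates recalled in \appxref{sec:appendix-probabilities}. Because the normalising factor $P(s_1' \mid (s_1, b_1), a_1, u_2)$ in the denominator of the belief update cancels the probability multiplied in front, each future-value summand becomes affine in $u_2$. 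Combining with the reward term and applying Fubini on the $s_E, s_E'$ integrals (using finite branching of $\delta$ from \aspref{asp:transitions-rewards}), the payoff rewrites as $\int b_1(s_E)\sum_{a_2} u_2(a_2 \mid s_1, s_E)\, f_{u_1,\overline{\alpha},a_2}(s_1,s_E)\,\mathrm{d}s_E$, with $f_{u_1,\overline{\alpha},a_2}$ as in \eqref{eq:f-u1-alpha-a2}. Since each summand indexed by $(a_1, s_1')$ carries its own independent sup, the identity $\sum_i \sup_{x_i} F_i = \sup_{\overline{x}} \sum_i F_i$ promotes these to a single sup over $\overline{\alpha} \in \Gamma^{A_1\times S_1}$; passing to convex hulls in the definition of $\Gamma^{A_1\times S_1}$ does not change the value because $\langle \cdot, \cdot\rangle$ is linear in $\alpha$.

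Writing $H(u_1,\overline{\alpha},u_2)$ for the resulting integral (bilinear in $(\overline{\alpha},u_2)$), \eqref{eq:operator-equivalence-1} becomes $\max_{u_1}\min_{u_2}\sup_{\overline{\alpha}} H$. The heart of the proof is commuting $\min_{u_2}$ with $\sup_{\overline{\alpha}}$. For fixed $\overline{\alpha}$, linearity of $H$ in $u_2$ and the pointwise factorisation of $\mathbb{P}(A_2\mid S)$ yield $\min_{u_2} H = \int b_1(s_E) \min_{a_2} f_{u_1,\overline{\alpha},a_2}(s_1, s_E)\,\mathrm{d}s_E = \langle f_{u_1,\overline{\alpha}}, (s_1, b_1)\rangle$, attained by the pure best-response $s_E \mapsto \argmin_{a_2} f_{u_1,\overline{\alpha},a_2}(s_1,s_E)$. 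For the reverse inequality I would invoke Sion's minimax theorem: $u_2(\cdot \mid s_1, s_E)$ ranges over the compact simplex $\mathbb{P}(A_2)$ pointwise, $\overline{\alpha}$ over the convex set $\mathrm{conv}(\Gamma)^{A_1\times S_1}$, and $H$ is bilinear; boundedness of the relevant $\alpha$-functions in $[L,U]$ (via \thomref{thom:convexity-continuity}) supplies the required regularity, giving $\min_{u_2}\sup_{\overline{\alpha}} H = \sup_{\overline{\alpha}}\min_{u_2} H$ and hence \eqref{eq:operator-equivalence-3}. The identity \eqref{eq:operator-equivalence-2} then follows by the same swap applied to the outer $\max_{u_1}\min_{u_2}$, now using compactness of $\mathbb{P}(A_1)$ and the concave--convex structure of the payoff that is now visible.

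For the fixed-point claim, I would show that $T$ is a $\beta$-contraction on $(\mathbb{F}(S_B), \|\cdot\|_\infty)$: the standard estimate $|\max\min f - \max\min g|\le \|f-g\|_\infty$ together with $\sum_{a_1,s_1'} P(a_1, s_1' \mid \cdot) \le 1$ gives $\|TV - TV'\|_\infty \le \beta \|V - V'\|_\infty$. Banach's theorem supplies a unique fixed point, which by operator equivalence is also the unique fixed point of $T_\Gamma$. That $V^\star$ is this fixed point follows from one-step unrolling of the discounted reward under the assumed determinacy of $\csg$: splitting $Y$ into the immediate reward and the shifted tail and interchanging $\sup/\inf$ with the expectation produces a Bellman-style identity that coincides with \eqref{eq:operator-equivalence-1}. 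The main obstacle is the minimax interchange in the second paragraph: the space $\mathbb{P}(A_2\mid S)$ of Markov kernels is infinite-dimensional and not obviously compact, and $\Gamma$ may itself be unbounded; the key technical move is the reduction to state-wise optimization over the compact simplex $\mathbb{P}(A_2)$ combined with the $[L,U]$-boundedness restriction on $\alpha$-functions, which is where the continuous-state analysis genuinely departs from the discrete treatment of \cite{KH-BB-VK-CK:23}.
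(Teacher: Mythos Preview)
Your plan tracks the paper's proof closely: rewrite $V$ via the belief-update formulas so the normalising factor cancels, promote the per-$(a_1,s_1')$ suprema to a single $\sup_{\overline{\alpha}}$, swap $\min_{u_2}$ and $\sup_{\overline{\alpha}}$ via Sion using bilinearity of the resulting $H$, reduce $\min_{u_2}$ to pointwise $\min_{a_2}$, and conclude with a $\beta$-contraction and Banach. Two places where the paper differs slightly are worth knowing. First, the paper establishes \eqref{eq:operator-equivalence-1}$=$\eqref{eq:operator-equivalence-2} \emph{before} the Sion step, by checking directly that $J(u_1,u_2)$ is concave--convex (linear in $u_1$; a pointwise supremum of linear functions, hence convex, in $u_2$) and applying von Neumann's theorem; your ordering is the reverse but the verification is the same. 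Second, to show $V^\star=TV^\star$ the paper does not unroll $Y$ directly but instead produces an explicit $\Gamma$ for $V^\star$, namely $\Gamma=\{\inf_{\sigma_2}\mathbb{E}^{\sigma_1,\sigma_2}_{(\cdot)}[Y]\mid\sigma_1\in\Sigma_1\}$ (the state-wise values against best responses, already seen in the proof of \thomref{thom:convexity-continuity}), and then invokes the operator equivalence just proved together with \cite[Lemma~6.7]{KH-BB-VK-CK:23}; your direct unrolling is plausible but would need to justify carefully why the $\sup_{\sigma_1}\inf_{\sigma_2}$ over full behaviour strategies decomposes into a stage max--min and a continuation value.

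One correction: \thomref{thom:convexity-continuity} bounds $V^\star$, not the elements of an arbitrary $\Gamma\subseteq\mathbb{F}(S)$, so your $[L,U]$-boundedness of $\alpha$-functions is neither available nor needed. The paper applies Sion with compactness asserted on the $u_2$ side ($\mathbb{P}(A_2\mid S)$ taken compact and convex) and only convexity on the $\overline{\alpha}$ side; Sion requires compactness of just one factor, so no restriction on $\Gamma$ is imposed.
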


\begin{proof}
    Considering any $V \in \mathbb{F}(S_B)$ and $\Gamma\subseteq \mathbb{F}(S)$ such that:
    \begin{equation}\label{v-assumption}
    V(s_1, b_1) = \sup\nolimits_{\alpha \in \Gamma} \langle \alpha, (s_1,b _1) \rangle \quad \mbox{for all $(s_1, b_1) \in S_B$}.
    \end{equation}
    {\bf Operator equivalence.} We first show that the operators $T$ and $T_\Gamma$ are equivalent.  
    We first define a payoff function $J: \mathbb{P}(A_1) \times \mathbb{P}(A_2 \mid S) \to \mathbb{R}$ to be the objective of the maximin and minimax optimisation in \eqref{eq:operator-equivalence-1} and \eqref{eq:operator-equivalence-2} such that for $u_1 \in \mathbb{P}(A_1)$ and $u_2 \in \mathbb{P}(A_2 \mid S)$:
    \begin{align} 
        & J(u_1, u_2) \; = \; \mathbb{E}_{(s_1,b_1),u_1,u_2} [r(s,a)] + \nonumber \\
        & \qquad \qquad \beta \mbox{$\sum_{a_1 \in A_1} \sum_{s_1'\in S_1}$} P(a_1, s_1' \mid (s_1, b_1), u_1, u_2 ) V(s'_1, b_1^{s_1,a_1, u_2, s_1'}) \, . \label{eq:J-function}
    \end{align}
    Now for any belief $(s_1, b_1) \in S_B$ such that $s_1 = (\loc_1, \per_1)$, action $a_1 \in A_1$, agent state $s_1' \in S_1$ and stage strategy $u_2 \in \mathbb{P}(A_2 \mid S)$, letting $P_1 \triangleq P(s_1' \mid (s_1, b_1), a_1, u_2)$ by \eqref{v-assumption} we have:
    \begin{align}
    \lefteqn{V(s'_1, b_1^{s_1,a_1, u_2, s_1'}) \; = \; \mbox{$ \sup_{\alpha \in \Gamma} $} \langle \alpha, (s'_1, b_1^{s_1,a_1, u_2, s_1'} ) \rangle}  \nonumber \\
   & \!\!\!\!= \mbox{$ \sup_{\alpha \in \Gamma} $} \mbox{$\int_{s_E' \in S_E}$} \alpha (s_1', s_E') b_1^{s_1, a_1, u_2, s_1'}(s_E') \textup{d}s_E'  & \mbox{rearranging}  \nonumber  \\
    & \!\!\!\!= \mbox{$ \sup_{\alpha \in \Gamma}  $} \mbox{$\int_{s_E' \in S_E}$} \alpha(s_1', s_E') \frac{P((s_1', s_E') \mid (s_1,b_1), a_1, u_2 )}{P(s_1' \mid (s_1,b_1), a_1, u_2 )} \textup{d} s_E' & \mbox{by \eqref{eq:belief-update-u2}} \nonumber \\
   & \!\!\!\!=\frac{1}{P_1} \mbox{$ \sup_{\alpha \in \Gamma}  $} \mbox{$\int_{s_E' \in S_E}$} \alpha (s_1', s_E') P((s_1', s_E') \mid (s_1,b_1), a_1, u_2 )  \textup{d} s_E' \!\! & \mbox{rearranging} \nonumber  \\
   & \!\!\!\!\lefteqn{\;=\frac{1}{P_1} \mbox{$ \sup_{\alpha \in \Gamma} $} \big( \mbox{$\int_{s_E' \in S_E}$} \alpha(s_1', s_E') \mbox{$\int_{s_E' \in S_E^{s_1'} , s_E \in S_E}$} b_1(s_E) \mbox{$\sum_{a_2 \in A_2}$}  u_2(a_2 \mid s_1, s_E)} \nonumber \\
   & \qquad \qquad \cdot  \delta((s_1, s_E), (a_1, a_2))(s_1',s_E')  \textup{d} s_E \big) \textup{d} s_E' & \mbox{by \eqref{eq:belief-update-closed2}}\nonumber \\
   & \!\!\!\!\lefteqn{\;=\frac{1}{P_1} \mbox{$ \sup_{\alpha \in \Gamma} $} \big(  \mbox{$\int_{s_E \in S_E}$} \big( \mbox{$\int_{s_E' \in S_E^{s_1'}}$} \alpha(s_1', s_E')\mbox{$\sum_{a_2 \in A_2}$}  u_2(a_2 \mid s_1, s_E)} \nonumber \\
   & \qquad \qquad \cdot  \delta((s_1, s_E), (a_1, a_2))(s_1',s_E')   \textup{d} s_E' \big)  b_1(s_E) \textup{d} s_E & \mbox{rearranging.} \label{eq:V-at-next-belief}
    \end{align}
    Next, for any $\alpha \in \mathbb{F}(S)$, $s_1' \in S_1$, $a_1 \in A_1$ and $u_2 \in \mathbb{P}(A_2 \mid S)$ we let $\alpha^{a_1,u_2,s_1'} : S \rightarrow \mathbb{R}$ be the function where for any $s = ( (\loc_1, \per_1), s_E) \in S$: 
    \begin{align}
        \alpha^{a_1,u_2,s_1'} (s) &  = \mbox{$\int_{s_E' \in S_E^{s_1'}}$} \alpha(s_1', s_E')\mbox{$\sum_{a_2}$}  u_2(a_2 \mid s)  \delta(s, (a_1, a_2))(s_1',s_E')   \textup{d} s_E' \nonumber \\ 
        & = \mbox{$\sum_{a_2}$}  u_2(a_2 \mid s)  \mbox{$\sum_{s_E' \in S_E}$}\delta(s, (a_1, a_2))(s_1', s_E')  \alpha(s_1', s_E')  \label{eq:new-beta-function}
    \end{align}
    and the summation in $s_E'$ is due to the finite branching of $\delta$.
    Combining \eqref{eq:V-at-next-belief} and \eqref{eq:new-beta-function}  we have:
    \begin{align}
        V(s'_1, b_1^{s_1,a_1, u_2, s_1'}) & =  \frac{1}{P_1} \sup\nolimits_{\alpha \in \Gamma} \mbox{$\int_{s_E \in S_E}$} \alpha^{a_1,u_2,s_1'} (s_1, s_E) b_1(s_E)  \textup{d} s_E \nonumber  \\
        & = \frac{1}{P(s_1' \mid (s_1, b_1), a_1, u_2)} \sup\nolimits_{\alpha \in \Gamma} \langle \alpha^{a_1,u_2,s_1'}, (s_1, b_1) \rangle  \label{eq:V-function-u2}
    \end{align}
    by definition of $P_1$.
    Substituting \eqref{eq:V-function-u2} into \eqref{eq:J-function}, the payoff function $J(u_1, u_2)$ equals:
    \begin{align}
        & \mathbb{E}_{(s_1,b_1),u_1,u_2} [r(s,a)] + \beta \mbox{$\sum_{a_1, s_1'} $} u_1(a_1) P(s_1' \mid (s_1, b_1), a_1, u_2 ) V(s'_1, b_1^{s_1,a_1, u_2, s_1'}) \nonumber \\
        & = \mathbb{E}_{(s_1,b_1),u_1,u_2} [r(s,a)] + \beta \mbox{$\sum_{a_1, s_1'}$} u_1(a_1) \mbox{$ \sup_{\alpha \in \Gamma} $}\langle \alpha^{a_1,u_2,s_1'}, (s_1, b_1) \rangle \,. \label{eq:J-function-2}
    \end{align}
    We next show that the von Neumann's Minimax Theorem \cite{JvN:28} applies to the game $\sem{\csg}$ with the payoff function $J$ and strategy spaces $\mathbb{P}(A_1)$ and $\mathbb{P}(A_2 \mid S)$. This theorem requires that $\mathbb{P}(A_1)$ and $\mathbb{P}(A_2\mid S)$ are compact convex sets (which is straightforward to show) and that $J$ is a continuous function that is concave-convex, i.e., 
    \begin{itemize}
    \item
    $J(\cdot, u_2)$ is concave for fixed $u_2 \in \mathbb{P}(A_2\mid S)$;
    \item
    $J(u_1, \cdot)$ is convex for fixed $u_1 \in \mathbb{P}(A_1)$. 
    \end{itemize}
    By \defiref{defi:minimax-operator} the expectation $\mathbb{E}_{(s_1,b_1),u_1,u_2} [r(s,a)]$ can be rewritten as:
    \[
    \mbox{$\sum_{a_1}$} u_1(a_1) \mbox{$\int_{s_E \in S_E} $} b_1(s_E) \mbox{$\sum_{a_2}$} u_2(a_2 \mid s_1, s_E ) r((s_1, s_E),(a_1,a_2)) \textup{d} s_E 
    \]
    and thus, $\mathbb{E}_{(s_1,b_1),u_1,u_2} [r(s,a)]$ is bilinear in $u_1$ and $u_2$, and thus concave in $\mathbb{P}(A_1)$ and convex in $\mathbb{P}(A_2 \mid S)$.
    
    We next show that $u_1(a_1) \sup_{\alpha \in \Gamma} \langle \alpha^{a_1,u_2,s_1'}, (s_1, b_1) \rangle$ is continuous and concave in $u_1 \in \mathbb{P}(A_1)$ and convex in $u_2 \in \mathbb{P}(A_2 \mid S)$. The continuity and concavity in $u_1 \in \mathbb{P}(A_1)$ follows directly as it is linear in $u_1 \in \mathbb{P}(A_1)$. For $u_2 \in \mathbb{P}(A_2 \mid S)$, we consider the function $f(u_2) = \langle \alpha^{a_1,u_2,s_1'}, (s_1, b_1) \rangle$. By \eqref{eq:new-beta-function} we have that $f(u_2)$ equals:
    \begin{equation*}
         \mbox{$\int_{s_E \in S_E}$} \mbox{$\sum_{a_2}$}  u_2(a_2 \mid s_1, s_E)  \mbox{$\sum_{s_E' \in S_E}$} \delta((s_1, s_E), (a_1, a_2))(s_1', s_E') \alpha(s_1', s_E') b_1(s_E) \textup{d} s_E
    \end{equation*}
    and therefore $f(u_2)$ is linear in $u_2$. Since the point-wise maximum over linear functions is continuous and convex, it follows that $\sup_{\alpha \in \Gamma} f(u_2)$ is continuous and convex in $u_2 \in \mathbb{P}(A_2 \mid S)$, and hence $u_1(a_1) \sup_{\alpha \in \Gamma} \langle \alpha^{a_1,u_2,s_1'}, (s_1, b_1) \rangle$ is continuous and convex in $u_2 \in \mathbb{P}(A_2 \mid S)$. According to von Neumann's Minimax theorem:
    \[ \begin{array}{c}
    \max_{u_1 \in \mathbb{P}(A_1)} \min_{u_2 \in \mathbb{P}(A_2 \mid S)} J(u_1, u_2) = \min_{u_2 \in \mathbb{P}(A_2 \mid S)}  \max_{u_1 \in \mathbb{P}(A_1)} J(u_1, u_2)
    \end{array} \]
    and hence the equality between \eqref{eq:operator-equivalence-1} and \eqref{eq:operator-equivalence-2} holds.

    Next we prove the equality of \eqref{eq:operator-equivalence-1} and \eqref{eq:operator-equivalence-3}.
    Letting $\textup{Conv}(\Gamma)$ be the convex hull of $\Gamma$,
    recall that $\Gamma^{A_1 \times S_1}$ is the set of vectors of functions in $\textup{Conv}(\Gamma)$ indexed by the elements of $A_1 \times S_1$. The function $J(u_1, u_2)$ in \eqref{eq:J-function-2} can be rewritten as follows:
    \begin{align} 
        & \mbox{$ \sup_{\overline{\alpha} \in \Gamma^{A_1 \times S_1} } $}  \Big( \mathbb{E}_{(s_1,b_1),u_1,u_2} [r(s,a)] \nonumber \\
        & \hspace{3cm} + \beta \mbox{$\sum_{a_1 \in A_1, s_1' \in S_1} $} u_1(a_1) \langle \alpha^{a_1,u_2,s_1'}, (s_1, b_1) \rangle \Big) \label{eq:J-function-3}
    \end{align}
    where $\bar{\alpha} = ( \alpha^{a_1,s_1'} )_{a_1 \in A_1, s_1' \in S_1}$, %
    and given $u_1$ and $u_2$, the supremum over $\Gamma$ only depends on $a_1$ and $s_1'$ and using the same arguments as \cite[Proposition 4.11]{KH-BB-VK-CK:23} we have: 
    \[ \begin{array}{c}
    \sup_{\alpha \in \Gamma} \langle \alpha, (s_1, b_1) \rangle = \sup_{\alpha \in \textup{Conv}(\Gamma)} \langle \alpha, (s_1, b_1) \rangle
    \end{array} \]
    for $(s_1, b_1) \in S_B$.  We next define the game with strategy spaces $\Gamma^{A_1 \times S_1}$ and $\mathbb{P}(A_2 \mid S)$ and payoff function $J_{u_1}: \Gamma^{A_1 \times S_1} \times \mathbb{P}(A_2 \mid S) \to \mathbb{R}$ where for $\overline{\alpha} \in \Gamma^{A_1 \times S_1}$ and $u_2 \in \mathbb{P}(A_2 \mid S)$:
    \begin{align}
    \lefteqn{J_{u_1}(\overline{\alpha}, u_2)  = \mathbb{E}_{(s_1,b_1),u_1,u_2} [r(s,a)] + \beta \mbox{$\sum_{a_1 \in A_1, s_1' \in S_1} $} u_1(a_1) \langle \alpha^{a_1,u_2,s_1'}, (s_1, b_1) \rangle} \nonumber \\
    & \lefteqn{= \mathbb{E}_{(s_1,b_1),u_1,u_2} [r(s,a)] + \beta \mbox{$\sum_{a_1 \in A_1, s_1' \in S_1}$} u_1(a_1)  \mbox{$\int_{s_E \in S_E }$}  \big( \mbox{$\sum_{a_2 \in A_2}$}  u_2(a_2 \mid s_1, s_E)}  \nonumber \\
    &  \cdot \mbox{$\sum_{s_E' \in S_E}$} \delta((s_1, s_E), (a_1, a_2))(s_1', s_E') \alpha^{a_1, s_1'}(s_1', s_E') \big)  b_1(s_E) \textup{d} s_E & \mbox{by \eqref{eq:new-beta-function}.} \label{eq:J-u-1-expression}
    \end{align}
    Substituting \eqref{eq:J-function-3} and \eqref{eq:J-u-1-expression} into \eqref{eq:operator-equivalence-1} we have:
    \begin{align}
        & \mbox{$\max_{u_1 \in \mathbb{P}(A_1)}$} \mbox{$ \min_{u_2 \in \mathbb{P}(A_2 \mid S)} $} J(u_1, u_2) \nonumber \\
        & \qquad = \mbox{$ \max_{u_1 \in \mathbb{P}(A_1)}$} \mbox{$\min_{u_2 \in \mathbb{P}(A_2 \mid S)}$}  \mbox{$\sup_{\overline{\alpha} \in \Gamma^{A_1 \times S_1} }$} J_{u_1}(\overline{\alpha}, u_2) \,. \label{eq:max-min-max}
    \end{align}
    We next show that Sion's Minimax Theorem \cite{MS:58} applies to the game with strategy spaces $\Gamma^{A_1 \times S_1}$ and $\mathbb{P}(A_2 \mid S)$ and payoff function $J_{u_1}$. Sion's Minimax Theorem requires that:
    \begin{itemize}
    \item
    $\Gamma^{A_1 \times S_1}$ is convex;
    \item
    $\mathbb{P}(A_2 \mid S)$ is compact and convex;
    \item
    for any $u_2 \in \mathbb{P}(A_2 \mid S)$ the function $J_{u_1}(\cdot, u_2) : \Gamma^{A_1 \times S_1} \rightarrow \mathbb{R}$ is  upper semicontinuous and quasi-concave;
    \item
    for any $\overline{\alpha} \in \Gamma^{A_1 \times S_1}$ the function $J_{u_1}(\overline{\alpha}, \cdot) : \mathbb{P}(A_2 \mid S) \rightarrow \mathbb{R}$ is lower semicontinuous and quasi-convex. 
    \end{itemize}
    The first properties clearly hold and the second to follow from \eqref{eq:J-u-1-expression} which demonstrate that both $J_{u_1}(\cdot, u_2)$ and $J_{u_1}(\overline{\alpha}, \cdot)$ are linear. 
    
    Therefore using Sion's Minimax Theorem, we have:
    \[ \begin{array}{c}
    \min_{u_2 \in \mathbb{P}(A_2 \mid S)}  \sup_{\overline{\alpha} \in \Gamma^{A_1 \times S_1} } J_{u_1}(\overline{\alpha}, u_2) = \sup_{\overline{\alpha} \in \Gamma^{A_1 \times S_1} } \min_{u_2 \in \mathbb{P}(A_2 \mid S)} J_{u_1}(\overline{\alpha}, u_2) \, 
    \end{array} \]
    and combining with \eqref{eq:max-min-max} it follows that $\max\nolimits_{u_1 \in \mathbb{P}(A_1)} \min\nolimits_{u_2 \in \mathbb{P}(A_2 \mid S)} J(u_1, u_2)$ equals:
    \begin{align} 
    & \lefteqn{\max\nolimits_{u_1 \in \mathbb{P}(A_1)} \sup\nolimits_{\overline{\alpha} \in \Gamma^{A_1 \times S_1} } \min\nolimits_{u_2 \in \mathbb{P}(A_2 \mid S)}  J_{u_1}(\overline{\alpha}, u_2)} \nonumber \\
    & \lefteqn{= \max\nolimits_{u_1 \in \mathbb{P}(A_1)} \sup\nolimits_{\overline{\alpha} \in \Gamma^{A_1 \times S_1} } \min\nolimits_{u_2 \in \mathbb{P}(A_2 \mid S)}  \mbox{$\int_{s_E \in S_E }$} \mbox{$\sum_{a_2}$} u_2(a_2 \mid s_1, s_E) \mbox{$\sum_{a_1}$} u_1(a_1)} \nonumber \\
    & \lefteqn{\quad \cdot r((s_1, s_E),(a_1,a_2)) b_1(s_E) \textup{d} s_E +   \beta \mbox{$\int_{s_E \in S_E }$}  \Big( \mbox{$\sum_{a_2}$} u_2(a_2 \mid s_1, s_E) \mbox{$\sum_{a_1, s_1'}$} u_1(a_1)}  \nonumber \\
    & \lefteqn{\quad \cdot \mbox{$\sum_{s_E' \in S_E}$} \delta((s_1, s_E), (a_1, a_2))(s_1', s_E') \alpha(s_1', s_E') \Big)  b_1(s_E) \textup{d} s_E}  & \mbox{by \eqref{eq:J-u-1-expression}} \nonumber \\
    & \lefteqn{= \max\nolimits_{u_1 \in \mathbb{P}(A_1)} \sup\nolimits_{\overline{\alpha} \in \Gamma^{A_1 \times S_1} } \mbox{$\int_{s_E \in S_E}$} \min\nolimits_{u_2 \in \mathbb{P}(A_2 \mid S)} \mbox{$\sum_{a_2}$} u_2(a_2 \mid s_1, s_E)} \nonumber \\
    & \lefteqn{\quad  \Big( \mbox{$\sum_{a_1}$} u_1(a_1) r((s_1, s_E),(a_1,a_2)) + \beta \mbox{$\sum_{a_1, s_1'}$} u_1(a_1)} \nonumber \\
    & \quad  \mbox{$\sum_{s_E' \in S_E}$} \delta((s_1, s_E), (a_1, a_2))(s_1', s_E') \alpha(s_1', s_E') \Big)  b_1(s_E) \textup{d} s_E & \mbox{rearranging} \nonumber \\
    & \lefteqn{= \max\nolimits_{u_1 \in \mathbb{P}(A_1)} \sup\nolimits_{\overline{\alpha} \in \Gamma^{A_1 \times S_1} } \mbox{$\int_{s_E \in S_E}$} \min\nolimits_{a_2 \in A_2} \Big( \mbox{$\sum_{a_1}$} u_1(a_1) r((s_1, s_E),(a_1,a_2))} \nonumber \\
    & \lefteqn{\quad + \beta \mbox{$\sum_{a_1, s_1'}$} u_1(a_1) \mbox{$\sum_{s_E' \in S_E}$} \delta((s_1, s_E), (a_1, a_2))(s_1', s_E') \alpha(s_1', s_E') \Big)  b_1(s_E) \textup{d} s_E} \nonumber \\
     && \hspace{-1cm} \mbox{since $\agent_2$ is fully informed} \nonumber \\
     & \lefteqn{= \max\nolimits_{u_1 \in \mathbb{P}(A_1)} \sup\nolimits_{\overline{\alpha} \in \Gamma^{A_1 \times S_1} } \mbox{$\int_{s_E \in S_E}$} \big( \min\nolimits_{a_2 \in A_2} f_{u_1,\overline{\alpha}, a_2}(s_1, s_E) \big)   b_1(s_E) \textup{d} s_E} \nonumber \\
     && \hspace{-1cm}  \mbox{by \eqref{eq:f-u1-alpha-a2}}
     \nonumber \\
    & \lefteqn{= \max\nolimits_{u_1\in \mathbb{P}(A_1)} \sup\nolimits_{\overline{\alpha} \in \Gamma^{A_1 \times S_1} } \langle f_{u_1,\overline{\alpha}}, (s_1, b_1) \rangle}  & \mbox{by \defiref{defi:simplified-operator}} \nonumber 
    \end{align}
     which demonstrates that \eqref{eq:operator-equivalence-1} and \eqref{eq:operator-equivalence-3} are equal, i.e., $T$ and $T_\Gamma$ are equivalent.

     \startpara{Fixed point} To show the unique fixed point of $T$ and $T_\Gamma$ is $V^{\star}$.
     We first prove that $V^{\star}$ is a fixed point of the operator $T$, i.e., $V^{\star} = [T V^{\star}]$. According to the proof of \thomref{thom:convexity-continuity}, for $(s_1, b_1) \in S_B$ the value function $V^\star$ can be represented by:
    \begin{align*}
        V^{\star}(s_1, b_1) & = \sup\nolimits_{\sigma_1 \in \Sigma_1} V_{\sigma_1}(s_1, b_1) \\
        & = \sup\nolimits_{\sigma_1 \in \Sigma_1} \mbox{$\int\nolimits_{s_E \in S_E}$} b_1(s_E) \big( \inf\nolimits_{\sigma_2 \in \Sigma_2} \mathbb{E}_{(s_1,s_E)}^{\sigma_1,\sigma_2}[Y] \big) \textup{d} s_E  & \mbox{by \eqref{eq:V-sigma1}} \\
        & = \sup\nolimits_{\sigma_1 \in \Sigma_1} \langle \inf\nolimits_{\sigma_2 \in \Sigma_2} \mathbb{E}_{(s_1,s_E)}^{\sigma_1,\sigma_2}[Y], (s_1, b_1) \rangle   & \\
        & = \sup\nolimits_{\alpha \in \Gamma } \langle \alpha, (s_1, b_1) \rangle
    \end{align*}
    where $\Gamma \triangleq \{ \inf\nolimits_{\sigma_2 \in \Sigma_2} \mathbb{E}_{(s_1,s_E)}^{\sigma_1,\sigma_2}[Y] \mid \sigma_1 \in \Sigma_1 \}$. According to the operator equivalence above, we have:
    \begin{equation}
        [TV^{\star}] (s_1, b_1) = \mbox{$\max_{u_1\in \mathbb{P}(A_1)} $} \mbox{$\sup_{\overline{\alpha} \in \Gamma^{A_1 \times S_1} }$} \langle f_{u_1,\overline{\alpha}}, (s_1, b_1) \rangle
    \end{equation}
    for all $(s_1, b_1) \in S_B$, where $\Gamma^{A_1 \times S_1} \triangleq \{ \{ \alpha^{a_1,s_1'} \}_{a_1 \in A_1, s_1' \in S_1} \mid \alpha^{a_1,s_1'} \in \textup{Conv}(\Gamma) \}$ and $\Gamma$ is given above. Now, by following the same argument as in the proof of \cite[Lemma 6.7]{KH-BB-VK-CK:23}, we can show that $V^{\star}(s_1, b_1) = [TV^{\star}] (s_1, b_1)$ for all $(s_1, b_1) \in S_B$, i.e., $V^{\star} = [T V^{\star}]$.

    Next we demonstrate that the operator $T$ is a contraction mapping on the space $\mathbb{F}(S_B)$ with respect to the supremum norm $\| J \| = \sup_{(s_1, b_1) \in S_B} |J(s_1, b_1)|$. Therefore consider any $J_1, J_2 \in \mathbb{F}(S_B)$ and for any belief $(s_1, b_1) \in S_B$, let $(u_1^{1\star}, u_2^{1\star})$ and $(u_1^{2\star}, u_2^{2\star})$ be the minimax strategy profiles in the stage games $[TJ_1](s_1, b_1)$ and $[TJ_2](s_1, b_1)$, respectively. Also, let $\bar{J}_1 (u_1, u_2)$ and $\bar{J}_2(u_1, u_2)$ be the values of state $(s_1, b_1)$ of the stage game under the strategy pair $(u_1, u_2) \in \mathbb{P}(A_1) \times \mathbb{P}(A_2 \mid S)$ when computing the backup values in \eqref{eq:J-function} for $J_1$ and $J_2$, respectively. Without loss of generality, we assume $[TJ_1](s_1, b_1) \leq [TJ_2](s_1, b_1)$, and thus since $(u_1^{1\star}, u_2^{1\star})$ is minimax strategy profile for  $[TJ_1](s_1, b_1)$:
    \begin{align}
        \bar{J}_1(u_1^{2\star}, u_2^{1\star}) & \leq \bar{J}_1 (u_1^{1\star}, u_2^{1\star})  \nonumber \\
        & = [TJ_1](s_1, b_1)  & \mbox{by definition of $\bar{J}_1$} \nonumber \\
        & \leq [TJ_2](s_1, b_1) & \mbox{without loss of generality} \nonumber \\
        & = \bar{J}_2(u_1^{2\star}, u_2^{2\star}) & \mbox{by definition of $\bar{J}_2$}  \nonumber \\
        & \leq \bar{J}_2(u_1^{2\star}, u_2^{1\star}) & \mbox{since $(u_1^{2\star}, u_2^{2\star})$ is minimax strategy}. \label{eq:bound-TJ2-TJ1}
    \end{align}
    Now using \eqref{eq:bound-TJ2-TJ1} for any $(s_1, b_1) \in S_B$ we have
    \begin{align}
        & \lefteqn{| [TJ_2](s_1, b_1) - [TJ_1](s_1, b_1) | \leq \bar{J}_2(u_1^{2\star}, u_2^{1\star}) - \bar{J}_1(u_1^{2\star}, u_2^{1\star})} \nonumber \\
        & \lefteqn{= \beta  \mbox{$\sum_{a_1, s_1'}$} P(a_1, s_1' \mid (s_1, b_1), u_1^{2\star}, u_2^{1\star} ) \big( J_2(s'_1, b_1^{s_1,a_1, u_2^{1\star}, s_1'}) - J_1(s'_1, b_1^{s_1,a_1, u_2^{1\star}, s_1'}) \big)} \nonumber \\
        && \mbox{by \eqref{eq:J-function}} \nonumber \\
        & \leq \beta  \mbox{$\sum_{a_1, s_1'}$} P(a_1, s_1' \mid (s_1, b_1), u_1^{2\star}, u_2^{1\star} ) \| J_2 - J_1 \| & \mbox{by definition of $\| \cdot \|$} \nonumber  \\
        & = \beta \| J_2 - J_1 \| & \hspace{-4cm} \mbox{since $P(\cdot \mid (s_1, b_1), u_1^{2\star}, u_2^{1\star} )$ is a distribution.} \label{contratction-eq}
    \end{align}
    Now by definition of the supremum norm:
    \begin{align*}
        \| [TJ_2] - [TJ_1] \| & = \sup\nolimits_{(s_1, b_1) \in S_B} |[TJ_2](s_1, b_1) - [TJ_1](s_1, b_1)| \\
        & \leq \sup\nolimits_{(s_1, b_1) \in S_B} \beta \| J_2 - J_1 \| & \mbox{by \eqref{contratction-eq}} \\
        & = \beta \| J_2 - J_1 \| & \mbox{rearranging}
    \end{align*}
    and hence, since $\beta \in (0,1)$, we have that $T$ is a contraction mapping. Thus, the fact that the value function $V^{\star}$ is the unique fixed point of $T$ now follows directly from Banach's fixed point theorem. \qed
\end{proof}


%
\begin{lema}[PWC function]\label{lema:PWC-intermidiate-func}
    For any $a \in A$, $s_1' \in S_1$ and $\alpha \in \mathbb{F}_C(S)$, if $\alpha^{a, s_1'} : S \to \mathbb{R}$ is the function where for any $s \in S$:
    \begin{equation*}
        \alpha^{a, s_1'}(s) = \mbox{$\sum_{ (s_1', s_E') \in \Theta_{s}^a }$} \delta(s, a) (s_1', s_E') \alpha (s_1', s_E') 
    \end{equation*}
    then $\alpha^{a, s_1'}$ is PWC.
\end{lema}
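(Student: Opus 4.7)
The plan is to exhibit a single FCP on which $\alpha^{a,s_1'}$ is constant, built by combining the constant-FCP of $\alpha$ with Assumption~\ref{asp:transitions-rewards}. First I would fix a constant-FCP $\Phi_\alpha$ witnessing $\alpha \in \mathbb{F}_C(S)$, and refine it with the perception FCP $\Phi_P$ to obtain $\Phi' = \Phi_\alpha + \Phi_P$. On $\Phi'$, the function $\alpha$ is still constant on each region (refinement only subdivides), and crucially every region now has a unique agent-state component. Let $\alpha_{\phi'}$ denote the value of $\alpha$ on $\phi'$ and $s_1(\phi')$ its unique agent state.

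Next I would apply Assumption~\ref{asp:transitions-rewards} to $\Phi'$ to obtain a pre-image FCP $\Phi$ and a map $\delta_\Phi : (\Phi {\times} A) \to \mathbb{P}(\Phi')$ such that $\delta(s,a)(s') = \delta_\Phi(\phi,a)(\phi')$ whenever $s \in \phi \in \Phi$ and $s' \in \phi' \in \Phi'$. The goal is to show $\alpha^{a,s_1'}$ is constant on every $\phi \in \Phi$.

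Since $\delta$ is finitely branching, the sum defining $\alpha^{a,s_1'}(s)$ is a finite sum, so I can regroup it by regions of $\Phi'$ and use the constancy of $\alpha$ on each such region:
\begin{equation*}
    \alpha^{a,s_1'}(s) \;=\; \sum_{\substack{\phi' \in \Phi'\\ s_1(\phi') = s_1'}} \alpha_{\phi'} \cdot \delta(s,a)(\phi'),
\end{equation*}
where the regions with $s_1(\phi') \neq s_1'$ contribute $0$ because the outer index $s_1'$ forces $(s_1', s_E') \notin \phi'$ for any $s_E'$. Now, by Assumption~\ref{asp:transitions-rewards}, the aggregate $\delta(s,a)(\phi')$ depends only on the unique $\phi \in \Phi$ containing $s$ (its value is $\delta_\Phi(\phi,a)(\phi')$, since within $\phi'$ the finite branching of $\delta_E = \sum_i \mu_i \delta_E^i$ places at most one atom whose mass equals the regional probability). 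Hence the right-hand side depends on $s$ only through the region $\phi$, and $\alpha^{a,s_1'} \in \mathbb{F}_C(S)$ with constant-FCP $\Phi$.

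The main obstacle I anticipate is justifying the rewriting $\sum_{s_E' : (s_1', s_E') \in \phi' \cap \Theta_s^a} \delta(s,a)(s_1',s_E') = \delta_\Phi(\phi,a)(\phi')$, since Assumption~\ref{asp:transitions-rewards} is stated as a pointwise equality rather than an aggregated one. This is reconciled by the finite-branching structure of $\delta_E$: the $n$ branches $\delta_E^i$ can produce at most $n$ successor environment states, so refining $\Phi'$ finely enough (which can be absorbed into $\Phi_\alpha$ without loss) ensures each $\phi'$ contains at most one atom of $\delta(s,a)$ for any $s \in \phi$, making the aggregate coincide with the pointwise value. Everything else is a routine regrouping.
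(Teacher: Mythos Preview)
Your approach is correct and mirrors the paper's proof: both refine a constant-FCP of $\alpha$ by the perception FCP $\Phi_P$, invoke Assumption~\ref{asp:transitions-rewards} to obtain a pre-image FCP, and then argue that the sum defining $\alpha^{a,s_1'}$ is constant on each pre-image region by regrouping successors according to their target region. Your explicit treatment of the aggregation issue (pointwise versus region-mass equality, resolved via the finite-branching structure of $\delta_E$) is, if anything, more careful than the paper's, which simply asserts the aggregate identity $\sum_{s' \in \Theta_s^a \cap \phi} \delta(s,a)(s') = \sum_{\tilde s' \in \Theta_{\tilde s}^a \cap \phi} \delta(\tilde s,a)(\tilde s')$ as a consequence of Assumption~\ref{asp:transitions-rewards}.
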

\begin{proof}
Let $a = (a_1, a_2)$. Since $\alpha$ is PWC, there exists an FCP $\Phi$ of $S$ such that $\alpha$ is constant in each region of $\Phi$. 
According to \aspref{asp:transitions-rewards},
there exists a pre-image FCP $\Phi'$ of $\Phi + \Phi_{P}$ for joint action $a$, where $\Phi_{P}$ is the perception FCP for $\agent_1$. Consider any region $\phi' \in \Phi'$ and let $\phi$ be any region of $\Phi + \Phi_{P}$ such that $\Theta_s^{a} \cap \phi \neq \emptyset$ for all $s \in \phi'$. Since $\Phi_{P}$ is the perception FCP for $\agent_1$, there exists $s_1' \in S_1$ such that if $s' \in \phi$, then $s' = (s_1', s_E')$ for some $s_E' \in S_E$ and let $\phi_E = \{ s_E \in S_E \mid (s_1', s_E) \in \phi\}$. If $s, \tilde{s} \in \phi'$ such that $s = (s_1, s_E)$ and $\tilde{s} = (\tilde{s}_1, \tilde{s}_E)$, then using \aspref{asp:transitions-rewards} we have $\sum_{s' \in \Theta_s^{a} \cap \phi} \delta(s, a)(s') = \sum_{\tilde{s}' \in \Theta_{\tilde{s}}^{a} \cap \phi} \delta(\tilde{s}, a)(\tilde{s}')$ and $s_1 = \tilde{s}_1$. Now combining this fact with \defiref{semantics-def}, it follows that:
\begin{align*}
\mbox{$\sum\nolimits_{(s_1', s_E') \in \Theta_{s}^{a} , s_E' \in \phi_E}$} \delta(s, a)(s_1', s_E') = \mbox{$\sum\nolimits_{(s_1', \tilde{s}_E') \in \Theta_{\tilde{s}}^{a}, \tilde{s}_E' \in \phi_E}$} \delta(\tilde{s}, a)(s_1', \tilde{s}_E') \, .
\end{align*}
Since $\alpha^{a_1, s_1'}(s_1', s_E') = \alpha^{a_1, s_1'}(s_1', \tilde{s}_E')$ for any $(s_1', s_E'), (s_1', \tilde{s}_E') \in \phi$ and $S_E^{s_1'} = \{ s_E' \in S_E \mid \obs_1(\loc_1', s_E') = \per_1' \}$ is equal to $\{ \phi_E \mid \phi \in \Phi^{s_1'} \}$ for some finite set of regions $\Phi^{s_1'} \subseteq \Phi + \Phi_{P}$, it follows that 
\begin{align*}
\lefteqn{\mbox{$\sum\nolimits_{(s_1', s_E') \in \Theta_{s}^{a}, s_E' \in S_E^{s_1'}}$} \delta(s, a)(s_1', s_E') \alpha^{a_1, s_1'}(s_1', s_E')} \\
& = \mbox{$\sum\nolimits_{(s_1', \tilde{s}_E') \in \Theta_{\tilde{s}}^{a}, \tilde{s}_E' \in S_E^{s_1'} }$} \delta(\tilde{s}, a)(s_1', \tilde{s}_E') \alpha^{a_1, s_1'}(s_1', \tilde{s}_E')
\end{align*}
and therefore $\alpha^{a, s_1'}(s) = \alpha^{a, s_1'}(\tilde{s}) $, implying that $\alpha^{a, s_1'}$ is constant in each region of $\Phi'$. \qed
\end{proof}

\begin{lema}[LP for minimax and P-PWLC, extended \lemaref{lema:LP-minimax-P-PWLC}]\label{lema:LP-minimax-P-PWLC-extended}
    If $V \in \mathbb{F}(S_B)$ is P-PWLC with PWC $\alpha$-functions $\Gamma$, for any $(s_1, b_1) \in S_B$, $[TV](s_1, b_1)$ is given by the LP over the real-valued variables $( v_{\phi})_{\phi \in \Phi_\Gamma}$, $(\lambda_{\alpha}^{a_1, s_1'})_{(a_1,s_1') \in A_1 \times S_1, \alpha \in \Gamma}$ and $(p^{a_1})_{a_1 \in A_1}:$
    \begin{align}
        & \hspace{-0.2cm} \mbox{\rm maximise} \; \;
        \mbox{$\sum_{\phi \in \Phi_{\Gamma}}$} v_{\phi} \mbox{$\int_{(s_1, s_E) \in \phi}$} b_1(s_E) \textup{d} s_E  \;\; \mbox{\rm subject to}
        \nonumber  \\
        &  v_{\phi}  \leq \mbox{$\sum_{a_1 \in A_1}$} p^{a_1} r((s_1, s_E),(a_1,a_2)) + \beta \mbox{$\sum_{a_1, s_1', s_E'}$} \delta((s_1, s_E), (a_1, a_2))(s_1', s_E') \nonumber\\
         & \quad  \cdot \mbox{$\sum_{\alpha \in \Gamma} \lambda_{\alpha}^{a_1, s_1'}$} \alpha (s_1', s_E')\nonumber\\ 
         & \lambda^{a_1, s_1'}_{\alpha} \ge 0,  \nonumber \\
         & p^{a_1} {=} \mbox{$\sum_{\alpha \in \Gamma} $} \lambda_{\alpha}^{a_1, s_1'}\nonumber\\
         & \mbox{$ \sum_{a_1 \in A_1}$} p^{a_1}  {=} 1 \label{eq:LP-minimax-2}
    \end{align} 
    for all $\phi \in \Phi_\Gamma$, $a_2 \in A_2$, $(a_1,s_1') \in A_1 \times S_1$ and $\alpha \in \Gamma$ where $s_E \in \phi$. Moreover, if $(\overline{v}^\star, \overline{\lambda}_1^\star , \overline{p}_1^\star)$ is the optimal solution to the LP \eqref{eq:LP-minimax-2}, then the maximiser of the maxsup operator in \defiref{defi:simplified-operator} is $(\overline{p}_1^{\star}, \overline{\alpha}^{\star})$, where $\overline{\alpha}^{\star} \in \Gamma^{A_1 \times S_1}$ is such that for $(a_1, s_1') \in A_1 \times S_1$, if $a_1 \in A_1$ and $p^{\star a_1} > 0$, then $\alpha^{\star a_1, s_1'} = \mbox{$\sum_{\alpha \in \Gamma}$} (\lambda^{\star a_1, s_1'}_{\alpha} / p^{\star a_1}) \alpha$
and $\alpha^{\star a_1, s_1'}(s) = L$ for all $s \in S$ otherwise.
\end{lema}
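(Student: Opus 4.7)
My starting point is the operator equivalence $[TV](s_1, b_1)=\max_{u_1\in\mathbb{P}(A_1)}\sup_{\overline{\alpha}\in\Gamma^{A_1\times S_1}}\langle f_{u_1,\overline{\alpha}},(s_1,b_1)\rangle$ from \thomref{thom:operator-equivalence}. The key structural input is that $f_{u_1,\overline{\alpha},a_2}$ is PWC on a common FCP $\Phi_\Gamma$ for every $a_2\in A_2$ and every admissible $\overline{\alpha}$. I construct $\Phi_\Gamma$ by taking a common constant-FCP for the finitely many $\alpha\in\Gamma$ (each is PWC by hypothesis), refining with the perception FCP $\Phi_P$, and then taking the pre-image FCP guaranteed by \aspref{asp:transitions-rewards} for every joint action $a\in A$, intersecting all of them. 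Applying \lemaref{lema:PWC-intermidiate-func} to the convex combination $\alpha^{a_1,s_1'}\in\textup{Conv}(\Gamma)$ (which remains PWC on the same refinement) shows that $f_{u_1,\overline{\alpha}}=\min_{a_2}f_{u_1,\overline{\alpha},a_2}$ is constant on every $\phi\in\Phi_\Gamma$.

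\textbf{Linearisation.} I parametrise any $\overline{\alpha}\in\Gamma^{A_1\times S_1}$ as $\alpha^{a_1,s_1'}=\sum_{\alpha\in\Gamma}\mu_\alpha^{a_1,s_1'}\alpha$ with simplex coefficients $\mu_\alpha^{a_1,s_1'}\ge 0$, $\sum_\alpha\mu_\alpha^{a_1,s_1'}=1$, and perform the standard McCormick-style substitution $\lambda_\alpha^{a_1,s_1'}=p^{a_1}\mu_\alpha^{a_1,s_1'}$ with $p^{a_1}=u_1(a_1)$. The bilinear product $u_1(a_1)\alpha^{a_1,s_1'}(\cdot)$ becomes $\sum_{\alpha\in\Gamma}\lambda_\alpha^{a_1,s_1'}\alpha(\cdot)$, which is linear in the LP variables; the joint simplex constraints on $u_1$ and $\mu^{a_1,s_1'}$ translate exactly to $p^{a_1}\ge 0$, $\sum_{a_1}p^{a_1}=1$, $\lambda_\alpha^{a_1,s_1'}\ge 0$ and $\sum_{\alpha}\lambda_\alpha^{a_1,s_1'}=p^{a_1}$. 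The inner minimisation over $a_2$ is encoded by imposing $v_\phi\le f_{u_1,\overline{\alpha},a_2}(s^\phi)$ for every $a_2\in A_2$ and every $\phi\in\Phi_\Gamma$, where $s^\phi=(s_1,s_E)\in\phi$ is any representative (both sides are constant on $\phi$). Since the LP objective weights each $v_\phi$ by the nonnegative quantity $\int_{(s_1,s_E)\in\phi}b_1(s_E)\,\textup{d}s_E$ and maximises, $v_\phi$ is driven up to $\min_{a_2}f_{u_1,\overline{\alpha},a_2}(s^\phi)$ whenever that weight is positive, while regions with zero weight contribute nothing regardless of $v_\phi$. Summing over $\Phi_\Gamma$ exactly reproduces $\langle f_{u_1,\overline{\alpha}},(s_1,b_1)\rangle$, and conversely every feasible LP point arises from some $(u_1,\overline{\alpha})$, so the LP optimum equals $[TV](s_1,b_1)$.

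\textbf{Recovery of the maximiser and main obstacle.} Given an optimal $(\overline{v}^\star,\overline{\lambda}_1^\star,\overline{p}_1^\star)$, I invert the substitution by setting $u_1^\star(a_1)=p^{\star a_1}$ and, whenever $p^{\star a_1}>0$, $\alpha^{\star a_1,s_1'}=\sum_\alpha(\lambda_\alpha^{\star a_1,s_1'}/p^{\star a_1})\alpha$; the constraint $\sum_\alpha\lambda_\alpha^{\star a_1,s_1'}=p^{\star a_1}$ together with nonnegativity then certifies $\alpha^{\star a_1,s_1'}\in\textup{Conv}(\Gamma)$. The main obstacle is the degenerate case $p^{\star a_1}=0$: the quotient defining $\mu_\alpha^{\star a_1,s_1'}$ is undefined, and the constant function $L$ is not literally in $\textup{Conv}(\Gamma)$. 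I handle this by observing that in both $f_{u_1^\star,\overline{\alpha}^\star,a_2}$ and the objective, the component indexed by $(a_1,s_1')$ is pre-multiplied by $p^{\star a_1}=0$, so the choice of $\alpha^{\star a_1,s_1'}$ is immaterial for the maxsup value; replacing $L$ by any genuine element of $\textup{Conv}(\Gamma)$ produces a strictly admissible argument attaining the same supremum value as the LP, and the constant $L$ is used merely as a convenient symbolic placeholder that keeps $\alpha^{\star a_1,s_1'}$ within the uniform bounds. Combining the two directions shows that $(\overline{p}_1^\star,\overline{\alpha}^\star)$ is a maximiser of $T_\Gamma$ at $(s_1,b_1)$, completing both claims of the lemma.
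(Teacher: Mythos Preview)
Your proof is correct and follows essentially the same approach as the paper: start from the maxsup form of $[TV]$ via \thomref{thom:operator-equivalence}, exploit the PWC structure of $f_{u_1,\overline{\alpha},a_2}$ to reduce the integral to a finite sum over $\Phi_\Gamma$, encode the inner $\min_{a_2}$ by one inequality per $(\phi,a_2)$, and linearise the bilinear term $u_1(a_1)\alpha^{a_1,s_1'}$ via the substitution $\lambda_\alpha^{a_1,s_1'}=p^{a_1}\mu_\alpha^{a_1,s_1'}$. You are in fact more explicit than the paper in two places: you spell out how $\Phi_\Gamma$ is built (common constant-FCP for $\Gamma$, refined by $\Phi_P$, then pre-image FCPs from \aspref{asp:transitions-rewards}), and you confront the degenerate case $p^{\star a_1}=0$ directly, correctly noting that the choice of $\alpha^{\star a_1,s_1'}$ is irrelevant to the value since it is multiplied by zero in $f_{u_1^\star,\overline{\alpha}^\star,a_2}$---the paper simply states the $L$ assignment without this justification.
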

\begin{proof}
    Since $V$ is P-PWLC, then according to Definitions \ref{defi:simplified-operator} and \ref{defi:PWLC} and \thomref{thom:operator-equivalence}: 
    \begin{align}
        & [TV](s_1, b_1) = \mbox{$\max_{u_1\in \mathbb{P}(A_1)} $} \mbox{$\sup_{\overline{\alpha} \in \Gamma^{A_1 \times S_1} } $} \langle f_{u_1,\overline{\alpha}}, (s_1, b_1) \rangle \nonumber \\
        & = \mbox{$\max_{u_1 \in \mathbb{P}(A_1)}$} \mbox{$\sup_{\overline{\alpha} \in \Gamma^{A_1 \times S_1} }$} \mbox{$\int_{s_E \in S_E}$} \big( \mbox{$\min_{a_2}$} f_{u_1, \overline{\alpha}, a_2}(s_1, s_E) \big) b_1(s_E) \textup{d} s_E 
        \label{eq:TV-LB-max-sup-min}
    \end{align}
    which can be formulated as the following optimization problem:
    \begin{align*}
        [TV](s_1, b_1) = & \max\nolimits_{u_1 \in \mathbb{P}(A_1),\overline{\alpha} \in \Gamma^{A_1 \times S_1}, \overline{v}} \mbox{$\sum_{\phi \in \Phi_{\Gamma}}$} v_{\phi} \mbox{$\int_{(s_1,s_E) \in \phi}$}  b_1(s_E) \textup{d} s_E \\
        & \textup{ subject to } v_{\phi} \leq f_{u_1, \overline{\alpha}, a_2}(s_1, s_E) \quad \mbox{for all $\phi \in \Phi_{\Gamma}$ and $a_2 \in A_2$}
    \end{align*}
    where $\overline{v} = ( v_{\phi} )_{\phi \in \Phi_{\Gamma}}$, $f_{u_1, \overline{\alpha}, a_2}$ is constant over $\phi$ and $(s_1, s_E) \in \phi$. 
    Using \eqref{eq:f-u1-alpha-a2}, the constraint $v_{\phi} \leq f_{u_1, \overline{\alpha}, a_2}(s_1, s_E)$ can be written as:
    \begin{align*}
        v_{\phi} \leq &  \mbox{$\sum_{a_1 \in A_1}$} u_1(a_1) r((s_1, s_E),(a_1,a_2)) \\
        & + \beta \mbox{$\sum_{(a_1, s_1') \in A_1 \times S_1, s_E' \in S_E}$} u_1(a_1) \delta((s_1, s_E), (a_1, a_2))(s_1', s_E') \alpha^{a_1, s_1'} (s_1', s_E') .
    \end{align*}
    Since $\alpha^{a_1, s_1'} \in \textup{Conv}(\Gamma)$, 
    we have $\alpha^{a_1, s_1'} = \sum_{\alpha \in \Gamma} \lambda_{\alpha}^{a_1, s_1'} \alpha$ for some vector of real-values $(\lambda_{\alpha}^{a_1, s_1'} )_{(a_1,s_1) \in A_1 \times S_1}$ such that $\sum_{\alpha \in \Gamma} \lambda_{\alpha}^{a_1, s_1'} = 1$, and therefore:
    \begin{align*}
        v_{\phi} &  \leq \mbox{$\sum_{a_1 \in A_1}$} u_1(a_1) r((s_1, s_E),(a_1,a_2))  + \beta \mbox{$\sum_{(a_1, s_1') \in A_1 \times S_1, s_E' \in S_E}$} \\
        & \quad \;\; u_1(a_1) \delta((s_1, s_E), (a_1, a_2))(s_1', s_E') \mbox{$\sum_{\alpha \in \Gamma} \lambda_{\alpha}^{a_1, s_1'}$} \alpha (s_1', s_E') \\
        & =  \mbox{$\sum_{a_1 \in A_1}$} p_{a_1} r((s_1, s_E),(a_1,a_2))  + \\
        & \quad \;\; + \beta \mbox{$\sum_{(a_1, s_1') \in A_1 \times S_1, s_E' \in S_E}$} \delta((s_1, s_E), (a_1, a_2))(s_1', s_E') \mbox{$\sum_{\alpha \in \Gamma} \lambda_{\alpha}^{a_1, s_1'}$} \alpha (s_1', s_E') 
    \end{align*}
    where $p_{a_1} = u_1(a_1)$ for all $a_1 \in A_1$ and in the equality we scale $\lambda_{\alpha}^{a_1, s_1'} = p_{a_1} \lambda_{\alpha}^{a_1, s_1'}$ for all $a_1 \in A_1$, $s_1' \in S_1$ and $\alpha \in \Gamma$, which gives the constraints:
    \begin{align*}
        \lambda_{\alpha}^{a_1, s_1'} & \; \ge 0 \\ 
        p_{a_1} & = \;\mbox{$\sum_{\alpha \in \Gamma} $} \lambda_{\alpha}^{a_1, s_1'} \\
        \mbox{$\sum_{a_1 \in A_1} $} p_{a_1} & = \; 1 
    \end{align*}
    and hence the fact we can solve the LP problem \eqref{eq:LP-minimax-2} to compute $[TV](s_1, b_1)$ follows directly. \qed
\end{proof}


\begin{proof}[\textbf{Proof of \thomref{thom:P-PWLC-closure}}]
\textbf{P-PWLC closure.} Consider the LP in \lemaref{lema:LP-minimax-P-PWLC}, i.e., \eqref{eq:LP-minimax-2} in the extended \lemaref{lema:LP-minimax-P-PWLC-extended}, which computes the minimax or maxsup backup $[TV](s_1, b_1)$ when $V$ is P-PWLC. The polytope of feasible solutions of the LP defined by the constraints is independent of the environment belief $b_1$, because $b_1$ only appears in the objective. Therefore, the set $Q_{s_1}$ of vertices of this polytope is also independent of $b_1$. For each $b_1 \in \mathbb{P}(S_E)$, the optimal value of an LP representing $[TV](s_1, b_1)$ can be found with the vertices $Q_{s_1}$, as the objective is linear in $\hat{V}$ for any given $b_1$. There is a finite number of vertices $q \in Q_{s_1}$, and each vertex $q \in Q_{s_1}$ corresponds to some assignment of variables $u_1^q$ and  $\overline{\alpha}^q$ ($u_1^q$ and $\overline{\alpha}^q$ are computed by \eqref{eq:LP-minimax-2}). 
Since $Q_{s_1}$ is finite, then letting $Q = \{ q \in Q_{s_1} \mid s_1 \in S_1\}$, which is finite, we have:
\begin{align*}
    [TV](s_1, b_1) = \mbox{$\max_{q \in Q}$} \langle f_{u_1^q, \overline{\alpha}^q}, (s_1, b_1) \rangle \, .
\end{align*}
Moreover, since $f_{u_1, \overline{\alpha}, a_2}$ is PWC for any $u_1 \in \mathbb{P}(A_1), \overline{\alpha} \in \Gamma^{A_1 \times S_1} $ and $a_2 \in A_2$, then it follows from \defiref{defi:simplified-operator}, the function $f_{u_1^p, \overline{\alpha}^p}$ is PWC. This implies that $[TV] \in \mathbb{F}(S_B)$ and P-PWLC.  

\startpara{Convergence} Using the fixed point in \thomref{thom:operator-equivalence}, the conclusion directly follows from Banach's fixed point theorem and the fact we have proved in \thomref{thom:P-PWLC-closure} that if $V \in \mathbb{F}(S_B)$ and P-PWLC, so is $[TV]$. \qed 
\end{proof}


\begin{proof}[\textbf{Proof of \lemaref{lema:new-pwc-alpha}}]
    By following the proof of \thomref{thom:P-PWLC-closure} and how $\overline{p}_1^{\star}$ and $\overline{\alpha}^{\star}$ are constructed, we can easily verify that in \algoref{alg:point-based-update-belief} $\alpha^{\star}$ is a PWC $\alpha$-function satisfying \eqref{eq:update-lb-condition}. 

    For $V_1, V_2 \in \mathbb{F}(S_B)$, we use the notation $V_1 \leq V_2$ if $V_1(\hat{s}_1, \hat{b}_1) \leq V_2(\hat{s}_1, \hat{b}_1)$ for all $(\hat{s}_1, \hat{b}_1) \in S_B$. 
    Since $\Gamma' = \Gamma \cup \{\alpha^{\star}\}$, 
    then it follows from \defiref{defi:PWLC} 
    that $V_{\mathit{lb}}^{\Gamma} \leq V_{\mathit{lb}}^{\Gamma'}$.

In \algoref{alg:point-based-update-belief}, if the backup at line 4 is executed, then the maxsup operator is applied to some states in $\phi$ which may result in non-optimal minimax backup for other states in $\phi$, and if the backup at line 5 is executed, $\alpha^{\star}$ is assigned the lower bound $L$ over $\phi$. Therefore we have for any $(\hat{s}_1,\hat{b}_1) \in S_B$:
\begin{align}
\langle \alpha^{\star}, (\hat{s}_1,\hat{b}_1) \rangle 
& \leq [TV_{\mathit{lb}}^{\Gamma}](\hat{s}_1,\hat{b}_1) \nonumber \\
& \leq [TV^{\star}](\hat{s}_1,\hat{b}_1) & \mbox{since $V_{\mathit{lb}}^{\Gamma} \leq V^{\star}$} \nonumber \\
& = V^{\star} (\hat{s}_1,\hat{b}_1) & \mbox{by \thomref{thom:operator-equivalence}.} \label{eq:lb-alpha-v-1}
    \end{align}
Combining this inequality with $V_{\mathit{lb}}^{\Gamma} \leq V^{\star}$, we have $V_{\mathit{lb}}^{\Gamma'} \leq V^{\star}$ as required. \qed
\end{proof}


\begin{proof}[\textbf{Proof of \lemaref{lema:upper-bound-update}}]
     Combining \thomref{thom:convexity-continuity}, \eqref{eq:new-ub} and \eqref{eq:K-UB-condition-2}, 
     the conclusion can be obtained by following the argument in the proof of \cite[Lemma 4]{RY-GS-GN-DP-MK:23} for NS-POMDPs. \qed
\end{proof}

\noindent
The following lemma is required to prove the convergence of the algorithm.

\begin{lema}[Finite terminal belief points]\label{lema:finite-terminal-beliefs}
   For any $t \ge 0$, if  $\Psi_t \subseteq S_B$ of belief points where the trials performed by the procedure $\mathit{Explore}$ of \algoref{alg:NS-HSVI} terminated at exploration depth $t$, then $\Psi_t$ is a finite set.
\end{lema}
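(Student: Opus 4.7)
The plan is to bound $|\Psi_t|$ by the total number of trials (iterations of the outer loop in \algoref{alg:NS-HSVI}) and then show that the latter is finite. First, observe that each recursive call to $\mathit{Explore}$ makes at most one further recursive call (line 10), so a single trial traces a single path from the root $(s_1^{\mathit{init}}, b_1^{\mathit{init}})$ down to a unique terminal belief. Hence each trial contributes at most one element to $\bigcup_{t\ge 0} \Psi_t$, and in particular $|\Psi_t|$ is bounded by the total number of trials ever performed.

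Second, I would show that trials have uniformly bounded depth. Using the recurrence $\rho(t{+}1) = (\rho(t) - 2(U{-}L)\bar{\varepsilon})/\beta$ with $\bar{\varepsilon} \in (0,(1{-}\beta)\varepsilon/(2(U{-}L)))$, a short induction shows that $\rho$ is strictly increasing and diverges. Since \lemaref{lema:new-pwc-alpha}, \lemaref{lema:upper-bound-update} and the initialisation scheme of \cite{KH-BB-VK-CK:23} keep the invariant $L \leq V_{\mathit{lb}}^{\Gamma} \leq V^\star \leq V_{\mathit{ub}}^{\Upsilon} \leq U$ at every belief, there is some $t^\star$ with $\rho(t^\star) \geq U - L$. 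Then $\mathit{excess}_{t^\star}(\cdot) \leq 0$ identically, so the check at line 9 fails for every candidate $(\hat{a}_1, \hat{s}_1)$ and no trial can explore past depth $t^\star$. Consequently $\Psi_t = \emptyset$ for $t > t^\star$.

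Third, I would show that only finitely many trials are ever performed. Adapting the standard HSVI contraction argument for finite one-sided POSGs (cf.\ \cite{KH-BB-VK-CK:23}) to our setting, each completed trial reduces the root gap $V_{\mathit{ub}}^{\Upsilon}(s_1^{\mathit{init}}, b_1^{\mathit{init}}) - V_{\mathit{lb}}^{\Gamma}(s_1^{\mathit{init}}, b_1^{\mathit{init}})$ by at least a positive amount proportional to $\bar{\varepsilon}$. The argument rolls the local excess reduction at the depth-$t\leq t^\star$ leaf back to the root along the trial's path, using the $\beta$-contraction of the minimax operator from \thomref{thom:operator-equivalence} together with the heuristic selection rule \eqref{eq:max-action-observation}, which guarantees that at every intermediate belief all unexplored siblings carry non-positive weighted excess. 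Since the initial root gap is at most $U-L$ and the outer loop exits as soon as it drops below $\varepsilon$, only finitely many trials are possible, whence $|\Psi_t| < \infty$. The principal obstacle will be this per-trial root-gap reduction: propagating the leaf's excess bound back through the interleaved lower/upper-bound backups at lines 7 and 11 requires the triangle-type property \eqref{eq:K-UB-condition-2} of $K_{\mathit{ub}}$ to absorb the continuous-state belief mismatch incurred by the Bayesian updates $b_1^{s_1,\hat{a}_1,u_2^{\mathit{lb}},\hat{s}_1}$, a complication absent in the finite-state analysis of \cite{KH-BB-VK-CK:23}.
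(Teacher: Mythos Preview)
Your third step contains a genuine gap, and the overall strategy inverts the logical order of the paper.

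The assertion that the heuristic \eqref{eq:max-action-observation} ``guarantees that at every intermediate belief all unexplored siblings carry non-positive weighted excess'' is false. At an \emph{intermediate} (non-terminal) belief the check on line~9 has \emph{passed}, so the selected child has strictly positive weighted excess; the unselected siblings merely have weighted excess no larger than the selected one, which may still be strictly positive. Consequently, when you attempt to roll the leaf's excess bound back through the $\beta$-contraction at depth $k<t$, the sum $\sum_{a_1,s_1'} P(\cdot)\big(V_{\mathit{ub}}^{\Upsilon}-V_{\mathit{lb}}^{\Gamma}\big)$ picks up unbounded contributions from those siblings, and no uniform per-trial decrement of the root gap follows. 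This is not a continuous-state complication: neither \cite{KH-BB-VK-CK:23} nor the original HSVI termination proof proceeds via a fixed per-trial root-gap reduction; both use a packing argument in the (compact) belief space.

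The paper's proof is exactly such a packing argument. It first establishes Lipschitz continuity of both bounds, $|V_{\mathit{lb}}^{\Gamma}(s_1,b_1)-V_{\mathit{lb}}^{\Gamma}(s_1,b_1')|\le K(b_1,b_1')$ and the analogue for $V_{\mathit{ub}}^{\Upsilon}$ via $K_{\mathit{ub}}$ (here the triangle property \eqref{eq:K-UB-condition-2} is what is actually used). For any terminal $(s_1^t,b_1^t)\in\Psi_t$, failure of line~9 means every reachable child has non-positive excess at depth $t{+}1$; the contraction bound then gives $[TV_{\mathit{ub}}^{\Upsilon}](s_1^t,b_1^t)-[TV_{\mathit{lb}}^{\Gamma}](s_1^t,b_1^t)\le\beta\rho(t{+}1)=\rho(t)-2(U{-}L)\bar{\varepsilon}$, so the post-update excess at $(s_1^t,b_1^t)$ is at most $-2(U{-}L)\bar{\varepsilon}$. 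Continuity forces $\mathit{excess}_t(s_1^t,b_1)\le 0$ whenever $K_{\mathit{ub}}(b_1,b_1^t)\le(U{-}L)\bar{\varepsilon}$; since joining $\Psi_t$ requires positive excess at entry and the bounds are monotone, no such nearby belief can later join $\Psi_t$. Hence distinct points of $\Psi_t$ sharing the same $s_1$ are $(U{-}L)\bar{\varepsilon}$-separated, and total boundedness of $\mathbb{P}(S_E)$ (together with finiteness of $S_1$) gives the result. In the paper this lemma is the \emph{input} to the termination proof of \thomref{thom:NS-HSVI}, not a consequence of it.
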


\begin{proof}
    Consider any $t \ge 0$ and suppose that $\Psi_t \subseteq S_B$ is the set of belief points where the trials performed by the procedure $\mathit{Explore}$ terminated at depth $t$.
    In order to prove that $\Psi_t$ is a finite set, we first need to show the following continuity of the lower and upper bounds. Using the same argument in the proof \thomref{thom:convexity-continuity}, we can prove that the lower bound $V_{\mathit{lb}}^{\Gamma}$ also has the continuity property of \thomref{thom:convexity-continuity}, i.e., for any $(s_1, b_1), (s_1, b_1') \in S_B$:
    \begin{equation}\label{eq:lower-bound-continuity}
        |V_{\mathit{lb}}^{\Gamma}(s_1, b_1) - V_{\mathit{lb}}^{\Gamma}(s_1, b_1')| \leq K(b_1, b_1') \, .
    \end{equation}
    We still consider two beliefs $(s_1, b_1), (s_1, b_1') \in S_B$. Let $(\lambda_i^{\star\prime} )_{i \in I_{s_{\scale{.75}{1}}}}$ be the solution for $V_{\mathit{ub}}^{\Upsilon}(s_1, b_1')$ in \eqref{eq:new-ub}, i.e., 
    \begin{equation}\label{terminal1-eqn}
    V_{\mathit{ub}}^{\Upsilon}(s_1, b_1') = \mbox{$\sum\nolimits_{i \in I_{s_{\scale{.75}{1}}}}$}\lambda_i^{\star\prime} y_i + K_{\mathit{ub}}(b_1', \mbox{$\sum\nolimits_{i \in I_{s_{\scale{.75}{1}}}}$} \lambda_i^{\star\prime} b_1^i) \, . 
    \end{equation}
    Now since $(\lambda_i^{\star\prime} )_{i \in I_{s_{\scale{.75}{1}}}}$ satisfies the constraints in \eqref{eq:new-ub} for $I_{s_{\scale{.75}{1}}}$, it follows that:
    \begin{align*}
        \lefteqn{V_{\mathit{ub}}^{\Upsilon}(s_1, b_1) 
        \leq \mbox{$\sum\nolimits_{i \in I_{s_{\scale{.75}{1}}}}$}\lambda_i^{\star\prime} y_i + K_{\mathit{ub}}(b_1, \mbox{$\sum\nolimits_{i \in I_{s_{\scale{.75}{1}}}}$} \lambda_i^{\star} b_1^i)} \\
        & = \big(V_{\mathit{ub}}^{\Upsilon}(s_1, b_1') - K_{\mathit{ub}}(b_1', \mbox{$\sum\nolimits_{i \in I_{s_{\scale{.75}{1}}}}$} \lambda_i^{\star} b_1^i) \big) + K_{\mathit{ub}}(b_1, \mbox{$\sum\nolimits_{i \in I_{s_{\scale{.75}{1}}}}$} \lambda_i^{\star\prime} b_1^i) & \mbox{by \eqref{terminal1-eqn}} \\
       & = V_{\mathit{ub}}^{\Upsilon}(s_1, b_1') + \big( K_{\mathit{ub}}(b_1, \mbox{$\sum\nolimits_{i \in I_{s_{\scale{.75}{1}}}}$} \lambda_i^{\star\prime} b_1^i) - K_{\mathit{ub}}(b_1', \mbox{$\sum\nolimits_{i \in I_{s_{\scale{.75}{1}}}}$} \lambda_i^{\star} b_1^i) \big)  & \mbox{rearranging} \\
        & \leq  V_{\mathit{ub}}^{\Upsilon}(s_1, b_1')  + K_{\mathit{ub}}(b_1, b_1') & \mbox{by \eqref{eq:K-UB-condition-2}.}
    \end{align*}
    Using similar steps we can also show that:
    \[
    V_{\mathit{ub}}^{\Upsilon}(s_1, b_1') \leq V_{\mathit{ub}}^{\Upsilon}(s_1, b_1) +  K_{\mathit{ub}}(b_1, b_1')
    \]
    and hence:
    \begin{equation}\label{eq:upper-bound-continuity}
    |V_{\mathit{ub}}^{\Upsilon}(s_1, b_1) - V_{\mathit{ub}}^{\Upsilon}(s_1, b_1')| \leq K_{\mathit{ub}}(b_1, b_1') \,.
    \end{equation}
    Let a belief point $(s_1^t, b_1^t) \in \Psi_t$. Since the procedure $\mathit{Explore}$ terminates at $(s_1^t, b_1^t)$ with exploration depth $t$, then the action-observation pair $(\hat{a}_1, \hat{s}_1)$ computed by \eqref{eq:max-action-observation} (from line 7 of \algoref{alg:NS-HSVI}) satisfies
    \[
     P(\hat{a}_1, \hat{s}_1 \mid (s_1^t, b_1^t), u_1^{\mathit{ub}}, u_2^{\mathit{lb}}) \mathit{excess}_{t+1}(\hat{s}_1, b_1^{s_1^t,\hat{a}_1,u_2^{\mathit{lb}}, \hat{s}_1}) \leq 0 \, .
    \]
    Thus, for any $(a_1, s_1') \in A_1 \times S_1$, if $ P(a_1, s_1' \mid (s_1^t, b_1^t), u_1^{\mathit{ub}}, u_2^{\mathit{lb}}) > 0$, then we have $\mathit{excess}_{t+1}(s_1', b_1^{s_1^t,a_1,u_2^{\mathit{lb}}, s_1'}) \leq 0$, i.e.,
    \begin{equation}\label{eq:rho-t-1-bound}
        V_{\mathit{ub}}^{\Upsilon}(s_1', b_1^{s_1^t,a_1,u_2^{\mathit{lb}}, s_1'}) -  V_{\mathit{lb}}^{\Gamma}(s_1', b_1^{s_1^t,a_1,u_2^{\mathit{lb}}, s_1'}) \leq \rho(t+1) \,.
    \end{equation}
    Let $(u_1^{\mathit{lb}}, u_2^{\mathit{lb}})$ and $(u_1^{\mathit{ub}}, u_2^{\mathit{ub}})$ be the minimax strategy profiles in stage games $[TV_{\mathit{lb}}^{\Gamma}](s_1^t, b_1^t)$ and $[TV_{\mathit{ub}}^{\Upsilon}](s_1^t, b_1^t)$, respectively. Then, we denote by $J^{\mathit{lb}}(u_1, u_2)$ and $J^{\mathit{ub}}(u_1, u_2)$ the value of the stage game at $(s_1^t, b_1^t)$ under the strategy pair $(u_1, u_2) \in \mathbb{P}(A_1) \times \mathbb{P}(A_2 \mid S)$ when computing the backup values in \eqref{eq:J-function} via $V_{\mathit{lb}}^{\Gamma}$ and $V_{\mathit{ub}}^{\Upsilon}$, respectively. Thus, since $(u_1^{\mathit{lb}}, u_2^{\mathit{lb}})$ is a minimax strategy profile:
    \begin{align}
        J^{\mathit{lb}}(u_1^{\mathit{ub}}, u_2^{\mathit{lb}}) 
        & \leq J^{\mathit{lb}}(u_1^{\mathit{lb}}, u_2^{\mathit{lb}}) \nonumber \\
        & = [TV_{\mathit{lb}}^{\Gamma}](s_1^t, b_1^t) & \mbox{by definition of $J^{\mathit{lb}}$}\nonumber \\
        & \leq [TV_{\mathit{ub}}^{\Upsilon}](s_1^t, b_1^t) & \mbox{by Lemmas~\ref{lema:new-pwc-alpha} and \ref{lema:upper-bound-update}} \nonumber \\
        & = J^{\mathit{ub}}(u_1^{\mathit{ub}}, u_2^{\mathit{ub}}) & \mbox{by definition of $J^{\mathit{ub}}$} \nonumber \\
        & \leq J^{\mathit{ub}}(u_1^{\mathit{ub}}, u_2^{\mathit{lb}}) & \mbox{$(u_1^{\mathit{ub}}, u_2^{\mathit{ub}})$ is a minimax strategy profile.} \label{eq:equilibrium-based-bounds}
    \end{align}
    Now using
     \eqref{eq:equilibrium-based-bounds} we have: 
    \begin{align}
        \lefteqn{[TV_{\mathit{ub}}^{\Upsilon}](s_1^t, b_1^t) - [TV_{\mathit{lb}}^{\Gamma}](s_1^t, b_1^t) \leq J^{\mathit{ub}}(u_1^{\mathit{ub}}, u_2^{\mathit{lb}}) - J^{\mathit{lb}}(u_1^{\mathit{ub}}, u_2^{\mathit{lb}})}  \nonumber \\
        & = \beta \mbox{$\sum_{a_1, s_1' \in A_1 \times S_1}$} P(a_1, s_1' \mid (s_1^t, b_1^t), u_1^{\mathit{ub}}, u_2^{\mathit{lb}} ) \nonumber \\
        & \quad ( V_{\mathit{ub}}^{\Gamma}(s'_1, b_1^{s_1^t,a_1, u_2^{\mathit{lb}}, s_1'}) - V_{\mathit{lb}}^{\Gamma}(s'_1, b_1^{s_1^t,a_1, u_2^{\mathit{lb}}, s_1'})) & \mbox{by \eqref{eq:J-function}} \nonumber \\
        & \lefteqn{\;\leq \beta \mbox{$\sum_{a_1, s_1'  \in A_1 \times S_1}$} P(a_1, s_1' \mid (s_1^t, b_1^t), u_1^{\mathit{ub}}, u_2^{\mathit{lb}} ) \rho(t+1)} & \mbox{by \eqref{eq:rho-t-1-bound}}  \nonumber \\
        & = \beta \rho(t + 1) & \mbox{since $P$ is a distribution.} \label{eq:difference-bound-beta-rho}
    \end{align}
    Substituting \eqref{eq:difference-bound-beta-rho} into the excess gap $\mathit{excess}_{t}(s_1^t, b_1^t)$ we have that the excess gap after performing the point-based update at $(s_1^t, b_1^t)$ in line 10 of \algoref{alg:NS-HSVI}:
    \begin{align*}
         \mathit{excess}_{t}(s_1^t, b_1^t) &  \leq \beta \rho(t + 1) - \rho(t) \\
         & = \rho(t) - 2(U - L)\bar{\varepsilon} - \rho(t) & \mbox{by definition of $\rho(t + 1)$} \\
         & = - 2(U - L) \bar{\varepsilon} & \mbox{rearranging.}
    \end{align*}
    Due to the continuity \eqref{eq:lower-bound-continuity} and \eqref{eq:upper-bound-continuity}, for any $(s_1, b_1), (s_1, b_1') \in S_B$, we have
    \begin{equation}\label{excess2-eq}
        V_{\mathit{ub}}^{\Upsilon}(s_1, b_1) -  V_{\mathit{lb}}^{\Gamma}(s_1, b_1) \leq V_{\mathit{ub}}^{\Upsilon}(s_1, b_1') - V_{\mathit{lb}}^{\Gamma}(s_1, b_1') + 2K_{\mathit{ub}}(b_1, b_1') \,.
    \end{equation}
    Now, for every belief $(s_1^t, b_1) \in S_B$ satisfying $K_{\mathit{ub}}(b_1, b_1^t) \leq (U-L)\bar{\varepsilon}$, substituting \eqref{excess2-eq} into the excess gap $\mathit{excess}_{t}(s_1^t, b_1)$:
    \begin{align*}
        \lefteqn{\mathit{excess}_{t}(s_1^t, b_1)  \leq V_{\mathit{ub}}^{\Upsilon}(s_1^t, b_1^t) -  V_{\mathit{lb}}^{\Gamma}(s_1^t, b_1^t) + 2K_{\mathit{ub}}(b_1, b_1^t) - \rho(t)} \\
        & \beta \rho(t + 1) + 2K_{\mathit{ub}}(b_1, b_1^t) - \rho(t)  & \mbox{by \eqref{eq:difference-bound-beta-rho}} \\
        & \leq \rho(t) - 2(U - L) \bar{\varepsilon} + 2K_{\mathit{ub}}(b_1, b_1^t) - \rho(t)  & \mbox{by definition of $\rho(t + 1)$} \\
        & \leq - 2(U - L) \bar{\varepsilon} + 2(U-L)\bar{\varepsilon} & \mbox{since $K_{\mathit{ub}}(b_1, b_1^t) \leq (U-L)\bar{\varepsilon}$}  \\
        & = 0  & \mbox{rearranging}
    \end{align*}
    which means that $(s_1^t, b_1) \notin \Psi_t$. Since $\mathbb{P}(S_E)$ is compact and thus totally bounded, we can conclude that $\Psi_t$ is finite. \qed
\end{proof}


\begin{proof}[\textbf{Proof of \thomref{thom:NS-HSVI}}]
    By the choice of $\bar{\varepsilon}$, the sequence $(\rho(t))_{t \in \mathbb{N}}$ is monotonically increasing and unbounded. Since $L \leq V_{\mathit{lb}}^{\Gamma}(s_B) \leq V_{\mathit{ub}}^{\Upsilon}(s_B) \leq U$ for all $s_B \in S_B$, the difference between $V_{\mathit{lb}}^{\Gamma}$ and $V_{\mathit{ub}}^{\Upsilon}$ is bounded by $U -L$. Therefore, there exists $T_{\max}$ such that $\rho(T_{\max}) \ge U - L \ge V_{\mathit{ub}}^{\Upsilon}(s_B) - V_{\mathit{lb}}^{\Gamma}(s_B)$ for all $s_B \in S_B$, and therefore the recursive procedure $\mathit{Explore}$ always terminates.

    To demonstrate that \algoref{alg:NS-HSVI} terminates, we reason about the sets $\Psi_t \subseteq S_B$ of belief points where the trials performed by the procedure $\mathit{Explore}$ terminated at exploration depth $t$. Initially, $\Psi_t = \emptyset$ for every $0 \leq t < T_{max}$. Whenever the $\mathit{Explore}$ recursion terminates at exploration depth $t$ (i.e., the condition on line 9 does not hold), the belief $s_B^t$ (which was the last belief considered during the trial) is added into the set $\Psi_t$, i.e., $\Psi_t \triangleq \Psi_t \cup \{ s_B^t \}$. 
    Since the agent state space $S_1$ is finite and the number of possible termination depth is finite ($0 \leq t < T_{max}$) and the set $\Psi_t$ is finite by \lemaref{lema:finite-terminal-beliefs}, the algorithm has to terminate. Then, combining Lemmas \ref{lema:new-pwc-alpha} and \ref{lema:upper-bound-update}, the conclusion follows directly. \qed
\end{proof}


\begin{lema}[LP for upper bound]\label{lema:pb-upper-bound}
    For particle-based belief $(s_1, b_1)$, let $P(s_E ; b_1) $ be the probability of particle $s_E$ under $b_1$. if the function $K_{\mathit{ub}}$ equals $K$, i.e.,
    \begin{equation}\label{eq:pb-k-ub}
   K_{\mathit{ub}}(b_1, b_1') = \mbox{$\frac{1}{2} $} (U - L)  \mbox{$\sum\nolimits_{ b_1(s_E) + b_1'(s_E) > 0 } $} |P(s_E ; b_1) - P(s_E ; b_1')|
\end{equation}
    then $V_{\mathit{ub}}^{\Upsilon}(s_1,b_1)$ is the optimal value of the LP: 
 \begin{align*}
     \lefteqn{\mbox{\rm minimise} \; \;  \mbox{$\sum_{k \in I_{s_{\scale{.75}{1}}}}$} \lambda_k y_k  + 1/2 (U - L) \mbox{$\sum\nolimits_{s_E \in S_E^{+}}$} c_{s_E} \; \; \mbox{\rm subject to}} \\
  & \quad c_{s_E} \ge | P(s_E; b_1) - \mbox{$\sum_{k \in I_{s_{\scale{.75}{1}}}}$} \lambda_k P(s_E; b_1^k) |,  \; \lambda_k \ge 0 \;\; \mbox{and} \; \; \mbox{$\sum_{k \in I_{s_{\scale{.75}{1}}}}$} \lambda_k = 1 
\end{align*}
for $s_E \in S_E^{+}$ and $k \in I_{s_{\scale{.75}{1}}}$, where $S_E^{+} = \{ s_E \in S_E \mid b_1(s_E) + \sum_{k \in I_{s_{\scale{.75}{1}}}} b_1^k(s_E) > 0 \}$. 
\end{lema}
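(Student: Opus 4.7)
The plan is to start from the defining LP \eqref{eq:new-ub} for $V_{\mathit{ub}}^{\Upsilon}(s_1,b_1)$, substitute in the concrete choice $K_{\mathit{ub}} = K$ using the explicit particle-based formula \eqref{eq:pb-k-ub}, and then linearise the resulting absolute-value terms by introducing the auxiliary variables $c_{s_E}$. The hypothesis that $K_{\mathit{ub}} = K$ satisfies \eqref{eq:K-UB-condition-2} (verified in \sectref{sec:beliefref}) ensures that \eqref{eq:new-ub} is a valid upper-bound representation in the first place, so we may work directly with that LP.

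First, I would note that for a particle-based belief $b_1$ supported on $\{s_E^i\}_{i=1}^{n_b}$ and convex combinations $\sum_{k\in I_{s_1}}\lambda_k b_1^k$ of the particle-based beliefs stored in $\Upsilon$, the mixture $\sum_{k}\lambda_k b_1^k$ is again a weighted finite sum of Dirac deltas. Hence the integral defining $K(b_1,\sum_k \lambda_k b_1^k)$ collapses to a finite sum over the combined support, and the only points $s_E$ that can contribute a nonzero summand are those in $S_E^+ = \{s_E : b_1(s_E) + \sum_{k\in I_{s_1}} b_1^k(s_E) > 0\}$. This yields
\[
K_{\mathit{ub}}\bigl(b_1,\mbox{$\sum_k$}\lambda_k b_1^k\bigr) = \tfrac{1}{2}(U-L)\mbox{$\sum_{s_E \in S_E^+}$}\bigl|P(s_E;b_1) - \mbox{$\sum_k$}\lambda_k P(s_E;b_1^k)\bigr|,
\]
so the objective of \eqref{eq:new-ub} becomes $\sum_k \lambda_k y_k + \tfrac{1}{2}(U-L)\sum_{s_E\in S_E^+}|P(s_E;b_1)-\sum_k \lambda_k P(s_E;b_1^k)|$.

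Next I would perform the standard epigraph trick: replace each absolute value by a fresh nonnegative variable $c_{s_E}$ constrained by $c_{s_E} \geq |P(s_E;b_1) - \sum_k \lambda_k P(s_E;b_1^k)|$ (which in practice expands to the usual pair of linear inequalities, one with each sign). Because $(U-L)/2 \geq 0$ and the $c_{s_E}$ appear only through these constraints, at any optimum each $c_{s_E}$ attains the absolute-value on the right-hand side. Combined with the original simplex constraints $\lambda_k \geq 0$, $\sum_k \lambda_k = 1$, the resulting optimisation is exactly the LP in the statement, and its optimal value coincides with that of \eqref{eq:new-ub}.

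The only subtle point, which I would address explicitly, is to justify restricting the sum in the objective (and the auxiliary constraints) to $S_E^+$ rather than to all of $S_E$. This follows because for any $s_E \notin S_E^+$ both $P(s_E;b_1) = 0$ and $P(s_E;b_1^k) = 0$ for every $k \in I_{s_1}$, so the corresponding absolute value is identically zero and the minimiser trivially sets $c_{s_E}=0$; hence these terms contribute nothing to the objective and can be dropped without changing the optimal value. With this observation the reduction is complete.
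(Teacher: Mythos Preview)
Your proposal is correct and takes essentially the same approach as the paper: the paper's proof is the single line ``The result follows directly from \eqref{eq:new-ub} and \eqref{eq:pb-k-ub},'' and you have simply spelled out what that direct derivation consists of---substituting the particle-based formula for $K_{\mathit{ub}}$ into the defining optimisation \eqref{eq:new-ub}, observing the support reduces to $S_E^{+}$, and applying the standard epigraph linearisation of the absolute values.
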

\begin{proof}
   The result follows directly from \eqref{eq:new-ub} and \eqref{eq:pb-k-ub}. \qed
\end{proof}


\begin{theorem}[LP for minimax operator over upper bound, extended version of \thomref{thom:LP-upper-bound}]
    For the function $K_{\mathit{ub}}$, see \eqref{eq:pb-k-ub}, and particle-based belief $(s_1, b_1)$ represented by $\{ (s_E^i, \kappa_i) \}_{i=1}^{n_b}$, we have that $[TV_{\mathit{ub}}^{\Upsilon}](s_1,b_1)$ is the optimal value of the LP \eqref{eq:LP-minimax-upper}. 
\end{theorem}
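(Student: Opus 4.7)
The plan is to derive the LP \eqref{eq:LP-minimax-upper} by unfolding the minimax operator \eqref{eq:minimax-operator} applied to $V_{\mathit{ub}}^{\Upsilon}$ and substituting the LP characterisation from \lemaref{lema:pb-upper-bound} into each successor-belief term. By the operator equivalence in \thomref{thom:operator-equivalence}, $[TV_{\mathit{ub}}^{\Upsilon}](s_1,b_1)$ can be evaluated in the form $\min_{u_2 \in \mathbb{P}(A_2 \mid S)} \max_{u_1 \in \mathbb{P}(A_1)}$ of the usual expectation. Because the objective is linear in the distribution $u_1$, the inner max reduces to a maximum over pure actions $a_1 \in A_1$, and can be encoded by a scalar variable $v$ together with one constraint per $a_1$ of the form $v \ge \mathbb{E}_{(s_1,b_1),a_1,u_2}[r(s,a)] + \beta \sum_{s_1'} P(s_1' \mid (s_1,b_1),a_1,u_2) \, V_{\mathit{ub}}^{\Upsilon}(s_1',b_1^{s_1,a_1,u_2,s_1'})$, minimised jointly over $v$ and the $u_2$-variables.

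Next, I substitute the inner LP from \lemaref{lema:pb-upper-bound} for each $V_{\mathit{ub}}^{\Upsilon}(s_1',b_1^{s_1,a_1,u_2,s_1'})$, lifting its variables (call them $\tilde{\lambda}_k^{a_1,s_1'}$ and $\tilde{c}_{s_E'}^{a_1,s_1'}$) to the outer LP. The key step is to absorb the factor $P(s_1' \mid (s_1,b_1),a_1,u_2)$ multiplying the inner LP's value in the Bellman term, via the rescaling $\lambda_k^{a_1,s_1'} = P(s_1' \mid \cdot)\,\tilde{\lambda}_k^{a_1,s_1'}$ and $c_{s_E'}^{a_1,s_1'} = P(s_1' \mid \cdot)\,\tilde{c}_{s_E'}^{a_1,s_1'}$. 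The simplex constraint $\sum_k \tilde{\lambda}_k^{a_1,s_1'} = 1$ then becomes $\sum_k \lambda_k^{a_1,s_1'} = P(s_1' \mid (s_1,b_1),a_1,u_2)$ which, on encoding $u_2(a_2 \mid s_1,s_E^i)$ as non-negative variables $p_{a_2}^{s_1,s_E^i}$ with $\sum_{a_2} p_{a_2}^{s_1,s_E^i} = 1$ and using the particle-based formula from \appxref{sec:appendix-probabilities}, matches exactly the linear right-hand side $\sum_{i}\sum_{a_2,s_E'}\kappa_i p_{a_2}^{s_1,s_E^i}\delta((s_1,s_E^i),(a_1,a_2))(s_1',s_E')$ in \eqref{eq:LP-minimax-upper}.

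It remains to verify that the absolute-value piece of \eqref{eq:pb-k-ub} rescales correctly. Using the Bayes identity $P(s_1' \mid \cdot)\,b_1^{s_1,a_1,u_2,s_1'}(s_E') = P((s_1',s_E') \mid (s_1,b_1),a_1,u_2)$, together with its particle-based expansion, the term $P(s_1' \mid \cdot)\cdot K\bigl(b_1^{s_1,a_1,u_2,s_1'},\,\sum_k \tilde{\lambda}_k^{a_1,s_1'}b_1^k\bigr)$ becomes the bilinear-free expression $\tfrac{1}{2}(U-L)\sum_{s_E'}\bigl|\sum_{i,a_2}\kappa_i p_{a_2}^{s_1,s_E^i}\delta((s_1,s_E^i),(a_1,a_2))(s_1',s_E') - \sum_k \lambda_k^{a_1,s_1'} P(s_E';b_1^k)\bigr|$ in the new variables, which is exactly what the $c_{s_E'}^{a_1,s_1'}$ constraint in \eqref{eq:LP-minimax-upper} encodes. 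The index set $S_E^{a_1,s_1'}$ correctly restricts to those $s_E'$ for which at least one side of the difference is non-zero, since outside this set $c_{s_E'}^{a_1,s_1'}$ can be set to zero without affecting the optimum.

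The hard part will be linearity. Before rescaling, products such as $P(s_1' \mid \cdot)\cdot\tilde{\lambda}_k^{a_1,s_1'}$ and $u_2(a_2 \mid \cdot)\cdot\tilde{\lambda}_k^{a_1,s_1'}$ are bilinear in the decision variables; after rescaling, the first becomes a single variable $\lambda_k^{a_1,s_1'}$ and the second is absorbed into the linear-in-$p$ right-hand side above. Crucially, no $u_1 \cdot u_2$ product survives because $\max_{u_1}$ was reduced to a pure maximum. Once all terms are linearised, a direct feasibility argument shows a bijection between feasible solutions of \eqref{eq:LP-minimax-upper} and valid tuples $(u_2,\tilde{\lambda},\tilde{c})$ of the unfolded minimax expression, so the two optima coincide and equal $[TV_{\mathit{ub}}^{\Upsilon}](s_1,b_1)$.
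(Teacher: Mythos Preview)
Your overall strategy matches the paper's: pass to the $\min_{u_2}\max_{u_1}$ form, substitute the LP from \lemaref{lema:pb-upper-bound} for each successor value $V_{\mathit{ub}}^{\Upsilon}(s_1',b_1^{s_1,a_1,u_2,s_1'})$, and rescale by $P(s_1'\mid (s_1,b_1),a_1,u_2)$ to remove the bilinearities. The Bayes identity you use and the resulting constraint forms are exactly those in the paper.

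There is, however, one genuine gap. You invoke \thomref{thom:operator-equivalence} to switch from the defining $\max_{u_1}\min_{u_2}$ form of $T$ to the $\min_{u_2}\max_{u_1}$ form, but that theorem carries a hypothesis: it applies only to $V$ that admits a representation $V(s_1,b_1)=\sup_{\alpha\in\Gamma}\langle\alpha,(s_1,b_1)\rangle$ for some $\Gamma\subseteq\mathbb{F}(S)$. The upper bound $V_{\mathit{ub}}^{\Upsilon}$ is \emph{defined} by the pointwise LP \eqref{eq:new-ub}, not in this form, so the hypothesis is not automatic. The paper closes this gap by first proving directly, from \eqref{eq:new-ub} with the particular choice $K_{\mathit{ub}}=K$ of \eqref{eq:pb-k-ub}, that $V_{\mathit{ub}}^{\Upsilon}(s_1,\cdot)$ is convex on $\mathbb{P}(S_E)$ (and continuous, via \eqref{eq:upper-bound-continuity}), and then invoking the standard fact (cf.\ \cite[Prop.~4.12]{KH-BB-VK-CK:23}) that a convex continuous function on beliefs has such a sup-of-linear representation. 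Without this step, the max--min and min--max values of the stage payoff $J(u_1,u_2)$ need not coincide, and the LP \eqref{eq:LP-minimax-upper} would compute only the latter.

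A minor difference in execution is worth noting. The paper keeps the mixed $\max_{u_1}$, substitutes the inner LP to obtain $\min_{u_2}\max_{u_1}\min_{(\overline{\nu},\overline{c})}J_{u_2}$, and then applies von Neumann's minimax theorem (checking bilinearity of $J_{u_2}$) to swap the two inner operators before reducing to pure $a_1$. You instead reduce $\max_{u_1}$ to $\max_{a_1\in A_1}$ first (valid, since $J(\cdot,u_2)$ is linear in $u_1$) and then lift the inner LP variables, indexed by $(a_1,s_1')$, directly into the outer minimisation via the epigraph encoding. Your route works and avoids the explicit minimax step precisely because the lifted variables are separable across $a_1$: the epigraph constraint $v\ge\min_{\theta^{a_1}}H_{a_1}(\theta^{a_1})$ can be existentially quantified independently for each $a_1$, so $\min_{\{\theta^{a_1}\}}\max_{a_1}H_{a_1}=\max_{a_1}\min_{\theta^{a_1}}H_{a_1}$. (Your claimed ``bijection'' of feasible solutions is not quite literal when $P(s_1'\mid\cdot)=0$, since then the rescaling collapses all $(\tilde\lambda,\tilde c)$ to zero; but the equality of optimal values still holds, which is all you need.)
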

\begin{proof}
    We first prove that given any $s_1 \in S_1$, $V_{\mathit{ub}}^{\Upsilon}(s_1, \cdot)$ is a convex function. Consider any two beliefs $b_1, b_1' \in \mathbb{P}(S_E)$ and $\tau, \tau' \ge 0$ such that $\tau + \tau'=1$. Let $(\lambda_k^{\star} )_{k \in I_{s_{\scale{.75}{1}}}}$ and $( \lambda_k^{\prime\star} )_{k \in I_{s_{\scale{.75}{1}}}}$ be optimal solutions of  \eqref{eq:new-ub} for $V_{\mathit{ub}}^{\Upsilon}(s_1, b_1)$ and $V_{\mathit{ub}}^{\Upsilon}(s_1, b_1')$ respectively, i.e., 
    \begin{align}
        V_{\mathit{ub}}^{\Upsilon}(s_1, b_1) & = \mbox{$\sum\nolimits_{k \in I_{s_{\scale{.75}{1}}}}$}\lambda_k^{\star} y_k + K_{\mathit{ub}}(b_1, \mbox{$\sum\nolimits_{k \in I_{s_{\scale{.75}{1}}}}$} \lambda_k^{\star} b_1^k) \nonumber \\
        V_{\mathit{ub}}^{\Upsilon}(s_1, b_1') & = \mbox{$\sum\nolimits_{k \in I_{s_{\scale{.75}{1}}}}$}\lambda_k^{\prime\star} y_k + K_{\mathit{ub}}(b_1', \mbox{$\sum\nolimits_{k \in I_{s_{\scale{.75}{1}}}}$} \lambda_k^{\prime\star} b_1^k) \,. \label{eq:two-upper-bounds}
    \end{align} 
    From the constraints of \eqref{eq:new-ub} it follows that:
    \begin{equation}\label{s4-eq}
    \mbox{$\tau \lambda_k^{\star} + \tau' \lambda_k^{\prime\star} \ge 0$ for all $k \in I_{s_1}$ and  $\mbox{$\sum\nolimits_{k \in I_{s_{\scale{.75}{1}}}}$} (\tau \lambda_k^{\star} + \tau' \lambda_k^{\prime\star}) = 1$.}
    \end{equation}
    Also let:
    \begin{align}
    S_E^1 & = \{ s_E \in S_E \mid b_1(s_E) + b_1'(s_E) + \mbox{$\sum_{k \in I_{s_{\scale{.75}{1}}}}$} b_1^k(s_E) > 0 \} \label{s1-eq} \\
    S_E^2 &= \{ s_E \in S_E \mid b_1(s_E) + \mbox{$\sum_{k \in I_{s_{\scale{.75}{1}}}}$} b_1^k(s_E) > 0 \} \label{s2-eq} \\
    S_E^3 &= \{ s_E \in S_E \mid b_1'(s_E) + \mbox{$\sum_{k \in I_{s_{\scale{.75}{1}}}}$} b_1^k(s_E) > 0 \} \, . \label{s3-eq}
    \end{align}
    Now using \eqref{eq:pb-k-ub} and \eqref{s1-eq} we have:
    \begin{align}
        & K_{\mathit{ub}}( \tau b_1 + \tau' b_1', \mbox{$\sum\nolimits_{k \in I_{s_{\scale{.75}{1}}}}$} (\tau \lambda_k^{\star} + \tau' \lambda_k^{\prime\star}) b_1^k) \nonumber \\
        & \lefteqn{\; = \mbox{$\frac{1}{2} $} (U - L)  \mbox{$\sum\nolimits_{ s_E \in S_E^1 }$} |\tau b_1(s_E) + \tau' b_1'(s_E) - \mbox{$\sum\nolimits_{k \in I_{s_{\scale{.75}{1}}}}$} (\tau \lambda_k^{\star} + \tau' \lambda_k^{\prime\star}) b_1^k(s_E) |} \nonumber \\
       & \leq  \mbox{$\frac{1}{2} $} (U - L) \mbox{$\sum\nolimits_{ s_E \in S_E^1 }$} \Big( \Big| \tau \big( b_1(s_E) -  \mbox{$\sum\nolimits_{k \in I_{s_{\scale{.75}{1}}}}$} \lambda_k^{\star} b_1^k(s_E) \big) \nonumber \\
        & \qquad + \tau' \big( b_1'(s_E) -  \mbox{$\sum\nolimits_{k \in I_{s_{\scale{.75}{1}}}}$} \lambda_k^{\prime\star} b_1^k(s_E) \big) \Big| \Big)  & \mbox{rearranging} \nonumber \\
        & =  \mbox{$\frac{1}{2} $} (U - L) \mbox{$\sum\nolimits_{ s_E \in S_E^1 }$} \Big( \tau |b_1(s_E) -  \mbox{$\sum\nolimits_{k \in I_{s_{\scale{.75}{1}}}}$} \lambda_k^{\star} b_1^k(s_E) | \nonumber \\
        & \qquad + \tau'  |b_1'(s_E) -  \mbox{$\sum\nolimits_{k \in I_{s_{\scale{.75}{1}}}}$} \lambda_k^{\prime\star} b_1^k(s_E) | \Big)  & \mbox{since $\tau,\tau' \geq 0$} \nonumber \\
        & = \mbox{$\frac{1}{2} $} (U - L)  \tau  \mbox{$\sum\nolimits_{ s_E \in S_E^2 }$} \big|b_1(s_E) -  \mbox{$\sum\nolimits_{k \in I_{s_{\scale{.75}{1}}}}$} \lambda_k^{\star} b_1^k(s_E) \big| \nonumber \\
        & \quad +  \mbox{$\frac{1}{2} $} (U - L)  \tau'   \mbox{$\sum\nolimits_{ s_E \in S_E^3 }$} \big |b_1'(s_E) -  \mbox{$\sum\nolimits_{k \in I_{s_{\scale{.75}{1}}}}$} \lambda_k^{\prime\star} b_1^k(s_E)  \big|  & \mbox{by \eqref{s2-eq} and \eqref{s3-eq}} \nonumber \\
        & = \tau K_{\mathit{ub}}( b_1, \mbox{$\sum\nolimits_{k \in I_{s_{\scale{.75}{1}}}}$}\lambda_k^{\star} b_1^k) + \tau' K_{\mathit{ub}}( b_1', \mbox{$\sum\nolimits_{k \in I_{s_{\scale{.75}{1}}}}$} \lambda_k^{\prime\star}b_1^k) \label{eq:weighted-inequality-KUB}
    \end{align}
    Next, from \eqref{eq:new-ub} we have:
    \begin{align*}
        \lefteqn{V_{\mathit{ub}}^{\Upsilon}(s_1, \tau b_1 + \tau' b_1') = \textup{ min}_{(\lambda_k)_{k \in I_{s_{\scale{.75}{1}}}}} \; \mbox{$\sum\nolimits_{k \in I_{s_{\scale{.75}{1}}}}$}\lambda_k y_k + K_{\mathit{ub}}( \tau b_1 + \tau' b_1', \mbox{$\sum\nolimits_{k \in I_{s_{\scale{.75}{1}}}}$} \lambda_k b_1^k)} \\
        & \leq \mbox{$\sum\nolimits_{k \in I_{s_{\scale{.75}{1}}}}$}(\tau \lambda_k^{\star} + \tau' \lambda_k^{\prime\star}) y_k + K_{\mathit{ub}}( \tau b_1 + \tau' b_1', \mbox{$\sum\nolimits_{k \in I_{s_{\scale{.75}{1}}}}$} (\tau \lambda_k^{\star} + \tau' \lambda_k^{\prime\star}) b_1^k) & \mbox{by \eqref{s4-eq}} \\
        & \leq \mbox{$\sum\nolimits_{k \in I_{s_{\scale{.75}{1}}}}$}(\tau \lambda_k^{\star} + \tau' \lambda_k^{\prime\star}) y_k + \tau K_{\mathit{ub}}( b_1, \mbox{$\sum\nolimits_{k \in I_{s_{\scale{.75}{1}}}}$}\lambda_k^{\star} b_1^k) \\
        & \qquad + \tau' K_{\mathit{ub}}( b_1', \mbox{$\sum\nolimits_{k \in I_{s_{\scale{.75}{1}}}}$} \lambda_k^{\prime\star}b_1^k) & \mbox{by \eqref{eq:weighted-inequality-KUB}} \\
        & = \tau V_{\mathit{ub}}^{\Upsilon}(s_1, b_1) + \tau' V_{\mathit{ub}}^{\Upsilon}(s_1, b_1')  & \mbox{by \eqref{eq:two-upper-bounds}}
    \end{align*}
    and hence $V_{\mathit{ub}}^{\Upsilon}(s_1, \cdot)$ is convex in $\mathbb{P}(S_E)$. 
    
    The inequality \eqref{eq:upper-bound-continuity} shows that $V_{\mathit{ub}}^{\Upsilon}(s_1, \cdot)$ is continuous in $\mathbb{P}(S_E)$. By following the proof of \cite[Proposition 4.12]{KH-BB-VK-CK:23}, we can prove that there exists a set $\Gamma'$ of functions $\mathbb{F}(S)$ such that $V_{\mathit{ub}}^{\Upsilon}(s_1, b_1) = \sup_{\alpha \in \Gamma'} \langle \alpha, (s_1,b _1) \rangle $ for all $(s_1, b_1) \in S_B$. Therefore, according to \thomref{thom:operator-equivalence}, for any $(s_1, b_1) \in S_B$:
    \begin{align}
         [T V_{\mathit{ub}}^{\Upsilon}](s_1, b_1) & =  \mbox{$\max_{u_1\in \mathbb{P}(A_1)}$} \mbox{$\min_{u_2\in \mathbb{P}(A_2 \mid S)} $} \mathbb{E}_{(s_1,b_1),u_1,u_2} [r(s,a)] \nonumber \\
        & \quad + \beta \mbox{$\sum_{a_1, s_1'}$} P(a_1, s_1' \mid (s_1, b_1), u_1, u_2 ) V_{\mathit{ub}}^{\Upsilon} (s'_1, b_1^{s_1,a_1, u_2, s_1'}) \nonumber \\
         & = \mbox{$\min_{u_2 \in \mathbb{P}(A_2 \mid S)} $}  \mbox{$ \max_{u_1\in \mathbb{P}(A_1)} $} \mathbb{E}_{(s_1,b_1),u_1,u_2} [r(s,a)] \nonumber \\
        &  \quad + \beta \mbox{$ \sum_{a_1, s_1'} $} P(a_1, s_1' \mid (s_1, b_1), u_1, u_2 ) V_{\mathit{ub}}^{\Upsilon}(s'_1, b_1^{s_1,a_1, u_2, s_1'}) \,. \label{eq:maxmin-minmax-equivalence-upper}
    \end{align}
     We now define a payoff function $J: \mathbb{P}(A_1) \times \mathbb{P}(A_2 \mid S) \to \mathbb{R}$ to be the objective of the maximin and minimax optimisation in \eqref{eq:maxmin-minmax-equivalence-upper} such that for $u_1 \in \mathbb{P}(A_1)$ and $u_2 \in \mathbb{P}(A_2 \mid S)$, letting $E_1 = \mathbb{E}_{(s_1,b_1),u_1,u_2} [r(s,a)]$, $p^{a_1} = u_1(a_1)$,  $p^{a_1, u_2, s_1'} = P(s_1' \mid (s_1, b_1), a_1, u_2 )$ 
    then we have:
    \begin{align}
        & \lefteqn{J(u_1, u_2) = E_1 + \beta \mbox{$\sum_{a_1, s_1'} $} p^{a_1} p^{a_1, u_2, s_1'}  V_{\mathit{ub}}^{\Upsilon}(s'_1, b_1^{s_1,a_1, u_2, s_1'})} \nonumber \\
        & =E_1 + \beta \mbox{$\sum_{a_1, s_1' \in A_1 \times S_1}$} p^{a_1} p^{a_1, u_2, s_1'} 
        \mbox{$\min_{(\lambda_k)_{k \in I_{s'_{\scale{.75}{1}}}}}$} \nonumber \\
        & \quad \left(\mbox{$\sum\nolimits_{k \in I_{s'_{\scale{.75}{1}}}}$}\lambda_k y_k +  K_\mathit{ub} \big( b_1^{s_1,a_1, u_2, s_1'}, \mbox{$\sum\nolimits_{k \in I_{s'_{\scale{.75}{1}}}}$} \lambda_k b_1^{s_1,a_1, u_2, s_1'} \big) \right) & \mbox{by \eqref{eq:new-ub}.} \nonumber 
    \end{align}
    Now combining this with \eqref{eq:pb-k-ub} we have:
       \begin{align}
        & \lefteqn{J(u_1, u_2) = E_1 + \beta \mbox{$\sum_{a_1, s_1'} $} p^{a_1} p^{a_1, u_2, s_1'}  V_{\mathit{ub}}^{\Upsilon}(s'_1, b_1^{s_1,a_1, u_2, s_1'})} \nonumber \\
        & =E_1 + \beta \mbox{$\sum_{a_1, s_1' \in A_1 \times S_1}$} p^{a_1} p^{a_1, u_2, s_1'} \mbox{$\min_{\overline{\nu}, \overline{d}}$}  \big(\mbox{$\sum\nolimits_{k \in I_{s'_{\scale{.75}{1}}}}$}\nu_k y_k +  \mbox{$\frac{1}{2} $} (U - L)  \mbox{$\sum\nolimits_{ s_E \in S_E^{+} }$} d_{s_E} \big) \nonumber 
    \end{align}
    where 
    \[ \overline{\nu} =(\nu_k^{a_1, s_1'})_{ (a_1, s_1') \in A_1 \times S_1, k \in I_{s_{\scale{.75}{1}}'}} \;\; \mbox{and} \; \; \overline{c} = ( d_{s_{\scale{.75}{E}}'}^{a_1, s_1'})_{(a_1, s_1') \in A_1 \times S_1 , s_{\scale{.75}{E}}' \in S_{\scale{.75}{E}}^{a_{\scale{.75}{1}}, s_{\scale{.75}{1}}'}}
    \]
    are real-valued vectors of variables subject to the following linear constraints:
    \begin{align}
        d_{s_E'}^{a_1, s_1'} \ge & |P(s_E'; b_1^{s_1,a_1, u_2, s_1'}) - \mbox{$\sum\nolimits_{k \in I_{s'_{\scale{.75}{1}}}}$}  \nu_k^{a_1, s_1'} P(s_E'; b_1^k) | \nonumber \\
        \nu_k^{a_1, s_1'} \ge & 0 \textup{ for } k \in I_{s'_{\scale{.75}{1}}} \nonumber \\
        \mbox{$\sum\nolimits_{k \in I_{s'_{\scale{.75}{1}}}}$} \nu_k^{a_1, s_1'} = & 1 \label{eq:c-lambda-a1-s1}
    \end{align}
and $S_E^{a_1, s_1'}  = \{ s_E' \in S_E \mid \mbox{$\sum_{a_2 \in A_2}$} b_1^{s_1,a_1, a_2, s_1'}(s_E') + \mbox{$\sum_{k \in I_{s'_{\scale{.75}{1}}}}$} b_1^k(s_E') > 0 \}$. Letting
    \[
    C^{a_1, s_1'}  = \mbox{$\frac{1}{2} $} (U - L)  \mbox{$\sum\nolimits_{ s_{\scale{.75}{E}}' \in S_{\scale{.75}{E}}^{a_{\scale{.75}{1}},s_{\scale{.75}{1}}'} }$} d_{s_{\scale{.75}{E}}'}^{a_1,s_1'} 
    \]
   it follows that $J(u_1, u_2)$ equals:
    \begin{align}
       & \mbox{$\min_{\overline{\nu}, \overline{c}}$} \big( E_1 + \beta \mbox{$\sum_{(a_1, s_1') \in A_1 \times S_1}$} p^{a_1} p^{a_1, u_2, s_1'} \big(\mbox{$\sum\nolimits_{k \in I_{s'_{\scale{.75}{1}}}}$}\nu_k^{a_1, s_1'} y_k +  C^{a_1, s_1'} \big) \big) \label{eq:J-function-2-upper}    
   \end{align}
    %
    Now, given any $u_2 \in \mathbb{P}(A_2 \mid S)$, let $\Lambda$ be the feasible set for $(\overline{\nu}, \overline{c})$, which is convex using \eqref{eq:c-lambda-a1-s1}. We then define a game with strategy spaces $\Lambda$ and $\mathbb{P}(A_1)$  and payoff function $J_{u_2}:\Lambda \times\mathbb{P}(A_1) \to \mathbb{R}$ which is the objective of \eqref{eq:J-function-2-upper}, i.e., for $(\overline{\nu}, \overline{c}) \in \Lambda$ and $u_1 \in \mathbb{P}(A_1)$:
    \begin{align} \label{eq:J-u2-upper}
    J_{u_2}((\overline{\nu}, \overline{c}), u_1) = E_1 + \beta \mbox{$\sum_{a_1, s_1'}$} p^{a_1} p^{a_1,  u_2, s_1'} \big(\mbox{$\sum\nolimits_{k \in I_{s'_{\scale{.75}{1}}}}$}\nu_k^{a_1, s_1'} y_k +  C^{a_1,s_1'} \big) \, .
    \end{align}
    Combining \eqref{eq:maxmin-minmax-equivalence-upper}, \eqref{eq:J-function-2-upper} and \eqref{eq:J-u2-upper} we have:
    \begin{align}
        [T V_{\mathit{ub}}^{\Upsilon}](s_1, b_1) &  = \mbox{$ \min_{u_2 \in \mathbb{P}(A_2 \mid S)} $} \mbox{$\max_{u_1 \in \mathbb{P}(A_1)}$} J(u_1, u_2) \nonumber \\
        & = \mbox{$\min_{u_2 \in \mathbb{P}(A_2 \mid S)}$}  \mbox{$ \max_{u_1 \in \mathbb{P}(A_1)}$} \mbox{$\min_{(\overline{\nu}, \overline{c}) \in \Lambda}$} J_{u_2}((\overline{\nu}, \overline{c}), u_1) \,. \label{eq:max-min-max2}
    \end{align}
    We next show that the von Neumann's Minimax Theorem \cite{JvN:28} applies to the game with payoff function $J_{u_2}$ and strategy spaces $\Lambda$ and $\mathbb{P}(A_1)$. This theorem requires that:
    \begin{itemize}
    \item
    $\Lambda$ and $\mathbb{P}(A_1)$ are compact convex sets;
    \item
    $J_{u_2}$ is a continuous function that is concave-convex, i.e., $J_{u_2}((\overline{\nu}, \overline{c}), \cdot)$ is concave for fixed $(\overline{\nu}, \overline{c})$ and $J_{u_2}(\cdot, u_1)$ is convex for fixed $u_1$. 
    \end{itemize}
    Clearly $\Lambda$ and $\mathbb{P}(A_1)$ are compact convex sets and by \eqref{eq:J-u2-upper}, $J_{u_2}$ is bilinear in $\overline{\nu}, \overline{c}$ and $u_1$, and thus concave in $\mathbb{P}(A_1)$ and convex in $\Lambda$. Hence we can apply von Neumann's Minimax Theorem, which gives us:
    \[
    \begin{array}{c}
    \max_{u_1 \in \mathbb{P}(A_1)} \min_{(\overline{\nu}, \overline{c}) \in \Lambda} J_{u_2}((\overline{\nu}, \overline{c}), u_1) = \min_{(\overline{\nu}, \overline{c}) \in \Lambda} \max_{u_1 \in \mathbb{P}(A_1)} J_{u_2}((\overline{\nu}, \overline{c}), u_1) \, .
    \end{array}
    \]
    Therefore, using this result and \eqref{eq:max-min-max2} we have that:
    \begin{align*}
    [T V_{\mathit{ub}}^{\Upsilon}](s_1, b_1) & = \mbox{$ \min_{u_2 \in \mathbb{P}(A_2 \mid S)} $} \mbox{$\min_{(\overline{\nu}, \overline{c}) \in \Lambda}$}  \mbox{$\max_{u_1 \in \mathbb{P}(A_1)}$} J_{u_2}((\overline{\nu}, \overline{c}), u_1) \\
    & = \mbox{$ \min_{u_2 \in \mathbb{P}(A_2 \mid S)} $} \mbox{$\min_{(\overline{\nu}, \overline{c}) \in \Lambda}$}  \mbox{$\max_{u_1 \in \mathbb{P}(A_1)}$} \big( E_1 + \\
    & \quad + \beta \mbox{$\sum_{a_1, s_1'}$} p^{a_1} p^{a_1, u_2, s_1'} \big(\mbox{$\sum\nolimits_{k \in I_{s'_{\scale{.75}{1}}}}$}\nu_k^{a_1, s_1'} y_k +  C^{a_1,s_1'} \big)  \big) & \mbox{by \eqref{eq:J-u2-upper}} \\
    & = \mbox{$ \min_{u_2 \in \mathbb{P}(A_2 \mid S)} $} \mbox{$\min_{(\overline{\nu}, \overline{c}) \in \Lambda}$}  \mbox{$\max_{a_1 \in A_1}$} \big( E_1 + \\
    & \quad + \beta \mbox{$\sum_{s_1' \in S_1}$} p^{a_1, u_2, s_1'} \big(\mbox{$\sum\nolimits_{k \in I_{s'_{\scale{.75}{1}}}}$}\nu_k^{a_1, s_1'} y_k +  C^{a_1,s_1'} \big)  \big)
    \end{align*}
    where the final equality follows from the fact that, for fixed $u_2$ and $\overline{\nu}$ and $\overline{c}$, the objective is linear in $u_1$,
    from which $[TV_{\mathit{ub}}^{\Upsilon}](s_1, b_1)$
    can be formulated as the LP problem given by minimise $v$ subject to:
    \begin{align}
        v & \; \ge E_1 + \beta \mbox{$\sum_{s_1' \in S_1}$} p^{a_1,  u_2, s_1'} \big(\mbox{$\sum\nolimits_{k \in I_{s'_{\scale{.75}{1}}}}$}\nu_k^{a_1, s_1'} y_k  +  C^{a_1,s_1'} \big)  \big) \; \; \mbox{for all $a_1 \in A_1$.} \label{eq:LP-backup-upper}
    \end{align}
    Letting $\lambda_k^{a_1, s_1'} = p^{a_1,  u_2, s_1'} \nu_k^{a_1, s_1'}$ and $c_{s_E'}^{a_1, s_1'} = p^{a_1,  u_2, s_1'} d_{s_E'}^{a_1, s_1'} $, we can reformulate \eqref{eq:LP-backup-upper} as   minimise $v$ subject to:
    \begin{align*}
        v & \; \ge  \mbox{$\sum_{i=1}^{n_b}\sum_{a_2}$}  \kappa_i p_{a_2}^{s_1, s_E^i} r((s_1, s_E^i),(a_1, a_2)) +  \beta \mbox{$\sum_{s_1' \in S_1}$} v_{a_1, s_1'} \quad   \\
        v_{a_1, s_1'} & \; = \mbox{$\sum_{k \in I_{s'_{\scale{.75}{1}}}}$} \lambda_k^{a_1, s_1'} y_k  + \mbox{$\frac{1}{2}$} (U - L) \mbox{$\sum\nolimits_{s_E' \in S_E^{a_1, s_1'} }$} \hat{c}_{s_E'}^{a_1, s_1'} \quad  .
    \end{align*}
    for all $a_1 \in A_1$ and $s_1' \in S_1$, where  $u_2(a_2 |s_1,  s_E^i) = p_{a_2}^{s_1, s_E^i}$.
     We next compute the constraints for $\lambda_k^{a_1, s_1'}$ and $\hat{c}_{s_E'}^{a_1, s_1'}$. According to the belief update \eqref{eq:belief-update-u2}:
    \begin{align*}
        & \lefteqn{ p^{a_1,  u_2, s_1'} b_1^{s_1,a_1, u_2, s_1'}(s_E') = P(s_1' \mid (s_1, b_1), a_1, u_2 ) \frac{P( s_1', s_E' \mid (s_1, b_1), a_1, u_2)}{P( s_1' \mid (s_1, b_1), a_1, u_2 )}} \\
        & = P( s_1', s_E' \mid (s_1, b_1), a_1, u_2)  & \mbox{rearranging} \\
        & = \mbox{$\sum_{i=1}^{n_b}\sum_{a_2}$}   \kappa_i p_{a_2}^{s_1, s_E^i} \delta((s_1, s_E^i), (a_1,a_2))(s_1', s_E') \,
    \end{align*}
    where the final equality follows from the definition of a particle-based belief.
    Since $\nu_k^{a_1, s_1'}$ and $d_{s_E'}^{a_1, s_1'}$ are subject to the linear constraints \eqref{eq:c-lambda-a1-s1}, it follows that:
    \begin{align}
        c_{s_E'}^{a_1, s_1'} &\;  \ge \Big| \mbox{$\sum_{i=1}^{n_b}\sum_{a_2}$}   \kappa_i p_{a_2}^{s_1, s_E^i} \delta((s_1, s_E^i), (a_1,a_2))(s_1', s_E') \nonumber \\ & \qquad - \mbox{$\sum\nolimits_{k \in I_{s'_{\scale{.75}{1}}}}$}  \lambda_k^{a_1, s_1'} P(s_E' ; b_1^k) \Big| \nonumber \\
       \mbox{$\sum\nolimits_{k \in I_{s'_{\scale{.75}{1}}}}$} \lambda_k^{a_1, s_1'}  &\; = \mbox{$\sum_{i=1}^{n_b}\sum_{a_2, s_E'}$}   \kappa_i p_{a_2}^{s_1, s_E^i} \delta((s_1, s_E^i), (a_1,a_2))(s_1', s_E') \nonumber  \\
        \lambda_k^{a_1, s_1'}  &\; \ge 0  \label{eq:c-hat-lambda-hat-a1-s1}
    \end{align}
    for all $(a_1,s_1') \in A_1 \times S_1$, $1 \leq i \leq n_b$ and $s_E' \in S_E$, $k \in I_{s'_{\scale{.75}{1}}}$.
    Thus, the optimization problem can be reformulated as the LP problem in \eqref{eq:LP-minimax-upper}. \qed
\end{proof}




\section{Further Case Study Details and Statistics}\label{sec:appendix-examples}

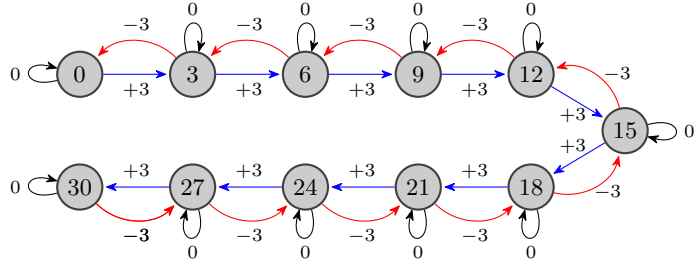
\begin{figure}[t]
    \centering
    \begin{tikzpicture}
   [node distance=1.5cm,on grid,>={Stealth[round]},bend angle=45,auto,
   every place/.style= {minimum size=6mm,thick,draw=black!75,fill=black!20},
   every transition/.style={thick,draw=black!75,fill=black!20},
   red place/.style= {place,draw=red!75,fill=red!20},
   every label/.style= {red}]

  \node[place]      (speed_0)                            {$0$};
  \node[place]      (speed_3)        [right=of speed_0]  {$3$};
  \node[place]      (speed_6)       [right=of speed_3]  {$6$};
  \node[place]      (speed_9)       [right=of speed_6]  {$9$};
  \node[place]      (speed_12)       [right=of speed_9]  {$12$};
  \node[place]      (speed_18)       [below =of speed_12]  {$18$};
  \node[place]      (speed_15)       at ($(speed_12)!0.5!(speed_18)$) [xshift=1.25cm] {$15$};
  \node[place]      (speed_21)       [left=of speed_18]  {$21$};
  \node[place]      (speed_24)       [left=of speed_21]  {$24$};
  \node[place]      (speed_27)       [left=of speed_24] {$27$};
  \node[place]      (speed_30)       [left=of speed_27]  {$30$};
  \draw [->,loop left, black]             (speed_0) to node[black] {\scriptsize$0$} (speed_0);
  \draw [->,post, blue]                 (speed_0) to node[below,black] {\scriptsize${+}3$} (speed_3);
  \draw [->,pre,bend left, red]         (speed_0) to node[above,black] {\scriptsize${-}3$} (speed_3);
  \draw [->,loop above, black]           (speed_3) to node[black,above] {\scriptsize$0$} (speed_3);
  \draw [->,post, blue]                 (speed_3) to node[below,black] {\scriptsize${+}3$} (speed_6);
  \draw [->,pre,bend left, red]         (speed_3) to node[above,black] {\scriptsize${-}3$} (speed_6);
  \draw [->,loop above]                 (speed_6) to node[black] {\scriptsize$0$} (speed_6.south);
  \draw [->,pre,bend left, red]         (speed_6) to node[above,black] {\scriptsize${-}3$} (speed_9);
  \draw [->,loop above, black]           (speed_9) to node[black,above] {\scriptsize$0$} (speed_9);
  \draw [->,post, blue]                 (speed_6) to node[below,black] {\scriptsize${+}3$} (speed_9);
  \draw [->,pre,bend left, red]         (speed_9) to node[above,black] {\scriptsize${-}3$} (speed_12);
  \draw [->,loop above, black]           (speed_12) to node[black,above] {\scriptsize$0$} (speed_12);
  \draw [->,post, blue]                 (speed_9) to node[below,black] {\scriptsize${+}3$} (speed_12);  
  \draw [->,pre,bend left, red]         (speed_12) to node[right,black] {\scriptsize${-}3$} (speed_15);
  \draw [->,post, blue,pos=0.4]                 (speed_12) to node[below,black] {\scriptsize${+}3$} (speed_15); 
  \draw [->,post, blue]                 (speed_15) to node[above,black,pos=0.6] {\scriptsize${+}3$} (speed_18);
  \draw [->,loop right]                 (speed_15) to node[black] {\scriptsize$0$} (speed_15);
  \draw [->,pre,bend left, red]         (speed_15) to node[below,black,pos=0.3] {\scriptsize${-}3$} (speed_18);
  \draw [->,post, blue]                 (speed_18) to node[above,black] {\scriptsize${+}3$} (speed_21);
  \draw [->,loop below]                 (speed_18) to node[black] {\scriptsize$0$} (speed_18);
  \draw [->,pre,bend left, red]         (speed_18) to node[below,black] {\scriptsize${-}3$} (speed_21);
  \draw [->,post, blue]                 (speed_21) to node[above,black] {\scriptsize${+}3$} (speed_24);
  \draw [->,loop below]                 (speed_21) to node[black] {\scriptsize$0$} (speed_21);
  \draw [->,pre,bend left, red]         (speed_21) to node[below,black] {\scriptsize${-}3$} (speed_24);
  \draw [->,post, blue]                 (speed_24) to node[above,black] {\scriptsize${+}3$} (speed_27);
  \draw [->,loop below]                 (speed_24) to node[black] {\scriptsize$0$} (speed_24);
  \draw [->,pre,bend left, red]         (speed_24) to node[below,black] {\scriptsize${-}3$} (speed_27);
  \draw [->,post, blue]                 (speed_27) to node[above,black] {\scriptsize${+}3$} (speed_30);
  \draw [->,loop below]                 (speed_27) to node[black] {\scriptsize$0$} (speed_27);
  \draw [->,pre,bend left, red]         (speed_27) to node[below,black] {\scriptsize${-}3$} (speed_30);
  \draw [->,pre,bend left, red]         (speed_27) to node[below,black] {\scriptsize${-}3$} (speed_30);
  \draw [->,loop left, black]            (speed_30) to node[black] {\scriptsize$0$} (speed_30);
\end{tikzpicture}
    \caption{Pedestrian-vehicle interaction: local transition diagram over the local states, i.e., vehicle speeds ($\textup{m/s}$), with actions corresponding to the possible accelerations of the vehicle, i.e., $+3$, $0$ and $-3$ ($\textup{m/s}^2$).}\label{fig:local_transition}
    \vspace*{0.4cm}
\end{figure}

\noindent
Finally, we give some additional details and statistics for the models
developed for the two case studies used for evaluation in \sectref{sec:experiments}.

\startpara{Pedestrian-vehicle interaction}
The one-sided NS-POSG for the pedestrian-vehicle scenario is defined as follows:
    \begin{itemize}
        \item $S_1 = \Loc_1 {\times} \Per_1$, where:
        \begin{align*}
        \Loc_1 & = \; \{30, 27, 24, 21, 18, 15, 12, 9, 6, 3, 0 \} \\
        \Per_1 & = \; \{``\emph{unlikely to cross}", ``\emph{likely to cross}",  ``\emph{very likely to cross}" \}
        \end{align*}
        are the vehicle's discrete speeds (km/h) and perceived pedestrian intentions, respectively.

        \item $S_E = \{ ((x_1, y_1),(x_2, y_2)) \in (\mathbb{R}^2)^2 \mid (0 \leq x_1, x_2 \leq 20) \wedge (0 \leq y_1, y_2 \leq 10) \}$, where $(x_1, y_1)$ and $(x_2, y_2)$ are the top-left coordinates of the 2D fixed-size bounding boxes of size $0.5{\times} 1.5 \; (\textup{m}^2)$
        around the pedestrian at the previous and current steps, respectively.
        

        \item $A = A_1 {\times} A_2$, where $A_1 = \{ -3, 0, 3 \}$ ($\textup{m/s}^2$) are the possible accelerations of the vehicle, and $A_2 = \{\mathit{cross}, \mathit{back}\} $ are the possible directions the pedestrian can choose to move. 

        \item For each local state $v_1 \in \Loc_1$ and environment state $((x_1, y_1),(x_2, y_2)) \in S_E$, we let $\obs_1(v_1, ((x_1, y_1),(x_2, y_2)))= f^{\max}_\mathit{ped} ((x_1, y_1),(x_2, y_2))$, where $f_\mathit{ped} : S_E \to \mathbb{P}(\Per_1)$ is a data-driven pedestrian intention classifier implemented via a feed-forward NN with ReLU activation functions and trained over the PIE dataset in \cite{AR-IK-TK-JKT:19}. Note here $\obs_1$ is independent of the local state of $\agent_1$.
        \item For $(v_1, \per_1) \in \Loc_1 {\times} \Per_1$, $v_1' \in \Loc_1$ and $(a_1, a_2) \in A$ we have:
         \[
         \delta_1((v_1, \per_1),(a_1, a_2)) (v_1') = \left\{ \begin{array}{cl}
         1 & \mbox{if $v_1' = g_{\mathit{next}}(v_1, a_1)$} \\
         0 & \mbox{otherwise}
         \end{array}  \right.
         \]
         where $g_{\mathit{next}} : \Loc_1 \times A_1 \to \Loc_1$ is the speed update function of the vehicle with the transition diagram in Fig.~\ref{fig:local_transition}.
         

         \item For $v_1 \in \Loc_1$, $((x_1, y_1), (x_2, y_2)), ((x_1', y_1'),(x_2', y_2')) \in S_E$ and $(a_1, a_2) \in A$ we have
         $\delta_E(v_1,((x_1, y_1),(x_2, y_2)),(a_1, a_2)) ((x_1', y_1'), (x_2', y_2')) = 1$ where
         \begin{align*}
             x_1' & = x_2, & y_1' & = y_2, \\
             x_2' & = x_2 + \mathit{move}(a_2)v_2 \Delta t,&  y_2' & = y_2 - v_1 \Delta t - \frac{a_1}{2} {\Delta t}^2,
         \end{align*}
         $v_2=4.5$ ($\textup{m/s}$) is the speed of the pedestrian, $\mathit{move}(a_2)$ is the direction of the movement of the pedestrian action, i.e., $\mathit{move}(\mathit{cross}) {=} -1$ and $\mathit{move}(\mathit{back}) {=} 1$, and $\Delta t = |g_{\mathit{next}}(v_1, a_1)-v_1|/|a_1| $ if $a_1 \neq 0$ and $0.3$ ($\textup{s}$) otherwise.
    \end{itemize}

\noindent
     A crash occurs if the environment state is in the set:
      \[
     \mathcal{R}_{\mathit{crash}} = \{ ((x_1, y_1), (x_2, y_2)) \in S_E \mid (0 \leq x_2 \leq 0.5) \wedge (0 \leq y_2 \leq 2.5) \}
     \]
     i.e., the current bounding box around the pedestrian has a distance of no more than $0.5$ and $1.0$ ($\textup{m}$) along the $x$ and $y$ coordinates
     to the vehicle, respectively (recall the bounding box has size $0.5{\times} 1.5 \; (\textup{m}^2$)).
     The reward structure is such that, for any $(s_1,s_E) \in S$ and $a \in A$, $r((s_1, s_E), a) = - 200$ if $s_E \in \mathcal{R}_{\mathit{crash}}$ and $0$ otherwise. To model a scenario where the pedestrian and the vehicle require different update frequencies, we can make the following adjustments: $1)$ select the time step of the model as that corresponding to the larger frequency; $2)$ augment the state with the current time step and the last taken action of the agent with smaller frequency; $3)$ build the environment transition function such that the last taken action of the agent with smaller frequency is used for the update within the lower frequency, which is tracked by the current time step.

\startpara{Pursuit-evasion game}
We modify the example presented in \cite{KH-BB-VK-CK:23} by
considering a continuous environment $\mathcal{R} \triangleq \{ (x, y) \in \mathbb{R}^2 \mid 0 \leq x, y \leq 3\}$ that is partitioned into multiple cells by perception functions. In this game, we have a pursuer agent $\agent_p$
that tries to catch an evader agent $\agent_e$. In each step, the evader moves by picking from the set of actions $A_e \triangleq \{\textit{up}, \textit{down}, \textit{left}, \textit{right}\}$. The pursuer moves in a similar manner with additional diagonal movements, and thus has the action set $A_p \triangleq \{\textit{up}, \textit{down}, \textit{left}, \textit{right}, \textit{upleft}, \textit{upright}, \textit{downleft}, \textit{downright}\}$.

The evader has full observation and knows the exact location of both players. The pursuer has partial observation, that is, it knows which cell it is in, but does not know its exact location and does not know which cell the evader is in. 
The perception function of the pursuer employs an NN classifier $f_\mathcal{R} : \mathcal{R} \rightarrow \mathbb{P}(\mathit{Grid})$, where $\mathit{Grid} \triangleq \{(i,j)\mid  1 \leq i,j \leq 3 \}$, which takes the location (coordinates) of the pursuer as input and outputs a probability distribution over nine abstract grid points (cells), thus partitioning the environment as illustrated by \figref{fig:pursuit_evasion_fcps}. This is modelled as a one-sided NS-POSG as follows.

\begin{figure}[t]
    \centering    
    \includegraphics[width=0.3\textwidth]{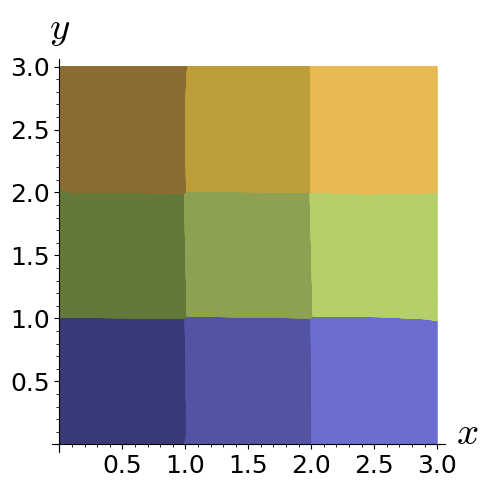}
    \hspace{1.50cm}
    \includegraphics[width=0.3\textwidth]{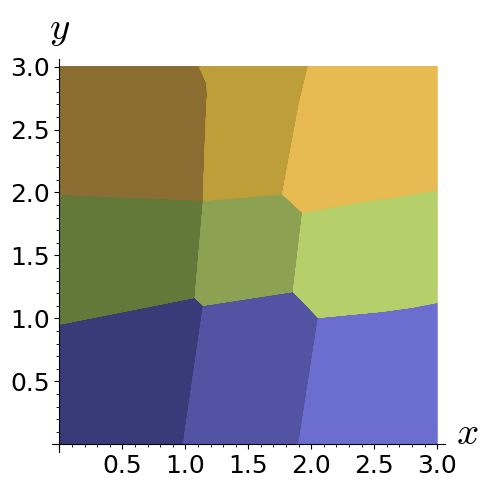}
    \caption{Representation of a regular (left) and coarse (right) perception function for the pursuit-evasion example. Each graph depicts the boundaries of the nine abstract grid cells learnt by the classifiers. The pre-images of the regular and coarse perception functions are composed of 48 and 50 polytopes, respectively.}
    \label{fig:pursuit_evasion_fcps}
\end{figure}

\begin{itemize}
    \item $S_1 = \Loc_1 {\times} Per_1$, where the local state $Loc_1 = \{\bot\}$ is a dummy state and $\Per_1 = \mathit{Grid}$.
    \item $S_E =  \{ ((x_{p}, y_{p}), (x_{e}, y_{e})) \in \mathcal{R}^2 \}$.
    \item $A = A_1 {\times} A_2$, where $A_1 = A_p$ and $A_2 = A_e$.
    \item $\obs_i(\bot ,(x_p,y_p)) = f_\mathcal{R}^{\max}(x_p, y_p)$ and the NN classifier $f_{\mathcal{R}}: \mathcal{R} \rightarrow \mathbb{P}(\mathit{Grid})$ is implemented via a feed-forward NN with one hidden ReLU layer and 14 neurons.
    \item For environment states $s_E = ((x_{p}, y_{p}), (x_{e}, y_{e}))$, $s_E' = ((x_{p}', y_{p}'), (x_{e}', y_{e}'))$, local state $\loc_1$ and joint action $a=(a_1,a_2)$:
    \[
    \delta_E(\bot, s_E, a ) (s_E') = \mbox{$\prod_{i \in \{ p, e \} }$}\delta_{E_i}((x_i, y_i), a_i )(x_i', y_i')
    \]
    where for $i \in \{ p, e \}$: 
     \[
     \delta_{E_i}((x_i, y_i), a_i)(x'_i, y'_i) = \\
    \left\{ \begin{array}{cl}
    1 & \mbox{if $x_i' = x_i +  d_{a_i}^x$ and $y'_i = y_i + d_{a_i}^y$} \\
    0 & \mbox{otherwise}
    \end{array}     \right.  
    \]
    and the pairs $(d_{a_i}^x, d_{a_i}^y)$ indicate the direction of movement under action $a_i$, e.g., $(d_{\mathit{up}}^x,d_{\mathit{up}}^y)=(0,1)$ and $(d_{\mathit{left}}^x,d_{\mathit{left}}^y)=(-1,0)$.
\end{itemize}
The capture condition in \cite{KH-BB-VK-CK:23} is also used, that is, the evader is captured if it is in the same cell as the pursuer, so the set of capture states $\mathcal{R}_\mathit{capture}$ is given by:
\[
\{ ((x_{p}, y_{p}), (x_{e}, y_{e})) \in S_E \mid \, \exists (i, j) \in \mathit{Grid}, \, (i{-}1 {\leq} x_{p}, x_e {<} i) \wedge (j{-}1 {\leq} y_{p}, y_e {<} j ) \} \,.  
\]
Unlike \cite{KH-BB-VK-CK:23}, the game does not end once the evader is captured, yielding the possibility of multiple captures. In case the pursuer is successful, that is, it enters the same cell as the evader, it receives a reward of 100. This can be modelled by assigning that value to any state-action pair with the state in $\mathcal{R}_\mathit{capture}$.

A model where the pursuer agent $\agent_p$ has two pursuers under its control was also developed. For that model, the actions available to the pursuer agent are pairs corresponding to chosen directions of the pursuers, where they can now only move horizontally or vertically, 
i.e., the actions available to the pursuer agent are given by $A_{p} \triangleq (\{\textit{up}, \textit{down}, \textit{left}, \textit{right}\})^2$. Additionally, the perception function of the pursuer agent is modified to take the coordinates of both pursuers and output the perceived grid cell for each of them. In this scenario, a capture happens if the evader is in the same grid cell as either pursuer.

\startpara{Impact of perception} In \figref{fig:pursuit_evasion_strat_both}(b), when the pursuer starts from cell E (marked by the red dot), its perception function incorrectly observes that it is in cell G (denoted by the blue border). In the next step, when it is in fact in cell B, the perception function shows it to be in cell A. As the pursuer's actions are tied to where it perceives itself to be
(e.g., in cell A it cannot use left or down actions to avoid moving out of bounds), 
the number of explored regions is reduced. Hence, the evader can safely hide in different portions of the environment. In particular, we see that it moves from its initial position to cells A or I and stays there, as those cells would not be visited by the pursuer. The pursuer does not actually capture the evader even if its perceived location cell almost completely coincides with a green area that the evader is highly likely to stay in.

\begin{table}[t]
\scriptsize
\setlength{\tabcolsep}{3.5pt}  
\centering
{
\begin{tabular}{|c||c|c||c|c|c|c|c|c|c|c|} \hline
\multirow{2}{*}{Model} & Initial & \multirow{2}{*}{$\beta$} & \multirow{2}{*}{$|\Gamma|$} & \multicolumn{2}{|c|}{Lower bound} & \multirow{2}{*}{$|\Upsilon|$} & \multicolumn{2}{|c|}{Upper bound} &
\multirow{2}{*}{\shortstack[c]{Iter.}} & Time \\ \cline{5-6} \cline{8-9}
& pts. & & & init. & final & & init. & final & & (min) \\ \hline \hline
\multirow{6}{*}{\shortstack[c]{Pursuit-evasion \\ (one pursuer)}} 
& 1 & 0.7 & 184 & 0 & 5.0653 & 265 & 333.33 & 9.1819 & 169 & 15 \\ \cline{2-11}
& 1 & 0.7 & 515 & 0 & 5.2798 & 788 & 333.33 & 6.6317 & 264 & 120 \\ \cline{2-11}
& 2 & 0.7 & 413 & 0 & 4.5299 & 998  & 333.33 & 11.570 & 299 & 120 \\ \cline{2-11}
& 1 & 0.8 & 468 & 0 & 9.8827 & 731 & 500.00 & 16.289 & 170 & 120 \\ \cline{2-11}
& 1 & 0.9 & 331 & 0 & 22.387 & 731 & 1000.0 & 58.906 & 130 & 120 \\  \cline{2-11}
& {\bf 1} & {\bf 0.99} & {\bf 5} & {\bf 0} & {\bf 34.973} & {\bf 128} & {\bf 10000} & {\bf 35.972} & {\bf 44} & {\bf 3} \\
\hline\hline
\multirow{2}{*}{\shortstack[c]{Pursuit-evasion \\ (two pursuers)}} 
 & \multirow{2}{*}{1} & \multirow{2}{*}{0.7} & \multirow{2}{*}{509} & \multirow{2}{*}{0} & \multirow{2}{*}{14.134} & \multirow{2}{*}{790} & \multirow{2}{*}{333.33} & \multirow{2}{*}{39.943} & \multirow{2}{*}{274} & \multirow{2}{*}{120} \\
 & & & & & & & & & & \\
\hline\hline
\multirow{3}{*}{\shortstack[c]{Pedestrian-vehicle}} 
& 1 & 0.7 & 1,928 & 0 & 620.54 & 4936 & 666.67 & 666.67 & 297 & 120 \\ \cline{2-11}
& 2 & 0.7 & 2,783 & 0 & 526.34 & 8532 & 666.67 & 666.67 & 363 & 120 \\ \cline{2-11}
& 1 & 0.8 & 2,089 & 0 & 805.92 & 5708 & 1000.0 & 1000.0 & 330 & 120 \\ 
\hline
\end{tabular}}
\vspace*{0.2cm} 
\caption{Statistics for a set of one-sided NS-POSG solution instances. The bold entries for the pursuit-evasion model correspond to that with the coarser perception function (see Fig.~\ref{fig:pursuit_evasion_fcps}).}
\vspace*{-0.2cm} 
\label{tab:stats}
\end{table}

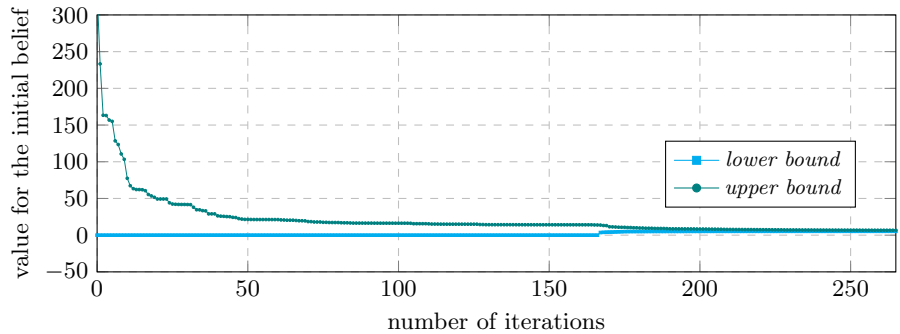
\begin{figure}[t]
    \centering
\small{
\begin{tikzpicture}
\begin{axis}[
    title style={yshift=-2ex},
    ylabel={value for the initial belief},
    xlabel={number of iterations},
    xmin=0, xmax=265,
    xtick={0, 50, 100, 150, 200, 250},
    ymin=-50, ymax=300,
    ytick={-50, 0, 50, 100, 150, 200, 250, 300},
    ymajorgrids=true,
    grid style=dashed,
    grid=both,
    height=5.0cm,
    width=1.0\textwidth,
    legend entries={
                {\emph{lower bound}},
                {\emph{upper bound}}
                },
    legend style={at={(0.95,0.25)},
                anchor=south east, 
                nodes={scale=0.9, transform shape}}            
]
\addlegendimage{mark=square*,cyan,mark size=1.5pt}
\addlegendimage{mark=*,teal,mark size=1.5pt}
]

\addplot[mark=square*,cyan,opacity=1.0,mark size=0.5pt, very thin] table [x=k, y=lb, col sep=comma]{figures/pursuit_evasion/3x3_0_7_single_pursuer.txt};
\addplot[mark=*,teal,opacity=1.0,mark size=0.5pt, very thin] table [x=k, y=ub, col sep=comma]{figures/pursuit_evasion/3x3_0_7_single_pursuer.txt};

\end{axis}
\end{tikzpicture}
}
    \caption{Lower and upper bound values for a pursuit-evasion game with one pursuer when $\beta=0.7$.}
    \label{fig:pursuit_evasion_lb_ub}
\end{figure} 

\startpara{Statistics}
\tabref{tab:stats} shows statistics for solving various instances,
varying the 
number of points in the initial belief and discount factor $\beta$.
The table presents
the initial and final values of the upper and lower bounds,
the number of $\alpha$-functions generated for the lower bound computation ($|\Gamma|$),
the number of belief points for the upper bound computation ($|\Upsilon|$), and the number of iterations and the time required. In the experiments, we have set a timeout of 120 minutes (except for the first row where the timeout was reduced to 15 minutes). In addition, \figref{fig:pursuit_evasion_lb_ub} shows how the lower and upper bound values change for the initial belief as the number of iterations increases for one instance of the pursuit-evasion game.

Since our algorithm is \emph{anytime}, lower and upper bounds hold throughout computations and we successfully generate meaningful strategies
(discussed further below) on a range of models.
However, computation is generally slow due to the number of LP problems to solve 
(whose size increases with $|\Gamma|$ and $|\Upsilon|$), as well as expensive operations over polyhedra and the probabilistic branching of mixed strategies to guide exploration. 

Both \tabref{tab:stats} and \figref{fig:pursuit_evasion_lb_ub} illustrate the impact of these factors. In the first two rows of \tabref{tab:stats} we observe the difference between a 15 and 120 minute timeout for the same instance of pursuit-evasion game with a single pursuer. As can be seen, the increase in the timeout causes the lower bound to improve (increase) by 0.2145, while the upper bound improves (decreases) by 2.55. With a timeout of 15 minutes we see that 169 iterations are performed; however, due to the number of $\alpha$-functions growing from 184 to 515, increasing the timeout to 120 minutes only allows 95 more iterations to be performed.

Considering \figref{fig:pursuit_evasion_lb_ub}, we initially see a sharp decrease of the upper bound, but improvement to either bound becomes progressively harder as computation progresses. The entry for the pursuit-evasion game with a single pursuer with a coarser perception function in \tabref{tab:stats} (highlighted in bold) is the only instance that converges before the timeout due to the fact that the number of reachable regions is smaller.

\tabref{tab:stats} also shows that, as expected, larger discount factors lead to larger lower and upper bounds. For the pursuit-evasion model with two pursuers, the larger difference between the bound values reached at the timeout is a consequence of a higher branching factor during exploration slowing down the computation, as we have 64 joint actions in each state. For the entries related to the pedestrian-vehicle interaction model in the table, the final upper bound values match their initial values due to the fact that the initial beliefs were selected so that it should be possible to avoid a crash if an optimal strategy was played.

We note that HSVI for \emph{finite} one-sided POSGs, in~\cite{KH-BB-VK-CK:23},
is already computationally very expensive, even with  multiple optimisations
(\cite{KH-BB-VK-CK:23} uses a timeout of 10 hours, versus 2 hours here).
}

\end{document}